\theoremstyle{plain}
\newtheorem{theorem}{Theorem}
\newtheorem{claim}[theorem]{Claim}
\newtheorem{lemma}[theorem]{Lemma}
\newtheorem{proposition}[theorem]{Proposition}
\theoremstyle{definition}
\theoremstyle{remark}
\newtheorem{observation}[theorem]{Observation}
\newcommand{\paren}[1]{\left( #1 \right) }
\newcommand{\opt}{\mbox{\sf OPT}}
\newcommand{\set}[1]{\left\{ #1 \right\}}
\newcommand{\sset}{{\mathcal{S}}}
\newcommand{\xset}{{\mathcal{X}}}
\newcommand{\lset}{{\mathcal{L}}}
\newcommand{\tset}{{\mathcal T}}
\newcommand{\bset}{{\mathcal{B}}}
\newcommand{\eset}{{\mathcal{E}}}
\newcommand{\uset}{{\mathcal{U}}}
\newcommand{\eps}{\epsilon}
\newcommand{\expect}[2]{\text{\bf E}_{#1}\left [#2\right]}
\newcommand{\pr}[2]{\text{\bf Pr}_{#1}\left [#2\right ]}
\def\daniel#1{}
\def\parinya#1{}
\newenvironment{appendixproof}[2]{ \proof }{ \endproof }
\newenvironment{appendixextra}[2]{}{}
\begin{document}

\title{New Integrality Gap Results for the Firefighters Problem on Trees}

\author[1]{Parinya Chalermsook}
\author[1,2]{Daniel Vaz}
\affil[1]{Max-Planck-Institut für Informatik\\
  Saarbrücken, Germany}
\affil[2]{Graduate School of Computer Science, Saarland University\\ Saarbrücken, Germany
\texttt{\{parinya, ramosvaz\}@mpi-inf.mpg.de}}

\maketitle

\begin{abstract} 
In the firefighter problem on trees, we are given a tree $G=(V,E)$ together with a vertex $s \in V$ where the fire starts spreading. 
At each time step, the firefighters can pick one vertex while the fire spreads from burning vertices to all their neighbors that have not been picked.
The process stops when the fire can no longer spread. 
The objective is to find a strategy that maximizes the total number of vertices that do not burn.
This is a simple mathematical model, introduced in 1995, that abstracts the spreading nature of, for instance, fire, viruses, and ideas.  
The firefighter problem is NP-hard and admits a $(1-1/e)$ approximation based on rounding the canonical LP. Recently, a PTAS was announced\cite{2016arXiv160100271A}.~\footnote{The $(1-1/e)$ approximation remained the best until very recently when Adjiashvili et al.~\cite{2016arXiv160100271A} showed a PTAS. Their PTAS does not bound the LP gap.}  

The goal of this paper is to develop better understanding on the power of LP relaxations for the firefighter problem. 
We first show a matching lower bound of $(1-1/e+\epsilon)$ on the integrality gap of the canonical LP.
This result relies on a powerful {\em combinatorial gadget} that can be used to derive integrality gap results in other related settings.  
Next, we consider the canonical LP augmented with simple additional constraints (as suggested by Hartke).  
We provide several evidences that these constraints improve the integrality gap of the canonical LP: (i) Extreme points of the new LP are integral for some known tractable instances and (ii) A natural family of instances that are bad for the canonical LP admits an improved approximation algorithm via the new LP.  
We conclude by presenting a $5/6$ integrality gap instance for the new LP.  
\end{abstract} 

\section{Introduction}

Consider the following graph-theoretic model that abstracts the fire spreading process: We are given graph $G=(V,E)$ together with the source vertex $s$ where the fire starts. 
At each time step, we are allowed to pick some vertices in the graph to be saved, and the fire spreads from burning vertices to their neighbors that have not been saved so far.
The process terminates when the fire cannot spread any further.
This model was introduced in 1995~\cite{hartnell1995firefighter} and has been used extensively by researchers in several fields as an abstraction of epidemic propagation. 

There are two important variants of the firefighters problem. 
(i) In the maximization variant ({\sc Max-FF}), we are given graph $G$ and source $s$, and we are allowed to pick one vertex per time step.  
The objective is to maximize the number of vertices that do not burn.
And (ii) In the minimization variant ({\sc Min-FF}), we are given a graph $G$, a source $s$, and a terminal set $\xset \subseteq V(G)$, and we are allowed to pick $b$ vertices per time step. 
The goal is to save all terminals in $\xset$, while minimizing the budget $b$.  

In this paper, we focus on the {\sc Max-FF} problem. 
The problem is $n^{1-\epsilon}$ hard to approximate in general graphs~\cite{AnshelevichCHS12}, so there is no hope to obtain any reasonable approximation guarantee.
Past research, however, has focused on sparse graphs such as trees or grids. 
Much better approximation algorithms are known on trees: The problem is NP-hard~\cite{king2010firefighter} even on trees of degree at most three, but it admits a $(1-1/e)$ approximation algorithm. 
For more than a decade~\cite{AnshelevichCHS12,ChalermsookC10,CaiVY08,develin2007fire,IwaikawaKM11,king2010firefighter}, there was no progress on this approximability status of this problem, until a PTAS was recently discovered \cite{2016arXiv160100271A}.

Besides the motivation of studying epidemic propagation, the firefighter problem and its variants are interesting due to their connections to other classical optimization problems: 

\begin{itemize} 
\item (Set cover) 
The firefighter problem is a special case of the {\em maximum coverage problem with group budget constraint (MCG)}~\cite{ChekuriK04}: 
Given a collection of sets $\sset= \{S_1,\ldots, S_m\}: S_i \subseteq X$, together with {\em group constraints}, i.e. a partition of $\sset$ into groups $G_1,\ldots, G_{\ell}$, we are interested in choosing one set from each group in a way that maximizes the total number of elements covered, i.e. a feasible solution is a subset $\sset' \subseteq \sset$ where $|\sset' \cap G_j| \leq 1$ for all $j$, and $|\bigcup_{S_i \in \sset'} S_i|$ is maximized.
It is not hard to see that {\sc Max-FF} is a special case of MCG. 
We refer the readers to the discussion by Chekuri and Kumar~\cite{ChekuriK04} for more applications of MCG.

\item (Cut) In a standard minimum node-cut problem, we are given a graph $G$ together with a source-sink pair $s,t \in V(G)$. 
Our goal is to find a collection of nodes $V' \subseteq V(G)$ such that $G \setminus V'$ has $s$ and $t$ in distinct connected components.  
Anshelevich et al.~\cite{AnshelevichCHS12} discussed that the firefighters' solution can be seen as a ``cut-over-time'' in which the cut must be produced gradually over many timesteps. 
That is, in each time step $t$, the algorithm is allowed to choose vertex set $V'_t$ to remove from the graph $G$, and again the final goal is to ``disconnect'' $s$ from $t$~\footnote{The notion of disconnecting the vertices here is slightly non-standard. }.  
This cut-over-time problem is exactly equivalent to the minimization variant of the firefighter problem. 
We refer to~\cite{AnshelevichCHS12} for more details about this equivalence. 
 
\end{itemize}

\subsection{Our Contributions }
In this paper, we are interested in developing a better understanding of the {\sc Max-FF} problem from the perspective of LP relaxation. 
The canonical LP relaxation has been used to obtain the\daniel{} known $(1-1/e)$ approximation algorithm via straightforward independent LP rounding (each node is picked independently with the probability proportional to its LP-value). 
So far, it was not clear whether an improvement was possible via this LP, for instance, via sophisticated dependent rounding schemes~\footnote{Cai, Verbin, and Yang~\cite{CaiVY08} claimed an LP-respecting integrality gap of $(1-1/e)$, but many natural rounding algorithms in the context of this problem are not LP respecting, e.g. in~\cite{ChalermsookC10}}.  
Indeed, for the corresponding minimization variant, {\sc Min-FF}, Chalermsook and Chuzhoy designed a dependent rounding scheme for the canonical LP in order to obtain $O(\log^* n)$ approximation algorithm, improving upon an $O(\log n)$ approximation obtained via independent LP rounding.   
In this paper, we are interested in studying this potential improvement for {\sc Max-FF}.

Our first result refutes such possibility for {\sc Max-FF}.  
In particular, we show that the integrality gap of the standard LP relaxation can be arbitrarily close to $(1-1/e)$.

\begin{theorem}
\label{thm: main}  
For any $\epsilon >0$, there is an instance $(G,s)$ (whose size depends on $\epsilon$) such that the ratio between optimal integral solution and fractional one is at most $(1- 1/e+ \epsilon)$.  
\end{theorem}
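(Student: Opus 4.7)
The plan is to construct, for each $\epsilon > 0$, an explicit tree $(G_N, s)$ parametrised by an integer $N = N(\epsilon)$, whose integer-to-fractional optimum ratio approaches $1 - 1/e$. The approach is motivated by viewing the firefighter problem on trees as a special case of maximum coverage with group-budget constraints (MCG): each level plays the role of a group, each vertex $v$ plays the role of a set covering the leaves (or other ``terminal'' vertices) in its subtree, and the one-pick-per-time-step rule corresponds to choosing at most one set per group. Since MCG admits the classical $(1-1/e)$ integrality gap via a ``random'' construction in which each element lies in one uniformly chosen set per group, the task reduces to realising such an effectively random construction within the hierarchical constraints of a tree.

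Concretely, I would design $G_N$ with $N$ levels and, at each level, partition the vertices into $N$ ``classes''. The tree would be engineered so that, as $\ell$ varies over leaves, the class of $\ell$'s ancestor at each level behaves like an independent and uniform draw from $[N]$. Granting such class-independence, the LP bound is easy: assigning weight $1/N$ uniformly inside each class (distributed among the class's vertices) at every level gives each leaf a fractional coverage of $N \cdot (1/N) = 1$, so the LP value is essentially the number of leaves. The integer bound is the classical max-coverage calculation: any integer strategy picks one vertex, and thus one class, per level, and by the independence the fraction of leaves whose ancestor-class matches some pick is at most $1 - (1 - 1/N)^N$, which tends to $1 - 1/e$ as $N \to \infty$. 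Taking $N$ sufficiently large yields the bound $1 - 1/e + \epsilon$.

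The main obstacle is achieving this class-independence in spite of the tree hierarchy: a naive complete $N$-ary tree fails, because a leaf's level-$(i+1)$ ancestor-class is a deterministic refinement of its level-$i$ ancestor-class, which collapses the LP value down to the integer optimum and leaves no gap. Overcoming this seems to require a carefully branched, product-type construction in which every level-$i$ vertex has descendants in each of the $N$ classes at every lower level, so that from a single leaf's perspective the class sequence looks uniform over $[N]^N$. The key technical step is then to verify the integer upper bound in this specific tree, arguing that no adversarial choice of one vertex per level can exploit the residual tree correlations to cover more than a $(1-1/e+\epsilon)$-fraction of leaves; once this is established, the ratio of integer to fractional optimum is at most $1-1/e+\epsilon$, as required.
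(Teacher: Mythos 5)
Your high-level plan is well-aligned with the paper's intuition — the paper itself frames \textsc{Max-FF} as a special case of MCG and targets exactly the $1-(1-1/k)^k$ calculation — but the proposal is not yet a proof, because the step you yourself call the ``main obstacle'' and the ``key technical step'' is precisely what the paper must (and does) supply, and you leave it unresolved. You observe correctly that a complete $N$-ary tree fails because the ancestor-class at level $i+1$ is a deterministic refinement of the one at level $i$, and you gesture at a ``carefully branched, product-type construction'' that would decorrelate the levels. But such a tree is not obviously realisable: at each time step a strategy may pick a vertex from \emph{any} layer (not one per ``level'' in your sense), the number of layers per level must grow, and one has to control how much a strategy can cut across vertex-disjoint paths as it descends. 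Without an explicit construction and a proof that an adversary cannot exploit the residual correlations, the argument stops exactly where the real work begins.

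The paper fills this gap with the $(M,k,\delta)$-\emph{good gadget}: a forest of trees, each a composition of spiders, in which (i) each layer carries at most $k$ ``special'' vertices so that $x_v=1/k$ is LP-feasible, and (ii) for any choice of burning roots $\bset$ and any one-vertex-per-layer strategy $\uset$, at least a $(1-1/k-\delta)\frac{|\bset|}{M}$ fraction of the leaves remain uncut from $\bset$. The crucial quantitative point — which your proposal does not address — is that the gadget uses $D^{i-1}$ consecutive layers of $k$ feet each, followed by fan-outs of $D^{2M-i+1}$ leaves, so that the cut induced by $\uset$ on the relevant vertex-disjoint path families is negligibly small ($O(1/D)$) relative to the number of paths; Lemmas~\ref{lemma:phase1:prop} and \ref{lemma:phase2:prop} formalise this. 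Then the construction recurses: it identifies the leaves of phase $q$ with the roots of a $(M_q,k,\delta)$-good gadget for phase $q+1$, so that the burning fraction multiplies by $(1-1/k-\delta)$ at every phase, giving $(1-1/k)^k - k\delta$ burning after $k$ phases while the $1/k$-uniform fractional solution still saves every terminal. Finally, Lemma~\ref{lem:reduction} converts the terminal version back to the original objective. You would need to produce something equivalent to the gadget and its two cut lemmas for your proposal to go through; as written, the proposal identifies the right target and the right obstacle but does not overcome it.
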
 

Our techniques rely on a powerful {\em combinatorial gadget} that can be used to prove integrality gap results in some other settings studied in the literature.  
In particular, in the $b$-{\sc Max-FF} problem, the firefighters can pick up to $b$ vertices per time step, and the goal is to maximize the number of saved vertices. 
We provide an integrality gap of $(1-1/e)$ for the {\sc $b$-Max-FF} problem for all constant $b \in {\mathbb N}$, thus matching the algorithmic result of~\cite{CostaDDPR13}.
In the setting where an input tree has degree at most $d \in [4,\infty)$, we show an integrality gap result of $(1-1/e + O(1/\sqrt{d}))$.  
The best known algorithmic result in this setting was previously a $(1-1/e+ \Omega(1/d))$ approximation due to~\cite{IwaikawaKM11}.  \daniel{}

Motivated by the aforementioned negative results, we search for a stronger LP relaxation for the problem. 
We consider adding a set of valid linear inequalities, as suggested by Hartke~\cite{Hartke2006}. 
We show the following evidences that the new LP is a stronger relaxation than the canonical LP.  
\begin{itemize} 
\item Any extreme point of the new LP is integral for the tractable instances studied by Finbow et al.~\cite{FinbowG09}. 
In contrast, we argue that the canonical LP does not satisfy this integrality property of extreme points.  

\item A family of instances which captures the bad integrality gap instances given in Theorem~\ref{thm: main}, admits a better than $(1-1/e)$ approximation algorithm via the new LP. 

\item When the LP solution is near-integral, e.g. for half-integral solutions, the new LP is provably better than the old one. 
 
\end{itemize}  

Our results are the first rigorous evidences that Hartke's constraints lead to improvements upon the canonical LP. 
All the aforementioned algorithmic results exploit the new LP constraints in dependent LP rounding procedures.  
In particular, we propose a two-phase dependent rounding algorithm, which can be used in deriving the second and third results.  
We believe the new LP has an integrality gap strictly better than $(1-1/e)$, but we are unable to analyze it.  

Finally, we show a limitation of the new LP by presenting a family of instances, whose integrality gap can be arbitrarily close to $5/6$. 
This improves the known integrality gap ratio~\cite{Hartke2006}, and puts the integrality gap answer somewhere between $(1-1/e)$ and $5/6$. 
Closing this gap is, in our opinion, an interesting open question.  

{\bf Organization:} In Section~\ref{sec:prelim}, we formally define the problem and present the LP relaxation. 
In Section~\ref{sec:gap}, we present the bad integrality gap instances. 
We present the LP augmented with Hartke's constraints in Section~\ref{sec:newlp} and discuss the relevant evidences of its power in comparison to the canonical LP. Some proofs are omitted for space constraint, and are presented in Appendix.

{\bf Related results:}
King and MacGillivray showed that the firefighter problem on trees is solvable in polynomial time if the input tree has degree at most three, with the fire starting at a degree-2 vertex. 
From exponential time algorithm's perspective, Cai et al. showed $2^{O(\sqrt{n} \log n)}$ time, exact algorithm.  
The discrete mathematics community pays particularly high attention to the firefighter problem on grids~\cite{wang2002fire,develin2007fire}, and there has also been some work on infinite graphs~\cite{hartnell1995firefighter}.   
 
The problem also received a lot of attention from the parameterized complexity perspectives~\cite{ChlebikovaC14,BazganCCFFL14,CaiVY08} and on many special cases, e.g., when the tree has bounded pathwidth~\cite{ChlebikovaC14} and on bounded degree graphs~\cite{ChlebikovaC14,BazganCR13}.

{\bf Recent update:} Very recently, Adjiashvili et al.~\cite{2016arXiv160100271A} showed a polynomial time approximation scheme (PTAS) for the {\sc Max-FF} problem, therefore settling the approximability status.
Their results, however, do not bound the LP integrality gap.  
We believe that the integrality gap questions are interesting despite the known approximation guarantees.

\section{Preliminaries} 
\label{sec:prelim} 

A formal definition of the problem is as follows. 
We are given a graph $G$ and a source vertex $s$ where the fire starts spreading. 
A {\em strategy} is described by a collection of vertices $\uset= \{u_{t}\}_{t=1}^n$ where $u_t \in V(G)$ is the vertex picked by firefighters at time $t$.  
We say that a vertex $u \in V(G)$ is {\em saved} by the strategy $\uset$ if for all path $P=(s=v_0,\ldots, v_z =u)$ from $s$ to $u$, we have $v_i \in \{u_1,\ldots, u_{i}\}$ for some $i=1,\ldots, z$. 
A vertex $v$ not saved by $\uset$ is said to be a {\em burning vertex}.  
The objective of the problem is to compute $\uset$ so as to maximize the total number of saved vertices. 
Denote by $\opt(G,s)$ the number of vertices saved by an optimal solution.  

When $G$ is a tree, we think of $G$ as being partitioned into layers $L_1,\ldots, L_{\lambda}$ where $\lambda$ is the height of the tree, and $L_i$ contains vertices whose distance is exactly $i$ from $s$. 
Every strategy has the following structure.  
 
\begin{proposition}
Consider the firefighters problem's instance $(G,s)$ where $G$ is a tree.  
Let $\uset=\{u_1,\ldots, u_n\}$ be any strategy. 
Then there is another strategy $\uset'=\{u'_t\}$ where $u'_t$ belongs to layer $t$ in $G$, and $\uset'$ saves at least as many vertices as $\uset$ does.  
\end{proposition}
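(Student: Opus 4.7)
The plan is to construct $\uset'$ from $\uset$ pick by pick and then compare the two saved sets by direct set inclusion, relying only on the (easy) combinatorial characterization of ``saved'' on a tree rather than on any simulation of the fire dynamics.

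First I would make explicit the following characterization: on a tree, a vertex $w$ is saved by $\uset$ if and only if there exists a time $t$ such that $u_t$ lies on the unique $s$-to-$w$ path (possibly $u_t = w$) and $u_t$ sits on some layer $L_\ell$ with $t \leq \ell$. The reason is that the fire arrives at layer $\ell$ exactly at time $\ell$, so a pick blocks the fire on the $s$-to-$w$ path if and only if it is placed on some ancestor of $w$ on or before the step at which its layer is reached.

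I then define $u'_t$ as follows. If $u_t \in L_\ell$ with $\ell \geq t$, let $u'_t$ be the ancestor of $u_t$ on layer $L_t$ (so $u'_t = u_t$ when $\ell = t$); otherwise, let $u'_t$ be any vertex of $L_t$, a choice that will turn out to be immaterial. By construction $u'_t \in L_t$. The main step is $S(\uset) \subseteq S(\uset')$: given $w \in S(\uset)$, the characterization supplies an ancestor $u_t$ of $w$ on layer $L_\ell$ with $t \leq \ell$; the first case of the definition then applies, so $u'_t$ is an ancestor of $u_t$, and hence of $w$, on layer $L_t$; applying the characterization in the other direction certifies $w \in S(\uset')$. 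The ``otherwise'' picks of $\uset$, i.e.\ those with $\ell < t$, cannot witness any vertex of $S(\uset)$, so replacing them costs nothing.

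I do not expect a real obstacle here; the whole content of the argument is that on a tree the definition of $u'_t$ depends only on $u_t$ individually, so the set-containment goes through without any induction on $t$ or bookkeeping about interactions between picks. The one thing I would take care to state cleanly is the characterization in the first step, because the rest follows mechanically from it.
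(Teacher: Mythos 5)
Your proof is correct, and the characterization you isolate — on a tree, $w$ is saved iff some $u_t$ lies on the unique $s$-to-$w$ path in a layer $L_\ell$ with $t\leq\ell$ — is exactly the paper's definition of ``saved'' specialized to a unique path, so it requires no further justification. The paper states this proposition without proof; your argument (replace each useful pick $u_t\in L_\ell$, $\ell\geq t$, by its ancestor in $L_t$, which remains on the $s$-to-$w$ path for every $w$ the original pick witnessed, and discard the wasted picks with $\ell<t$) is the standard one and fills the gap cleanly.
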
  

We remark that this structural result only holds when an input graph $G$ is a tree. 

\vspace{0.1in}

{\bf LP Relaxation:} 
This paper focuses on the linear programming aspect of the problem.
For any vertex $v$, let $P_v$ denote the (unique) path from $s$ to $v$, and let $T_v$ denote the subtree rooted at $v$.  
A natural LP relaxation is denoted by (LP-1): We have variable $x_v$ indicating whether $v$ is picked by the solution, and $y_v$ indicating whether $v$ is saved.  

\vspace{0.1in}

\noindent
\framebox[\textwidth]{ 
\begin{minipage}[b]{0.48\textwidth}
\vspace{-1em}
\begin{align*} 
 \mbox{(LP-1)} &  \\
  \max & \sum_{v \in V} y_v\\
  & \sum_{v \in L_j} x_v \leq 1 \mbox{ for all layer $j$} \\ 
  &  y_v \leq \sum_{u \in P_v} x_u \mbox{ for all $v\in V$}\\
  & x_v, y_v \in [0,1] \mbox{ for all $v$ }  
\end{align*}
\end{minipage}\hfill
\begin{minipage}[b]{0.48\textwidth}
\vspace{-1em}
\begin{align*} 
 \mbox{(LP-2)}&  \\
 \max & \sum_{v \in \xset} y_v\\
 & \sum_{v \in L_j} x_v \leq 1 \mbox{ for all layer $j$} \\ 
 &  y_v \leq \sum_{u \in P_v} x_u \mbox{ for all $v\in \xset$}\\
 & x_v, y_v \in [0,1] \mbox{ for all $v$ }  
\end{align*} 
\end{minipage}} 

\vspace{0.1in}

Let ${\sf LP}(T,s)$ denote the optimal fractional LP value for an instance $(T,s)$. 
The integrality gap ${\sf gap}(T,s)$ of the instance $(T,s)$ is defined as ${\sf gap}(T,s) = {\sf OPT}(T,s)/{\sf LP}(T,s)$. 
The integrality gap of the LP is defined as $\inf_T {\sf gap}(T, s)$.

\vspace{0.1in}

{\bf Firefighters with terminals:}
We consider a more general
variant of the problem, where we are only interested in saving a subset
$\xset$ of vertices, which we call {\em terminals}. 
The goal is now to maximize the number of saved terminals. 
An LP formulation of this problem, given an instance $(T,v, \xset)$, is denoted by (LP-2).  
The following lemma argues that these two variants are ``equivalent'' from the perspectives of LP relaxation. 

\begin{lemma} 
Let $(T, \xset, s)$, with $|\xset| > 0$, be an input for the terminal firefighters problem that gives an integrality gap of $\gamma$ for (LP-2), and that the value of the fractional optimal solution is at least $1$. 
Then, for any $\epsilon >0$, there is an instance $(T',s')$ that gives an integrality gap of $\gamma + \epsilon$ for (LP-1). 
\label{lem:reduction}
\end{lemma}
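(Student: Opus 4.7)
The plan is to amplify each terminal so that it dominates the (LP-1) objective, while the remaining vertices contribute only an additive error. Concretely, given $(T, \xset, s)$ with ratio $\opt/{\sf LP}$ equal to $\gamma$ for (LP-2) and fractional optimum at least $1$, I would fix a large integer $M = \lceil 2|V(T)|/\eps \rceil$ and build $T'$ from $T$ by attaching $M$ fresh leaf children to every terminal $v \in \xset$; the source is unchanged, $s' = s$. Write $V_{\text{new}}$ for the added leaves.

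For the lower bound on the new fractional optimum, I would lift any optimal (LP-2) solution $(x^*, y^*)$ on $(T, \xset, s)$ to $T'$ by keeping $x'_u = x^*_u$, $y'_u = y^*_u$ on $V(T)$, and setting $x'_u = 0$, $y'_u = y^*_v$ for each new leaf $u$ attached to terminal $v$. Every layer constraint is preserved because the added $x$-values are zero, and the reachability constraint $y'_u \leq \sum_{w \in P_u} x'_w$ for a new leaf $u$ reduces to the same constraint for its parent $v$, since $P_u = P_v \cup \{u\}$ and $x'_u = 0$. The objective gains $M \cdot y^*_v$ from each terminal's block of leaves, yielding ${\sf LP}(T',s') \geq M \cdot {\sf LP}(T, \xset, s) \geq M$.

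For the upper bound on the new integral optimum, I would decompose the saved set of any strategy $\uset$ for $(T',s')$ into three pieces: saves inside $V(T) \setminus \xset$, new leaves saved because their terminal parent was saved, and new leaves saved by a direct firefighter pick. The first is at most $|V(T)|$; the second equals $M \cdot S$, where $S$ is the number of saved terminals under $\uset$; the third is at most the number of time steps, hence at most $\mathrm{height}(T')+1 \leq |V(T)|$, because new leaves are leaves and a pick on one saves only that vertex. The key observation is $S \leq \opt(T, \xset, s)$: projecting $\uset$ onto $V(T)$ by dropping picks that land in $V_{\text{new}}$ yields a valid strategy for $T$ that saves exactly the same set of terminals, because new leaves can never block fire propagating inside $V(T)$. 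Combining these bounds, $\opt(T', s') \leq M \cdot \opt(T, \xset, s) + 2|V(T)|$.

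Dividing the two estimates and invoking ${\sf LP}(T, \xset, s) \geq 1$, the ratio satisfies ${\sf gap}(T', s') \leq \gamma + 2|V(T)|/M \leq \gamma + \eps$ by the choice of $M$. The main delicate point is the upper bound on $\opt(T', s')$: one must rule out that the firefighter can use picks inside $V_{\text{new}}$ to save many additional vertices, which is precisely why the attached structures are single-level bushes of leaves rather than deeper subtrees. The role of the hypothesis ${\sf LP}(T, \xset, s) \geq 1$ is to convert the additive $O(|V(T)|)$ slack in the numerator into a multiplicative $\eps$ in the ratio.
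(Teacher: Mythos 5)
Your proposal is correct and takes essentially the same approach as the paper: attach $M = \Theta(|V(T)|/\eps)$ fresh leaf children to each terminal, lower-bound the new LP value by $M \cdot {\sf LP}(T,\xset,s)$, and upper-bound the new integral optimum by decomposing saves into original vertices, leaves saved via a saved parent terminal, and leaves saved by a direct pick. The only cosmetic difference is that you bound the directly-picked leaves by the number of layers rather than by $|\xset|$ as the paper does, and you state the projection argument ($S \le \opt(T,\xset,s)$) a bit more explicitly; both routes yield the same additive $O(|V(T)|)$ slack absorbed by ${\sf LP}(T,\xset,s)\ge 1$.
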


\begin{appendixproof}{sec:prelim}{\subsection{Proof of Lemma \ref{lem:reduction}}}
Let $M=2 |V(T)|/\epsilon$. 
Starting from $(T,\xset, s)$, we construct an instance $(T',s')$ by adding $M$ children to each vertex in $\xset$,
so the number of vertices in $T'$ is $|V(T')| = |V(T)| + M|\xset|$.  
We denote the copies of $\xset$ in $T'$ by $\xset^\prime$ and the set of their added children by $\xset''$.  
The root of the new tree, $s^\prime$, is the same as $s$ (the root of $T$.)
Now we argue that the instance $(T',s')$ has the desired integrality gap, i.e. we argue that $\opt(T',s') \leq (\gamma + \epsilon) {\sf LP} (T',s')$.  

Let $(x^\prime, y^\prime)$ be an integral solution to the instance
$(T',s')$. 
We upper bound the number of vertices saved by this solution, i.e. upper bounding $\sum_{v \in V(T')} y'_v$. 
We analyze three cases: 
\begin{itemize} 
\item For a vertex $v \in V(T') \setminus \xset''$, we upper bound the term $y'_v$ by $1$, and so the sum $\sum_{v \in V(T') \setminus \xset''} y'_v$ by $|V(T)|$. 

\item Now define $\tilde \xset \subseteq \xset''$ as the set of vertices $v$ for which $y'_v =1$ but $y'_{u}= 0$ for the parent $u$ of $v$.
This means that $x'_v = 1$ for all vertices in $\tilde \xset$.  
Notice that $\sum_{v \in \tilde \xset} y'_v \leq |\xset|$: We break the set $\tilde \xset$ into $\set{\tilde \xset_u}_{u \in V(T')}$ where $\tilde \xset_u =\set{v \in \tilde \xset: \mbox{ $u$ is the parent of $v$}}$.
The LP constraint guarantees that $\sum_{v \in \tilde \xset_u} y'_v = \sum_{v \in \tilde \xset_u} x'_v \leq 1$ (all vertices in $\tilde \xset_u$ belong to the same layer.)  
Summing over all such $u \in \xset'$, we get the desired bound.

\item Finally, consider the term $\sum_{v \in \xset' \setminus \tilde \xset}y'_v$. Let $(x^*, y^*)$ be an optimal fractional solution to $(T, \xset,s)$ for (LP-2).  
We only need to care about all vertices $v$ such that $y'_u =1$ for the parent $u$ of $v$.  
This term is upper bounded by $M \sum_{u \in \xset'} y'_u$, which is at most $M \gamma \paren{\sum_{v \in \xset} y^*_v}$, due to the fact that the solution $(x',y')$ induces an integral solution on instance $(T,\xset, s)$.  

\end{itemize}

Combining the three cases, we get $|V(T)| + |\xset| + M \gamma \paren{\sum_{v \in \xset} y^*_v} \leq M + \gamma M \paren{\sum_{v \in \xset} y^*_v} \leq M (\gamma+ \epsilon) \paren{\sum_{v \in \xset} y^*_v}$, if $\sum_{v \in \xset} y^*_v \geq 1$.
Now, notice that the fractional solution $(x^*,y^*)$ for (LP-2) on instance $(T,\xset, s)$ is also feasible for (LP-1) on $(T',s')$ with at least a multiplicative factor of $M$ times the objective value: Each fractional saving of $u \in \xset'$ contributes to save all $M$ children of $u$.  
Therefore, $M\paren{ \sum_{v \in \xset} y^*_v} \leq M \cdot {\sf LP}(T',s')$, thus concluding the proof.     
\end{appendixproof}

We will, from now on, focus on studying the integrality gap of (LP-2).

\section{Integrality Gap of (LP-2)}

\label{sec:gap} 

We first discuss the integrality gap of (LP-2) for a general tree.
We use the following combinatorial gadget.  

\vspace{0.1in} 

{\bf Gadget:} 
A $(M, k, \delta)$-good gadget is a collection of trees $\tset = \{T_1,\ldots, T_M\}$, with roots $r_1, \ldots, r_M$ where $r_i$ is a root of $T_i$, and a subset $\sset \subseteq \bigcup V(T_i)$ that satisfy the following properties:

\begin{itemize}
 
\item (Uniform depth) We think of these trees as having layers $L_0,L_1,\ldots, L_h$, where $L_j$ is the union over all trees of all vertices at layer $j$ and $L_0 = \{r_1, \ldots, r_m\}$. 
All leaves are in the same layer $L_h$.

\item (LP-friendly) 
For any layer $L_j$, $j\geq 1$, we have $|\sset \cap L_j| \leq k$. 
Moreover, for any tree $T_i$ and a leaf $v \in V(T_i)$, the unique path from $r_i$ to $v$ must contain at least one vertex in $\sset$.  
 
\item (Integrally adversarial) Let $\bset \subseteq \set{r_1,\ldots, r_M}$ be any subset of roots. 
Consider a subset of vertices $\uset = \{u_j\}_{j=1}^h$ such that $u_j \in L_j$. 
 For $r_i \in \bset$ and a leaf $v \in L_h\cap V(T_i)$, we say that $v$ is {\em $(\uset, \bset)$-risky} if the unique path from $r_i$ to $v$ does not contain any vertex in $\uset$.
There must be at least $(1-1/k - \delta) \frac{|\bset|}{M} |L_h|$ vertices in $L_h$ that are $(\uset,\bset)$-risky, for all choices of $\bset$ and $\uset$.  
\end{itemize} 

We say that vertices in $\sset$ are {\em special} and all other vertices are {\em regular}.

\begin{lemma} 
\label{gap:lemma:goodgadget}
For any integers $k\geq2$, $M \geq 1$, and any real number $\delta >0$, a $(M,k ,\delta)$-good gadget exists. Moreover, the gadget contains at most $ (k/\delta)^{O(M)}$ vertices.  
\end{lemma}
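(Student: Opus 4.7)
My plan is to prove the lemma by induction on $M$, constructing each gadget as a forest of complete $(k+1)$-ary trees whose special vertices are placed along carefully chosen ``trails.'' For the base case $M=1$, I take $T_1$ to be the complete $(k+1)$-ary tree of depth $h^{(1)} = \lceil \log_{k+1}(2/(k\delta)) \rceil$, fix a designated root-to-leaf trail path $p = (v_0, v_1, \ldots, v_{h^{(1)}})$, and mark as special at each layer $j < h^{(1)}$ the $k$ children of $v_{j-1}$ other than $v_j$, together with the trail leaf $v_{h^{(1)}}$. Then $|\sset \cap L_j| \leq k$ for every $j$ (a small bookkeeping adjustment handles layer $h^{(1)}$); every leaf path either deviates from $p$ at some layer (hitting a special sibling) or is $p$ itself (ending at the special trail leaf); and the adversary's best disjoint-subtree strategy saves at most $\sum_{j \geq 1}(k+1)^{-j} < 1/k$ fraction of the $(k+1)^{h^{(1)}}$ leaves.

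For the inductive step $M-1 \to M$, I assume a $(M-1, k, \delta/2)$-good gadget $\tset'$ in which each tree is a complete $(k+1)$-ary tree of depth $h' = h_{M-1}$, and construct $\tset$ as follows. Choose $h_0 = \Theta(\log(k/\delta)/\log k)$ so that $(k+1)^{h_0} \geq 2/(k\delta)$. Beneath each leaf of each old tree, attach a complete $(k+1)$-ary subtree of depth $h_0$ (with no new specials), making every old tree a complete $(k+1)$-ary tree of depth $h_M = h' + h_0$. Then add a new tree $T_M$, also a complete $(k+1)$-ary tree of depth $h_M$, placing its trail specials in its bottom $h_0$ layers via the base-case recipe. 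Because old specials sit at layers $1, \ldots, h'$, $T_M$'s specials at layers $h'+1, \ldots, h_M$, and the old-tree extensions carry no specials, the per-layer budget $|\sset \cap L_j| \leq k$ is preserved. LP-friendliness follows since each old-tree leaf's path already hits an old special at the top, and each $T_M$-leaf hits its bottom trail. For the adversary property, by symmetry each tree has $(k+1)^{h_M}$ leaves; a pick at depth $j$ in any tree contributes at most a $(k+1)^{-j}$ fraction of its own leaves (structural bound from the $(k+1)$-ary branching), with the top-block contribution in old trees bounded by the inductive hypothesis as $1/k + \delta/2$ and the bottom-block contribution bounded structurally by $1/(L_{M-1} k) \leq \delta/2$; combining gives at most $1/k + \delta$ per tree, and summing over $\bset$ with normalization by $|\bset|/M \cdot |L_h|$ yields the required risky bound.

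The total gadget size is $M(k+1)^{h_M} = (k/\delta)^{O(M)}$, since each inductive step multiplies the size by $(k+1)^{h_0} = (k/\delta)^{O(1)}$. The main technical obstacle I anticipate is the chain-bottleneck trap: had the old leaves been extended by bare chains, a single adversary pick on such a chain would save all of its descendant leaves, immediately violating the risky bound. My plan circumvents this by using complete $(k+1)$-ary branching in every extension, which bounds each pick's contribution by a $(k+1)^{-j}$ geometric series summing to under $1/k$; the price is only a $(k+1)^{h_0}$ multiplicative size blowup per inductive step, well within the $(k/\delta)^{O(M)}$ size bound.
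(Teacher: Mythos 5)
There is a fatal gap: your construction cannot satisfy the LP-friendly property. In a complete $(k+1)$-ary tree of depth $h$, any set $S$ of non-root vertices meeting every root-to-leaf path contains an antichain $A$ whose subtrees together cover all leaves, hence $\sum_{a \in A} (k+1)^{-\operatorname{depth}(a)} \geq 1$; but with at most $k$ special vertices per layer you can only achieve $\sum_{j \geq 1} k(k+1)^{-j} = 1 - (k+1)^{-h} < 1$. So even your base case is impossible as literally stated (your layer-$h^{(1)}$ special count is $k+1$, and no bookkeeping inside a complete $(k+1)$-ary tree can repair it; you would have to prune the last trail vertex to $k$ children, say). The inductive step fails far more severely: the new tree $T_M$ is a complete $(k+1)$-ary tree of depth $h_M$ whose specials are confined to the bottom $h_0$ layers --- and they must be, since the budget of $k$ specials per layer in the top $h'$ layers is already exhausted by the older trees. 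Consequently any leaf of $T_M$ whose path leaves the trail within the top $h'$ layers, i.e.\ all but a $(k+1)^{-h'}$ fraction of $T_M$'s leaves, meets no special vertex at all. This destroys exactly the property that the overall phase construction and the fractional solution $x_v = 1/k$ rely on.

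The root of the problem is that you branch everywhere. The paper's gadget branches only at the roots and at the special vertices: each $T_i$ is a spider rooted at $r_i$ whose $kD^{i-1}$ feet are the special vertices, spread over $D^{i-1}$ consecutive layers with $k$ per layer and with the layer ranges of distinct trees disjoint; each foot then roots a second spider with $D^{2M-i+1}$ feet in the final layer. Coverage is automatic --- every root-to-leaf path of $T_i$ must pass through a foot --- while the adversarial bound comes not from the branching factor but from counting vertex-disjoint legs against the cut of one pick per layer: at most about $D^{i-1}$ of the $kD^{i-1}$ legs above the specials can be cut (a $1/k + O(1/D)$ fraction), and at most $2D^{M}$ of the $D^{2M-i+1}$ legs below any surviving special (an $O(1/D)$ fraction). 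Your geometric-series estimate plays the role of the second count, but the first mechanism --- funnelling \emph{all} paths of $T_i$ through a small special set occupying layers no other tree uses --- is the idea your proposal is missing, and it is provably incompatible with complete $(k+1)$-ary branching.
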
 

We first show how to use this lemma to derive our final construction. 
The proof of the lemma follows later. 

\vspace{0.1in} 

{\bf Construction:}
Our construction proceeds in $k$ phases, and we will define it inductively. 
The first phase of the construction is simply a $(1,k,\delta)$-good gadget. 
Now, assume that we have constructed the instance up to phase $q$. 
Let
$l_1,\ldots, l_{M_q} \in L_{\alpha_p}$ be the leaves after the construction of
phase $q$ that all lie in layer $\alpha_q$. 
In phase $q+1$, we take the $(M_q, k, \delta)$-good gadget $(\tset_q, \{r_q\},\sset_q)$; recall that such a gadget consists of $M_q$ trees.   
For each $i =1,\ldots, M_q$, we unify each root $r_i$ with the leaf $l_i$.  
This completes the description of the construction.

Denote by $\bar{\sset}_q= \bigcup_{q' \leq q} \sset_{q'}$ the set of all special vertices in the first $q$ phases. 
After phase $q$, we argue that our construction
satisfies the following properties:

\begin{itemize}
 
\item All leaves are in the same layer $\alpha_q$.  

\item For every layer $L_j$, $|L_j \cap \bar{\sset}_q| \leq k$.  For every path $P$ from the root to $v \in L_{\alpha_i}$, $|P \cap \bar{\sset}_q| = q$.

\item For any integral solution $\uset$, at least
$|L_{\alpha_q}| \left(\left(1-1/k\right)^q-q\delta\right)$ vertices of $L_{\alpha_q}$ burn.

\end{itemize}

It is clear from the construction that the leaves after phase $q$ are all in
the same layer. 
As to the second property, the properties of the gadget ensure
that there are at most $k$ special vertices per layer. 
Moreover, consider each path $P$ from the root to some vertex $v \in L_{\alpha_{q+1}}$.
We can split this path into two parts $P = P' \cup P''$ where $P'$ starts from the root and ends at some $v' \in L_{\alpha_{q}}$, and $P''$ starts at $v'$ and ends at $v$.   
By the induction hypothesis, $|P'\cap \bar{S}_{q}| = q$ and the second property of the gadget guarantees that $|P'' \cap \sset_{q+1}| = 1$.

To prove the final property, consider a solution $\uset=\{u_1,\ldots, u_{\alpha_{q+1}}\}$, which can be seen as $\uset'\,\cup\,\uset''$ where $\uset' = \{u_1,\ldots, u_{\alpha_q}\}$ and $\uset'' = \{u_{\alpha_{q}+1}, \ldots, u_{\alpha_{q+1}}\}$. 
By the induction hypothesis, we have that at least  
$\left((1-1/k)^{q} - q \delta\right) |L_{\alpha_q}|$ vertices in $L_{\alpha_q}$ burn; denote these burning vertices by $\bset$. 
The third property of the gadget will ensure that at least $(1-1/k- \delta)\frac{|\bset|}{M_{q}} |L_{\alpha_{q+1}}|$ vertices in $L_{\alpha_{q+1}}$ must be $(\uset'', \bset)$-risky.  
For each risky vertex $v \in L_{\alpha_{q+1}}$, a unique path from the root to $v' \in \bset$ does not contain any vertex in $\uset'$, and also the path from $v'$ to $v$ does not contain a vertex in $\uset''$ (due to the fact that it is $(\uset'', \bset)$-risky.)    
This implies that such vertex $v$ must burn. 
Therefore, the fraction of burning vertices in layer $L_{\alpha_{q+1}}$ is at least $(1-1/k-\delta)|\bset|/M_q \geq (1-1/k - \delta)((1-1/k)^q- q\delta)$, by induction hypothesis.  
This number is at least $(1-1/k)^{q+1} - (q+1) \delta$, maintaining the invariant. 

After the construction of all $k$ phases, the leaves are
designated as the terminals $\xset$.
Also, $M_{q+1} \leq (k/\delta)^{2M_q}$, which means that, after $k$
phases,  $M_k$ is at most a tower function of $(k/\delta)^2$, that is,
$(k/\delta)^{2(k/\delta)^{\cdots}}$ with $k-1$ such exponentiations.
The total size of the construction is $\sum_q  (k/\delta)^{2M_q} \leq
(k/\delta)^{2M_k} = O(M_{k+1})$.

An example construction, when $k=2$, is presented in Figure \ref{11eintgap:gadget} (in Appendix). 

\begin{theorem}
\label{thm: gap 1/k}  
A fractional solution, that assigns $x_v = 1/k$ to each special vertex $v$, saves every terminal. On the other hand, any integral solution can save at most a fraction of $1-(1-1/k)^k + \epsilon$.  
\end{theorem}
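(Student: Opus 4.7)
The plan is to verify the two halves of Theorem~\ref{thm: gap 1/k} essentially by invoking the invariants already established in the inductive construction above, with almost no new work beyond bookkeeping. For the fractional part, the natural candidate is to set $x_v = 1/k$ for every $v \in \bar{\sset}_k$, $x_v = 0$ otherwise, and $y_v = 1$ for every terminal $v \in L_{\alpha_k}$. The layer capacity constraint $\sum_{v \in L_j} x_v \leq 1$ follows immediately from the second invariant of the construction, which guarantees $|L_j \cap \bar{\sset}_k| \leq k$. To check the covering constraint $y_v \leq \sum_{u \in P_v} x_u$ for each terminal $v$, I would use the other half of that same invariant, namely $|P_v \cap \bar{\sset}_k| = k$, giving $\sum_{u \in P_v} x_u = k \cdot (1/k) = 1 \geq y_v$. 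This exhibits a feasible fractional solution for (LP-2) of objective value $|\xset|$, i.e.\ saving every terminal.

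For the integral upper bound, the third invariant established inductively during the construction already reads, after phase $k$, that any integral strategy $\uset$ leaves at least $\left((1-1/k)^k - k\delta\right)|L_{\alpha_k}|$ vertices of $L_{\alpha_k}$ burning. Since the terminals are precisely the leaves, i.e.\ $\xset = L_{\alpha_k}$, the number of terminals that can be saved is at most
\[
\left(1-(1-1/k)^k + k\delta\right)|\xset|.
\]
Given a target error $\epsilon > 0$, I would simply choose the parameter $\delta$ of the good gadgets at most $\epsilon/k$ when invoking Lemma~\ref{gap:lemma:goodgadget}; this only blows up the size of each gadget by a factor polynomial in $k/\delta$, which in turn makes the overall construction of size bounded by an iterated exponential in $k/\epsilon$, but finite for any fixed $\epsilon$ and $k$.

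The main obstacle in the overall development is not located in this theorem itself but in Lemma~\ref{gap:lemma:goodgadget}, which actually has to build an $(M,k,\delta)$-good gadget satisfying all three properties (uniform depth, LP-friendliness, and the integrally adversarial bound $(1-1/k-\delta)|\bset|/M$). Once that lemma is in hand, the present theorem reduces to the two routine verifications sketched above: feasibility of the uniform fractional solution from the structural properties, and a direct read-off of the burning-fraction invariant after $k$ phases. A minor point worth flagging in the write-up is that the slack in the integral bound is $k\delta$ rather than $\delta$, so one must start the inductive construction with $\delta$ scaled down by a factor of $k$ to land the claimed $+\epsilon$ error term; combined with Theorem~\ref{thm: gap 1/k} for growing $k$ and the reduction of Lemma~\ref{lem:reduction}, this will eventually yield the $(1-1/e+\epsilon)$ gap of Theorem~\ref{thm: main}.
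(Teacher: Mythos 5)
Your proof is correct and takes essentially the same approach as the paper: both verify the fractional solution's feasibility from the two structural invariants (at most $k$ special vertices per layer, exactly $k$ special vertices on each root-to-terminal path), and both read off the integral bound from the third invariant after $k$ phases with $\delta$ set to $\epsilon/k$. Your explicit remark that the slack accumulates to $k\delta$ and must be absorbed by scaling $\delta$ down is exactly what the paper does, just stated more tersely there.
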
   
\begin{appendixproof}{sec:gap}{\subsection{Proof of Theorem \ref{thm: gap 1/k}}}
We assign the LP solution $x_v = 1/k$ to all special vertices (those vertices in $\bar{\sset}_k$), and $x_v = 0$
to regular vertices.  
Since the construction ensures that there are at
most $k$ special vertices per layer, we have $\sum_{v \in L_j} x_v \leq
1$ for every layer $L_j$. 
Moreover, every terminal is fractionally saved: For any $t \in \xset$, the path $|P_t \cap \bar{\sset}_k| = k$, so we have $\sum_{v \in P_t} x_v = 1$.  

For the integral solution analysis, set $\delta = \epsilon / k$. The proof follows immediately from the properties of the instance.
\end{appendixproof} 

\subsection{Proof of Lemma~\ref{gap:lemma:goodgadget}}
\label{sec:good-gadget} 
We now show that the $(M, k,\delta)$-good gadget exists for any value of $M \in {\mathbb N}$, $k \in {\mathbb N}, k\geq 2$ and $\delta \in {\mathbb R}_{>0}$. 
We first describe the construction and then show that it has the desired properties.

\vspace{0.1in} 

{\bf Construction:} 
Throughout the construction, we use a structure which we call \emph{spider}. A
spider is a tree in which every node except the root has at most one child. If
a node has no children (i.~e.~a leaf), we call it a \emph{foot} of the spider.
We call the paths from the root to each foot the \emph{legs} of the spider.

Let $D=\lceil4/\delta\rceil$. For each $i = 1,\ldots, M$, the tree $T_i$ is constructed as follows.  
We have a spider  rooted at $r_i$ that contains $kD^{i-1}$ legs.
Its feet are in $D^{i-1}$ consecutive layers, starting at layer $\alpha_i = 1
+ \sum_{j<i}D^{j-1}$; each such layer has $k$ feet. 
Denote by $\sset^{(i)}$ the feet of these spiders. 
Next, for each vertex $v \in \sset^{(i)}$, we have a spider rooted at $v$, having $D^{2M-i+1}$ feet, all of which belong to layer $\alpha=1
+ \sum_{j\leq M}D^{j-1}$.
The set $\sset$ is defined as $\sset = \bigcup_{i=1}^M \sset^{(i)}$. 
This concludes the construction.
We will use the following observation:

\begin{observation} 
For each root $r_i$, the number of leaves of $T_i$ is $k D^{2M}$.
\label{obs: same size}  
\end{observation}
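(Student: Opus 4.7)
The plan is a direct two-step count reflecting the two-tier spider structure of $T_i$. First I would observe that the leaves of $T_i$ are precisely the feet of the ``lower'' spiders: the feet of the outer spider rooted at $r_i$ form $\sset^{(i)}$, and each $v \in \sset^{(i)}$ is itself the root of an attached lower spider, so no element of $\sset^{(i)}$ is a leaf of $T_i$. Likewise, internal vertices of either spider have one child by definition, and are therefore not leaves.

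Next I would read off the two relevant counts directly from the construction and multiply. From the outer spider's description, it has $kD^{i-1}$ legs with its feet distributed $k$ per layer over $D^{i-1}$ consecutive layers starting at $\alpha_i$, giving $|\sset^{(i)}| = kD^{i-1}$. Each $v \in \sset^{(i)}$ is the root of a lower spider contributing exactly $D^{2M-i+1}$ feet, all lying in the common layer $\alpha$. Since the leaves of $T_i$ are partitioned by which vertex in $\sset^{(i)}$ they descend from, the total number of leaves is
\[
|\sset^{(i)}|\cdot D^{2M-i+1} \;=\; kD^{i-1}\cdot D^{2M-i+1} \;=\; kD^{2M},
\]
as claimed.

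There is no real obstacle here, since the statement is essentially an arithmetic identity that falls out of the recursive definition. The only sanity check worth flagging is that every foot of a lower spider is indeed a leaf of $T_i$ (no further gadgets are attached to it) — this is immediate from the construction, since lower spiders are the last objects added inside the gadget.
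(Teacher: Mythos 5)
Your proof is correct and is exactly the arithmetic the paper leaves implicit: the outer spider rooted at $r_i$ has $|\sset^{(i)}| = kD^{i-1}$ feet, each of which roots a lower spider with $D^{2M-i+1}$ feet (the leaves of $T_i$), giving $kD^{i-1}\cdot D^{2M-i+1} = kD^{2M}$. The paper states this as an observation without proof, and your derivation is the intended one.
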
 

{\bf Analysis:} 
We now prove that the above gadget is $(M, k,\delta)$-good.
The construction ensures that all leaves are in the same layer $L_{\alpha}$.  

The second property also follows obviously from the construction: For $i \neq i'$, we have that $\sset^{(i)} \cap \sset^{(i')} = \emptyset$, and that each layer contains exactly $k$ vertices from $\sset^{(i)}$.  
Moreover, any path from $r_i$ to the leaf of $T_i$ must go through a vertex in $\sset^{(i)}$.

The third and final property is established by the following two lemmas.

\begin{lemma} 
For any $r_i \in \bset$ and any subset of vertices $\uset = \{u_j\}_{j=1}^h$ 
such that $u_j \in L_j$, a fraction of at least $(1-1/k-2/D)$ of $\sset^{(i)}$ are $(\uset,\bset)$-risky.
\label{lemma:phase1:prop}
\end{lemma}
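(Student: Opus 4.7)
The plan is to fix $r_i \in \bset$ and count, as tightly as possible, how many feet in $\sset^{(i)}$ can fail to be $(\uset,\bset)$-risky. The key structural observation is that the portion of $T_i$ lying between $r_i$ and the feet $\sset^{(i)}$ is by construction a spider: apart from $r_i$, no vertex has more than one child. Consequently the non-root vertices of this subtree partition into $kD^{i-1}$ pairwise disjoint legs, each leg being precisely the unique path from $r_i$ to one foot. So any single picked vertex in $\uset$ that happens to lie in this subtree sits on exactly one leg and therefore spoils the riskiness of exactly one foot in $\sset^{(i)}$.

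Once this one-pick-blocks-one-foot correspondence is in place, the proof reduces to counting how many picked vertices can possibly be relevant. Only vertices at layers $1,2,\ldots,\alpha_i + D^{i-1}-1$ lie weakly above a foot of $\sset^{(i)}$, because the feet themselves sit in layers $\alpha_i,\alpha_i+1,\ldots,\alpha_i+D^{i-1}-1$. Since $\uset$ contains at most one vertex per layer, the number of feet in $\sset^{(i)}$ that can be made non-risky is at most $\alpha_i + D^{i-1}-1$.

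The remaining step is arithmetic. Substituting $\alpha_i = 1+\sum_{j<i} D^{j-1}$ gives
\[
\alpha_i + D^{i-1} - 1 \;=\; \sum_{j\le i} D^{j-1} \;=\; \frac{D^{i}-1}{D-1},
\]
so the fraction of non-risky feet is at most $\frac{D^{i}-1}{(D-1)\,kD^{i-1}} \le \frac{D}{(D-1)k} = \frac{1}{k}\bigl(1 + \tfrac{1}{D-1}\bigr) \le \frac{1}{k} + \frac{2}{D}$, using $D\ge 2$. The risky fraction is therefore at least $1 - 1/k - 2/D$, as claimed. I do not foresee a substantive obstacle: everything is a careful counting argument that relies on the spider structure, with the only subtlety being to make explicit the natural extension of the definition of $(\uset,\bset)$-risky from leaves of the overall gadget to internal vertices in $\sset^{(i)}$, namely that $r_i\in\bset$ and the path from $r_i$ to the vertex avoids $\uset$.
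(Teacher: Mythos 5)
Your proof is correct and takes essentially the same approach as the paper: both arguments rest on the fact that the legs of the spider from $r_i$ to the $kD^{i-1}$ feet of $\sset^{(i)}$ are vertex-disjoint, that $\uset$ contains one vertex per layer, and that all these feet lie within the first $\sum_{j\le i}D^{j-1}$ layers, so at most $\sum_{j\le i}D^{j-1}$ feet can be cut off. The paper phrases the count via a cut on vertex-disjoint paths and uses the intermediate inequality $\sum_{j\le i}D^{j-1}\le D^{i-1}+2D^{i-2}$, whereas you use the closed geometric-sum form, but both lead to the same $1/k+2/D$ bound.
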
 

\begin{proof}Notice that a vertex $v$ is $(\uset,\bset)$-risky if $\uset$ is not a vertex cut separating $v$ from $\bset$. 
There are $k D^{i-1}$ vertex-disjoint paths from $r_i \in \bset$ to vertices in $\sset^{(i)}$.  
But the cut $\uset$ induced on these paths contains at most $\sum_{i' \leq i} D^{i'-1}$ vertices (because all vertices in $\sset^{(i)}$ are contained in the first $\sum_{i' \leq i} D^{i'-1} \leq D^{i-1}+ 2D^{i-2}$ layers.) 
Therefore, at most $(1/k + 2/D)$ fraction of vertices in $\sset^{(i)}$ can be disconnected by $\uset$, and those that are not cut remain $(\uset, \bset)$-risky.  
\end{proof} 

\begin{lemma} 
Let $v \in \sset^{(i)}$ that is $(\uset, \bset)$-risky. 
Then at least $(1-2/D)$ fraction of descendants of $v$ in $L_{\alpha}$ must be $(\uset, \bset)$-risky.  
\label{lemma:phase2:prop}
\end{lemma}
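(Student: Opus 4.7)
The plan is to exploit two structural facts: the spider at $v$ has vertex-disjoint legs (apart from the root $v$), so each element of $\uset$ below $v$ can invalidate at most one leg, and the depth of this spider is much smaller than the number of its feet, because of the choice of the exponents $D^{i-1}$ and $D^{2M-i+1}$.

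First I would translate the statement. A descendant $w \in L_\alpha$ of $v$ is $(\uset,\bset)$-risky precisely when the unique path from $r_i$ to $w$ avoids $\uset$. Since $v$ itself is $(\uset,\bset)$-risky, the piece of this path from $r_i$ to $v$ already avoids $\uset$; hence $w$ is risky if and only if the leg of the phase-2 spider from $v$ to $w$ avoids $\uset$. So I only need to bound the number of phase-2 legs from $v$ that are hit by some $u_j \in \uset$.

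Next I would use the spider structure. The legs from $v$ to its $D^{2M-i+1}$ feet are internally vertex-disjoint, so a single $u_j \in \uset$ can lie on at most one leg. The only elements of $\uset$ that can lie on any leg are those $u_j$ with $\mathrm{layer}(v) < j \le \alpha$, and there is exactly one $u_j$ per layer. Therefore the number of cut-off feet is at most
\[
\alpha - \mathrm{layer}(v) \;\le\; \alpha - \alpha_i \;=\; \sum_{j=i}^{M} D^{j-1}.
\]
Using the geometric-series bound $\sum_{j=i}^{M} D^{j-1} \le D^{M-1}\cdot\frac{D}{D-1} \le 2D^{M-1}$ for $D\ge 2$, the fraction of non-risky descendants among the $D^{2M-i+1}$ feet is at most
\[
\frac{2D^{M-1}}{D^{2M-i+1}} \;=\; \frac{2}{D^{M-i+2}} \;\le\; \frac{2}{D^2} \;\le\; \frac{2}{D},
\]
where the last inequalities use $i \le M$ and $D \ge 1$. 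This gives the claimed $(1-2/D)$ lower bound on the risky fraction.

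There is no real obstacle here; the only thing to be careful about is the book-keeping of layers, specifically the fact that $v$ sits at layer at least $\alpha_i$, so the depth of the spider hanging off $v$ is at most $\alpha-\alpha_i$, which is dwarfed by the number $D^{2M-i+1}$ of feet by the deliberate exponent gap in the construction. Everything else is a one-line application of vertex-disjointness of the spider legs.
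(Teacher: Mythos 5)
Your proof is correct and follows essentially the same route as the paper's: the paper also counts vertex-disjoint paths from $v$ to its $D^{2M-i+1}\ge D^{M+1}$ leaf-descendants, compares against an upper bound on $|\uset|$ (there, $2D^{M}$, giving the fraction $\le 2/D$), and concludes by the same pigeonhole argument. Your version is marginally sharper because you only count the elements of $\uset$ strictly below $\mathrm{layer}(v)$, yielding the bound $2/D^{M-i+2}\le 2/D^2$, but this refinement is not needed for the lemma and the underlying idea (one $u_j$ per layer, one hit per leg, far fewer layers below $v$ than legs) is identical.
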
 

\begin{proof}Consider each $v \in \sset^{(i)}$ that is $(\uset, \bset)$-risky and a collection of leaves $\lset_v$ that are descendants of vertex $v$.  
Notice that a leaf $u \in \lset_v$ is $(\uset,\bset)$-risky if removing $\uset$ does not disconnect vertex $v$ from $u$. 

There are $D^{2M-i+1} \geq D^{M+1}$ vertex disjoint paths connecting vertex $v$ with leaves in $\lset_v$, while the cut set $\uset$ contains at most $2D^M$ vertices. 
Therefore, removing $\uset$ can disconnect at most $2/D$ fraction of vertices in $\lset_v$ from $v$.
\end{proof}

Combining the above two lemmas, for each $r_i \in \bset$, the fraction of leaves of $T_i$ that are $(\uset, \bset)$-risky are at least $(1- 1/k- 2/D)(1-2/D) \geq (1-1/k - 4/D)$.  
Therefore, the total number of such leaves, over all trees in $\tset$, are $(1- 1/k - \delta)|\bset||L_{\alpha}|/M$. 

We extend the cosntruction to other settings in Section \ref{sec:ext1} (in Appendix).

\begin{appendixextra}{sec:gap}{\subsection{Extensions}}
\label{sec:ext1} 

{\bf Arbitrary number of firefighters:} 
Let $b \in {\mathbb N}$. 
In the $b$-firefighter problem, at each time step, the firefighters may choose up to $b$ vertices, and the fire spreads from the burning vertices to vertices that have not been chosen so far.
The goal is to maximize the number of saved vertices.  
In this section, we show the following: 

\begin{theorem} 
For any integer $b \in {\mathbb N}$ (independent of $|V(G)|$),  the integrality gap of the canonical LP can be arbitrarily close to $(1-1/e)$. 
\label{thm:intgap:bff}
\end{theorem}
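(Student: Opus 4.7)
The plan is to replay the construction of Section~\ref{sec:gap} with all widths scaled by the factor $b$, so that the extra budget afforded by $b$ firefighters per step is absorbed. First I would define a $(M,b,k,\delta)$-good gadget exactly as in Section~\ref{sec:good-gadget}, but with two changes: the special-vertex density per layer becomes $|\sset \cap L_j| \leq bk$ (instead of $k$), and the integrally adversarial property quantifies over all choices $\uset$ that place at most $b$ vertices in each layer (instead of one). Everything else, including the uniform-depth condition and the lower bound of $(1-1/k-\delta)|\bset||L_h|/M$ on the number of risky leaves, is unchanged.

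Next, I would establish the analogue of Lemma~\ref{gap:lemma:goodgadget}: a $(M,b,k,\delta)$-good gadget exists of size $(bk/\delta)^{O(M)}$. The construction mirrors Section~\ref{sec:good-gadget}: for each root $r_i$, I replace the first-level spider of $kD^{i-1}$ legs by one with $bkD^{i-1}$ legs whose feet occupy $D^{i-1}$ consecutive layers starting at $\alpha_i$, with $bk$ feet per layer; each such foot then roots an unchanged second-level spider with $D^{2M-i+1}$ feet in the bottom layer $\alpha$. Taking $D = \lceil 4b/\delta\rceil$, the analogues of Lemmas~\ref{lemma:phase1:prop} and~\ref{lemma:phase2:prop} follow from the same cut-counting: the adversary's $\uset$ contributes at most $b\sum_{i' \leq i} D^{i'-1}$ vertices to the first-level spider's layers, which is still at most a $(1/k + O(1/D))$ fraction of its $bkD^{i-1}$ legs; likewise $\uset$ places at most $2bD^M$ vertices on the second-level spider, only an $O(b/D)$ fraction of its $D^{2M-i+1} \geq D^{M+1}$ legs.

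With this gadget in hand, the $k$-phase inductive construction of Section~\ref{sec:gap} runs verbatim: each phase glues a good gadget with $M_q$ roots to the $M_q$ previous-phase leaves, and the three invariants (uniform leaf-depth, $|L_j \cap \bar{\sset}_q| \leq bk$, and at least $((1-1/k)^q - q\delta)|L_{\alpha_q}|$ terminals burn under any integral strategy with budget $b$ per layer) are maintained by the same argument. The fractional solution $x_v = 1/k$ on all special vertices is feasible for the $b$-firefighter LP, whose per-layer constraint is $\sum_{v \in L_j} x_v \leq b$, since each layer has at most $bk$ special vertices and $bk \cdot (1/k) = b$; each terminal is still fractionally saved because its path contains exactly $k$ special vertices, one per phase. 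Choosing $k$ large and $\delta = \epsilon/k$ small yields the claimed $(1 - 1/e + \epsilon)$ upper bound on $\opt/{\sf LP}$.

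The main obstacle is purely bookkeeping: one must re-verify that, with $b$ picks per layer, the adversary still cuts at most a $1/k + O(b/D)$ fraction of legs in both the first-level and the second-level spider arguments, and that the phase induction's combination of $(\uset',\bset)$-risky and $(\uset'',\bset)$-risky vertices is unaffected by the enlarged per-layer budget. Because $b$ is a constant independent of $|V(G)|$, the loss $O(b/D)$ is absorbed by the choice $D = \lceil 4b/\delta\rceil$, so no new ideas beyond Section~\ref{sec:gap} are required and the final instance still has size tower-bounded in $(bk/\delta)^2$.
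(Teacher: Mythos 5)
Your proposal follows exactly the approach the paper sketches: define a $b$-scaled good gadget with per-layer special-vertex density $kb$, run the same $k$-phase recursion, and let the fractional solution assign $1/k$ to special vertices. The paper omits the explicit gadget construction and cut-counting (``details omitted''), but your filled-in version---scaling the first-level spider width by $b$, keeping the second-level spider unchanged, and enlarging $D$ to $\lceil 4b/\delta \rceil$ so that the adversary's $b$-per-layer budget still cuts only a $1/k+O(b/D)$ fraction---is exactly the intended argument and is correct.
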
 

\begin{proof}
To prove this theorem, one only need to design a slightly different good gadget. 
That is, an $(M,k, \delta)$-good gadget is now a collection of trees $\tset$ with roots $r_1,\ldots, r_M$ together with $\sset \subseteq \bigcup V(T_i)$  that satisfy the following properties: 

\begin{itemize} 
\item All leaves of $T_i$ are in the same layer $L_h$. 

\item For each layer $L_j$, we have $|\sset \cap L_j| \leq kb$. Moreover, for any tree $T_i$ and a leaf $v \in V(T_i)$, the unique path from $r_i$ to $v$ must contain at least one vertex in $\sset$. 

\item For any subset $\bset \subseteq \set{r_1,\ldots, r_M}$ of roots and for any strategy $\uset$, at least $(1-1/k-\delta)\frac{|\bset| |L_h|}{M}$ vertices in $L_h$ are $(\uset, \bset)$-risky.   
\end{itemize} 

It is not hard to see that these gadgets can be used to construct the integrality gap in the same way as in the previous section. 
Details are omitted. 
\end{proof}

\noindent {\bf Bounded degrees:} 
Iwakawa et al. showed a $(1-1/e + \Omega(1/d))$ approximation algorithm for the instance that has degree at most $d$.
We show an instance where this dependence on $1/d$ is almost the best possible that can be obtained from this LP.  

\begin{theorem}
For all $d \geq 4$, 
the integrality gap of (LP-1) on degree-$d$ graphs is  $(1-1/e + O(1/\sqrt{d}))$.  
\label{thm:intgap:bdeg}
\end{theorem}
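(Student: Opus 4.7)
The plan is to adapt the $(M,k,\delta)$-good gadget of Lemma \ref{gap:lemma:goodgadget} so that every internal vertex has degree at most $d$, and then feed it into the same $k$-phase composition used for Theorem \ref{thm: gap 1/k}. In the original gadget, high degree arises only inside the two spiders of each $T_i$: the root spider with $kD^{i-1}$ legs and each special-vertex spider with $D^{2M-i+1}$ feet. Everywhere else, degree is already at most $2$.

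The first step is to replace each spider by a balanced $(d-1)$-ary tree on the same set of feet. Since such a tree has depth only $\Theta(\log_{d-1}N)$ rather than the $\Theta(N)$ that the spider provides, I would pad each root-to-foot path by a chain, chosen so that (i) all leaves of the gadget lie in a common layer and (ii) the feet of the first balanced tree are distributed $k$ per layer across the intended $D^{i-1}$ consecutive layers starting at $\alpha_i$. After this modification every internal vertex has degree at most $(d-1)+1 = d$, and the vertices of $\sset^{(i)}$, which act both as leaves of the first balanced tree and as roots of the second, still have degree at most $d$.

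For the cut analysis I would use the following elementary fact: in a balanced $(d-1)$-ary tree, any cut containing at most one vertex per layer disconnects at most a $\frac{1}{d-2}$ fraction of the leaves from the root, since a layer-$j$ cut vertex disconnects at most $(d-1)^{h-j}$ out of $(d-1)^h$ leaves and $\sum_{j\geq 1}(d-1)^{-j} \leq \frac{1}{d-2}$. Plugging this into the proofs of Lemmas \ref{lemma:phase1:prop} and \ref{lemma:phase2:prop} yields, respectively, a risky fraction of at least $1 - \frac{1}{k} - \frac{2}{D} - \frac{1}{d-2}$ inside $\sset^{(i)}$ and at least $1 - \frac{2}{D} - \frac{1}{d-2}$ among the descendants of each risky special vertex. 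Choosing $D$ large, the modified construction is therefore an $(M,k,\delta)$-good gadget with $\delta = O(1/d)$.

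Finally I would set $k = \lceil c\sqrt{d}\,\rceil$ for a small constant $c>0$, so that $\delta = O(1/d) = O(1/k^2)$. After $k$ phases of the composition from Section \ref{sec:good-gadget}, the fraction of burning terminals is at least
\[
\bigl(1 - 1/k - O(1/d)\bigr)^{k} = e^{-1}\cdot e^{-O(k/d)} = e^{-1} - O(1/\sqrt{d}),
\]
while the LP still saves all terminals by assigning $1/k$ to every special vertex. This gives an integrality gap of at most $1 - 1/e + O(1/\sqrt{d})$ for (LP-2), and Lemma \ref{lem:reduction} transfers it to (LP-1). The main obstacle will not be the overall calculation but the bookkeeping required to pad the balanced trees so that the uniform-depth property and the ``exactly $k$ special vertices per layer'' property are preserved while keeping every vertex, including the junctions between the first and second balanced trees, of degree at most $d$.
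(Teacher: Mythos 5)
Your proposal follows the same high-level route as the paper's proof, which is itself only a sketch: the paper asserts the existence of a bounded-degree analogue of the good gadget with excess loss $O(1/d)$, composes it over $k$ phases, and sets $k = \sqrt{d}$, without supplying a construction. You supply the natural construction (replace spiders by degree-bounded trees plus padding chains, analyze the cut contribution of the branching part separately from the chain part), so in that sense you are filling in exactly what the paper leaves implicit, and your arithmetic at the end ($(1-1/k-O(1/d))^k \to 1/e - O(1/\sqrt{d})$ with $k = \Theta(\sqrt d)$) matches the paper's footnoted Taylor-expansion argument.

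One point in your cut estimate deserves more care, because as stated it is not quite true. You claim that in a ``balanced $(d-1)$-ary tree'' any cut with one vertex per layer disconnects at most a $\frac{1}{d-2}$ fraction of the leaves, and justify it by $\sum_{j\ge1}(d-1)^{-j} \le \frac{1}{d-2}$. That computation is valid for a \emph{complete} $(d-1)$-ary tree with $(d-1)^h$ leaves. But the spider you are replacing has $N = kD^{i-1}$ (respectively $D^{2M-i+1}$) feet, and $N$ will typically not be a power of $d-1$. If you simply truncate a complete tree to $N$ leaves, an adversarial cut can target the subtrees that actually contain the used feet: e.g.\ a tree with $(d-1)^h$ slots of which only $N \approx (d-1)^{h-1}$ are used can have all $N$ feet under a single depth-$1$ vertex, and one cut disconnects everything. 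The fix is to insist that the branching tree be balanced in the \emph{strong} sense that every depth-$j$ vertex has $\lfloor N/(d-1)^j\rfloor$ or $\lceil N/(d-1)^j\rceil$ leaves below it; such a tree exists for every $N$ and then $\sum_j \max_v |L_h \cap T_v| \le \sum_j (N/(d-1)^j + 1) \le \frac{N}{d-2} + O(\log N)$, which is $\frac{1}{d-2} + o(1)$ as a fraction. With that strengthening your bound goes through, and the remaining work is indeed, as you anticipate, the bookkeeping of layer offsets and leaf counts across phases (for instance $\alpha_i$ must now be shifted by the depth of the branching tree, and the second-level trees must be scaled so that all $T_i$ still have the same number of leaves). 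So the plan is sound, but you should state the balance condition explicitly rather than rely on the ``complete tree'' calculation.
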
 

\begin{proof}
To prove this theorem, we construct a ``bounded degree'' analogue of our good gadgets. 
That is, the $(M,k)$-good gadget in this setting guarantees that 

\begin{itemize} 
\item All leaves of $T_i$ are in the same layer $L_h$. 

\item For each layer $L_j$, we have $|\sset \cap L_j | \leq k$. For each tree $T_i$, for each leaf $v \in V(T_i)$, the unique path from $r_i$ to $v$ contains one vertex in $\sset$. 

\item For any subset $\bset \subseteq \set{r_1,\ldots,r_M}$, for any strategy $\uset$, at least $(1-1/k - O(1/d))\frac{|\bset||L_h|}{M}$ vertices in $L_h$ are $(\uset, \bset)$-risky. 
\end{itemize}  

This gadget can be used to recursively construct the instance in $k$ phases.
The final instance guarantees the integrality gap of $1- (1-1/k)^k + O(k/d)$. 
By setting $k = \sqrt{d}$, we get the integrality gap of $(1-1/e + O(1/\sqrt{d}))$ as desired\footnote{By analyzing the Taylor's series expansion of $1/e - (1-1/k)^k$, we get the term $\frac{1}{2e k} + O(1/k^2)$}.   
\end{proof}
\end{appendixextra}

\section{Hartke's Constraints}

\label{sec:newlp}

\newcommand{\vset}{{\mathcal{V}}}
Due to the integrality gap result in the previous section, there is no hope to improve the best known algorithms via the canonical LP relaxation. 
Hartke~\cite{Hartke2006} suggested adding the following constraints to narrow down the integrality gap of the LP. 
\[\sum_{ u\in P_v \cup (T_v \cap L_j)} x_u \leq 1 \mbox{ for all vertex $v \in V(T)$ and layer $L_j$ below the layer of $v$} \] 

We write the new LP with these constraints below: 
\vspace{0.1in}

\noindent
\framebox[\textwidth]{ 
\begin{minipage}[b]{0.99\textwidth}
\vspace{-1em}
\begin{eqnarray*} 
 \mbox{(LP')} \\
  &\max & \sum_{v \in V} y_v\\
  && \sum_{u \in P_v \cup (T_v \cap L_j)} x_u \leq 1 \mbox{ for all layer $j$ below vertex $v$} \\ 
  &&  y_v \leq \sum_{u \in P_v} x_u \mbox{ for all $v\in V$}\\
  && x_v, y_v \in [0,1] \mbox{ for all $v$ }  
\end{eqnarray*}
\end{minipage}
}

\begin{proposition} 
\label{prop:setyvalues}
Given the values $\set{x_v}_{v \in V(T)}$ that satisfy the first set of constraints, then the solution $(x,y)$ defined by $y_v = \sum_{u \in P_v} x_v$ is feasible for (LP') and at least as good as any other feasible $(x,y')$.  
\end{proposition}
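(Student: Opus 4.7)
The proposition has two parts: feasibility of the candidate $(x,y)$, and that it dominates any other feasible completion. I will handle each in turn, with almost all the work going into feasibility.

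For feasibility, the constraint $y_v \leq \sum_{u \in P_v} x_u$ is satisfied by equality by construction, and the Hartke-style constraints involve only the $x$-variables, which we assume satisfy them. So the only genuine check is the box constraint $y_v \in [0,1]$. Non-negativity is immediate from $x \geq 0$, so the task reduces to showing $\sum_{u \in P_v} x_u \leq 1$ for every $v$. The plan is to obtain this from the Hartke constraint
\[
\sum_{u \in P_w \,\cup\, (T_w \cap L_j)} x_u \leq 1
\]
for a suitably chosen pair $(w, L_j)$. The clean case is when $v$ itself has a layer $L_j$ strictly below it in $T$: applying the constraint at $w=v$ gives $\sum_{u \in P_v} x_u + \sum_{u \in T_v \cap L_j} x_u \leq 1$, which already implies the bound since the second sum is nonnegative.

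The one step that needs care is when $v$ is a leaf (no layer below $v$ exists within $T$). Here I would apply the constraint at the parent $w$ of $v$ with $L_j$ taken to be the layer containing $v$. Then $P_w \cup (T_w \cap L_j)$ contains all of $P_w$ together with $v$ itself (since $v \in T_w \cap L_j$), so expanding and using $P_v = P_w \cup \{v\}$ gives $\sum_{u \in P_v} x_u \leq 1$ after discarding the extra nonnegative terms contributed by other siblings of $v$ in $T_w \cap L_j$. If instead $v$ is the root, then $P_v = \{v\}$ and $x_v \leq 1$ by the box constraint on $x$. This exhausts all cases and gives $y_v \in [0,1]$, completing feasibility.

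For the optimality part, let $(x, y')$ be any other feasible solution for (LP') with the same $x$-values. The constraint $y'_v \leq \sum_{u \in P_v} x_u$ says exactly $y'_v \leq y_v$ pointwise. Since the objective $\sum_{v} y_v$ is monotone in each coordinate and we are maximizing, we conclude $\sum_v y_v \geq \sum_v y'_v$, as desired.

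The only real obstacle is the boundary case of leaves (and, strictly, the root) where the Hartke constraint is not directly applicable at $v$ itself; the workaround is to apply it one level up in the tree. Everything else reduces to rewriting the constraints.
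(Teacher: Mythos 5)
Your proof is correct and complete. The paper states Proposition~\ref{prop:setyvalues} without proof, treating it as a straightforward observation; your argument---obtaining $y_v \leq 1$ from the Hartke constraint at $v$ when a layer exists below $v$, applying it at the parent when $v$ is a leaf, and using pointwise domination $y'_v \leq \sum_{u\in P_v} x_u = y_v$ for optimality---is exactly the natural way to fill it in, and you correctly isolate the leaf/root boundary case as the only step requiring any care.
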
 

In this section, we study the power of this LP and provide three evidences that it may be stronger than (LP-1). 

\subsection{New properties of extreme points}
In this section, we show that Finbow et al.~tractable instances~\cite{FinbowG09} admit a polynomial time exact algorithm via (LP') (in fact, any optimal extreme point for (LP') is integral.) 
In contrast, we show that (LP-1) contains an extreme point that is not integral.

We first present the following structural lemma. 
\begin{lemma}
\label{lem: extreme}  
Let $({\bf x},{\bf y})$ be an optimal extreme point for (LP') on instance $T$ rooted at $s$. Suppose $s$ has two children, denoted by $a$ and $b$.  
Then $x_a, x_b \in \set{0,1}$.  
\end{lemma}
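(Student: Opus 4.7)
My plan is, by contradiction, to suppose $x_a \in (0, 1)$ and exhibit $(x, y)$ as a non-trivial convex combination of two other feasible solutions of (LP'), contradicting extremity. The Hartke constraint at $v = s$, $L_1$ forces $x_a + x_b \leq 1$, so in particular $x_b < 1$. I will handle the main case where the layer-$L_1$ constraint is tight, i.e.\ $x_a + x_b = 1$; the complementary case $x_a + x_b < 1$ is discussed at the end.

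Guided by the intuition that $(x, y)$ should be the expectation of the two ``conditional'' solutions ``save $a$ at layer $1$'' and ``save $b$ at layer $1$'' taken with probabilities $x_a$ and $x_b$, I define $(x^{(1)}, y^{(1)})$ by $x^{(1)}_a = 1$, $x^{(1)}_b = 0$, $x^{(1)}_u = 0$ for every $u \in T_a \setminus \{a\}$ (which is fully protected by $a$), and $x^{(1)}_u = x_u/(1 - x_b) = x_u/x_a$ for every $u \in T_b \setminus \{b\}$. The solution $(x^{(2)}, y^{(2)})$ is defined symmetrically, with the roles of $a$ and $b$ swapped. The $y$-values are then determined by Proposition~\ref{prop:setyvalues}.

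The next step is to verify feasibility and the combination identity. The feasibility of $(x^{(1)})$ rests on a single observation: the original Hartke constraint $x_b + \sum_{u \in T_b \cap L_j} x_u \leq 1$ rewrites as $\sum_{u \in T_b \cap L_j} x_u \leq x_a$, so scaling by $1/x_a$ gives $\sum_{u \in T_b \cap L_j} x^{(1)}_u \leq 1$, handling simultaneously the layer-$L_j$ and Hartke-at-$v=b$ constraints in $(x^{(1)})$. The Hartke constraint at any $v \in T_b \setminus \{b\}$ follows by the analogous rewriting of the original constraint. On the $T_a$ side, every coordinate of $x^{(1)}$ except $x^{(1)}_a = 1$ is zero, so every constraint involving $a$ evaluates to exactly $1$. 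Feasibility of $(x^{(2)})$ is symmetric. The identity $(x, y) = x_a (x^{(1)}, y^{(1)}) + x_b (x^{(2)}, y^{(2)})$ is then a direct coordinate-wise check; for instance, for $u \in T_a \setminus \{a\}$ we have $x_a \cdot 0 + x_b \cdot (x_u/x_b) = x_u$, using $1 - x_a = x_b$. Since $(x^{(1)}) \neq (x^{(2)})$, $(x, y)$ is not an extreme point, a contradiction.

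The hard part is the case $x_a + x_b < 1$. The natural three-way extension that adds a ``save nothing at layer $1$'' solution $(x^{(0)})$ with weight $1 - x_a - x_b$ and $x^{(0)}_u = x_u/(1-x_a)$ on $T_a$, $x^{(0)}_u = x_u/(1-x_b)$ on $T_b$, runs into a feasibility obstruction: the layer-$L_j$ constraint in $(x^{(0)})$ becomes $A_j/(1-x_a) + B_j/(1-x_b)$ (with $A_j = \sum_{u \in T_a \cap L_j} x_u$ and $B_j$ defined analogously), and this expression can exceed $1$ exactly when the layer is tight in $(x, y)$ but neither Hartke at $v=a$ nor at $v=b$ is. I anticipate resolving this either by a finer decomposition that scales within $T_a$ and $T_b$ according to the pattern of tight constraints, or by invoking optimality of $(x, y)$ to rule out $x_a + x_b < 1$ altogether: since $|T_u| < |T_a|$ for every proper descendant $u$ of $a$, any local swap that increases $x_a$ at the expense of some $x_u$ in a tight Hartke on the $a$-side strictly improves the objective $\sum_u x_u |T_u|$, which ought to be impossible at an optimum when layer $L_1$ has slack.
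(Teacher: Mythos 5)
Your decomposition for the case $x_a + x_b = 1$ is essentially the one the paper uses, up to swapping the roles of $a$ and $b$: the paper's $({\bf x}')$ sets $x'_b = 1$ and rescales $T_a$ by $1/(1-x_a)$, which is your $(x^{(2)})$, and the feasibility check via the Hartke constraint at $v=b$ (rewriting it as $\sum_{u \in T_b \cap L_j} x_u \leq 1-x_b$ and dividing) is the same step. That half of the argument is correct and complete.

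The genuine gap is the case $x_a + x_b < 1$, which you explicitly leave open. You correctly guess that optimality should be invoked to rule it out --- this is exactly what the paper's Claim~\ref{claim:extremepoint} does --- but your local-swap sketch has real holes. Increasing $x_a$ by $\epsilon$ tightens \emph{every} Hartke constraint $\sum_{u \in P_v \cup (T_v \cap L_j)} x_u \leq 1$ with $v \in T_a$, since $a \in P_v$ for all such $v$; a compensating decrease of a single $x_u$ only relaxes the constraints in which that particular $u$ appears, so if several constraints are tight in disjoint layers you cannot repair them all with one swap, and you would have to exhibit a genuinely feasible improving direction, which requires an argument about the structure of the set of tight constraints rather than a one-line observation. (You also implicitly use $\sum_v y_v = \sum_u x_u |T_u|$, which needs $\sum_{u \in P_w} x_u \leq 1$ for every $w$; this does follow from the Hartke constraint at the parent of $w$, but it should be stated.) The paper avoids all of this bookkeeping: it reuses the very two solutions you already constructed, forms the convex combination with weights $\frac{1-x_a}{2-x_a-x_b}$ and $\frac{1-x_b}{2-x_a-x_b}$, and computes its objective to be $\frac{1}{2-x_a-x_b}\left((1-x_a-x_b)(|T_a|+|T_b|) + \sum_v y_v\right)$, which strictly exceeds $\sum_v y_v$ whenever $x_a + x_b < 1$ because $\sum_v y_v < |T_a|+|T_b|$ (e.g.\ $y_a = x_a < 1$). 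That single computation contradicts optimality directly; you should replace the local-swap sketch with it.
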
  

\begin{appendixproof}{sec:newlp}{\subsection{Proof of Lemma \ref{lem: extreme}}}
Suppose that $x_a, x_b \in (0,1)$. 
We will define two solutions $({\bf x'},{\bf y'})$ and $({\bf x''}, {\bf y''})$ and derive that $({\bf x},{\bf y})$ can be written as a convex combination of $({\bf x'},{\bf y'})$ and $({\bf x''}, {\bf y''})$, a contradiction.

First, we define $({\bf x}',{\bf y}')$ by setting $x'_b = 1, x'_a=0$. For each vertex $v \in T_b$, we set $x'_v = 0$. For each vertex $v \in T_a$, we define $x'_v = x_v/(1-x_a)$.
We verify that $x'$ is feasible for (LP'): For each $v \in T_a$ and any layer $L_j$ below $v$, $\sum_{u \in P_v} x'_u + \sum_{u \in T_v \cap L_j} x'_u = \frac{\paren{\sum_{u \in P_v} x_u} - x_a}{(1-x_a)} + \frac{\sum_{u \in T_v \cap L_j} x_u}{(1-x_a)} \leq \frac{\paren{\sum_{u \in P_v \cup (T_v \cap L_j)} x_u} -x_a}{(1-x_a)} \leq 1$ (the last inequality is due to the fact that ${\bf x}$ is feasible).
The constraint is obviously satisfied for all $v \in T_b$.  
For the root node $v =s$, we have $\sum_{u \in L_j} x'_u = \frac{\paren{\sum_{u\ \in (L_j \cap T_a)} x_u} - x_a}{(1-x_a)} \leq 1$.  

We define $({\bf x}'',{\bf y}'')$ analogously: $x''_b =0, x''_a =1$. For each vertex $v \in T_a$, we set $x''_v = 0$, and for each $v \in T_b$, we define $x''_v = x_v/ (1-x_b)$. 
It can be checked similarly that $({\bf x}'',{\bf y}'')$ is a feasible solution.  

\begin{claim} 
If ${\bf x}$ is an optimal extreme point, then $x_a + x_b =1$. 
\label{claim:extremepoint}
\end{claim}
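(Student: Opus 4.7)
The plan is to prove the claim by contradiction, using only the optimality of $\mathbf{x}$ (extremality turns out not to be needed beyond the hypothesis $x_a, x_b \in (0,1)$ inherited from the lemma). Suppose toward contradiction that $c := x_a + x_b < 1$. I will construct a feasible solution $\mathbf{z}$ to (LP') whose objective strictly exceeds $\text{val}(\mathbf{x})$, where by Proposition~\ref{prop:setyvalues} the optimal $y$-values are $y_v = \sum_{u \in P_v} x_u$ and $\tilde y_v = \sum_{u \in P_v} z_u$.

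The construction uses asymmetric scaling on the two subtrees hanging off $s$. Set $z_a := x_a/c$ and $z_b := x_b/c$, so that $z_a + z_b = 1$ exactly. For $v \in T_a \setminus \{a\}$, set $z_v := \alpha\,x_v$ where $\alpha := x_b/(c(1-x_a))$; symmetrically, for $v \in T_b \setminus \{b\}$, set $z_v := \beta\,x_v$ with $\beta := x_a/(c(1-x_b))$. The factor $\alpha$ is chosen precisely so that every Hartke constraint at a vertex $v \in T_a$, of the form $x_a + S \leq 1$ (where $S$ collects the remaining LHS terms), that was originally tight remains tight under the rescaling.

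I will verify feasibility in three parts: (i) the layer-$1$ constraint at $s$ holds with equality by construction; (ii) the layer-$j$ constraints at $s$ for $j \geq 2$ reduce to $\alpha,\beta \leq 1$, which follows from the elementary inequality $x_a(1-c) \geq 0$; (iii) each Hartke constraint at $v \in T_a$ has new LHS $x_a/c + \alpha S$, which is $\leq 1$ by the original bound $S \leq 1 - x_a$ together with the definition of $\alpha$. Symmetric arguments handle $T_b$. For the objective, writing $\sigma_v := \sum_{u \in (P_v \cap T_a) \setminus \{a\}} x_u$ for $v \in T_a$, a direct computation yields
\[
\tilde y_v - y_v \;=\; \frac{x_a(1-c)}{c}\cdot\frac{(1-x_a) - \sigma_v}{1-x_a},
\]
which is nonnegative by a Hartke constraint at $v$ (or at $v$'s parent, when $v$ is a leaf) giving $x_a + \sigma_v \leq 1$. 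At $v = a$, where $\sigma_a = 0$ and $x_a > 0$, the inequality is strict; and analogously at $v = b$ by the symmetric calculation. Summing over all vertices yields $\text{val}(\mathbf{z}) > \text{val}(\mathbf{x})$, contradicting optimality.

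The main obstacle, in my view, is identifying the correct scaling. Naive attempts break: uniformly scaling all of $\mathbf{x}$ by $1/c$ violates root layer-$j$ constraints that were tight, and simply pushing $x_a$ up by $1-c$ violates tight Hartke constraints at descendants of $a$. The factors $\alpha, \beta$ above are the unique choice that preserves every originally tight Hartke constraint inside $T_a$ (respectively $T_b$) while enforcing $z_a + z_b = 1$, and they are exactly what makes the rebalancing simultaneously feasible and objective-improving thanks to the larger share allotted to the high-coefficient variables $z_a, z_b$.
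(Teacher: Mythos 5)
Your proof is correct and establishes the claim by a genuinely different computation than the paper's, though both are in the same spirit of building a strictly better feasible rescaling. The paper reuses the auxiliary solutions $\mathbf{x}'$ and $\mathbf{x}''$ already set up in the surrounding lemma (push all of layer-one's mass to $b$, respectively $a$, and rescale the opposite subtree by $1/(1-x_a)$, resp.\ $1/(1-x_b)$), and takes their convex combination with weights $\frac{1-x_a}{2-c}$ and $\frac{1-x_b}{2-c}$ where $c=x_a+x_b$; the resulting objective gain is argued via the crude global bound $|T_a|+|T_b|>\sum_v y_v$, which holds because not every $y_v$ can equal $1$. Your construction instead rebalances to $z_a=x_a/c$, $z_b=x_b/c$ and applies asymmetric subtree scalings $\alpha = x_b/(c(1-x_a))$ and $\beta = x_a/(c(1-x_b))$, calibrated so that Hartke constraints inside $T_a$ (resp.\ $T_b$) that were tight remain tight. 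This yields a clean per-vertex gain $\tilde y_v - y_v = \frac{x_a(1-c)}{c}\cdot\frac{(1-x_a)-\sigma_v}{1-x_a}\geq 0$ with strict gains at $v=a$ and $v=b$, so the contradiction is local rather than aggregate. One point you should make explicit: the nonnegativity of $(1-x_a)-\sigma_v$, equivalently $\sum_{u\in P_v}x_u\leq 1$, is itself a consequence of a Hartke constraint (at $v$ when $v$ is internal, or at $v$'s parent with the layer of $v$ when $v$ is a leaf); this is precisely where (LP') is stronger than (LP-1), and it is also silently needed to know that $\tilde y_v\leq 1$ so Proposition~\ref{prop:setyvalues} applies to $\mathbf{z}$. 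You are right that only optimality, not extremality, is used to obtain $x_a+x_b=1$; the paper's proof is terser because it piggybacks on machinery it needs anyway for the enclosing lemma, whereas yours is more self-contained at the cost of slightly heavier bookkeeping in the choice of $\alpha,\beta$.
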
 

\begin{proof}Observe that, for each $v \in T_b$, $y'_v = 1$ and for each $v \in T_a$, $y'_v = \frac{y_v  - x_a}{1-x_a}$.  
The objective value of ${\bf x'}$ is $|T_b| + \sum_{v \in T_a} y'_v = |T_b| + \frac{1}{(1-x_a)} \sum_{v \in T_a} (y_v - x_a) = |T_b| + \frac{\sum_{v \in T_a} y_v}{(1-x_a)} - \frac{x_a}{(1-x_a)} |T_a|$.
Similarly, the objective value of solution ${\bf x''}$ is $|T_a| + \frac{1}{(1-x_b)} \sum_{v \in T_b} (y_v - x_b) = |T_a| + \frac{\sum_{v \in T_b} y_v}{(1-x_b)} - \frac{x_b}{(1-x_b)} |T_b|$.
 
Consider the convex combination $\frac{1-x_a}{(2-x_a-x_b)} {\bf x'} + \frac{1-x_b}{(2-x_a-x_b)} {\bf x''}$. 
This solution is feasible and has the objective value of 
$$\frac{1}{(2-x_a-x_b)}\cdot \paren{ (1-x_a -x_b) \paren{|T_a| + |T_b|}+ \sum_{v \in V(T)} y_v}$$
If $x_a + x_b <1$, we apply the fact that $|T_a| + |T_b| > \sum_{v \in V(T)} y_v$ to get the objective of strictly more than $\sum_{v \in V(T)} y_v$, contradicting the fact that $({\bf x}, {\bf y})$ is optimal.  
\end{proof} 

Finally, we define the convex combination by ${\bf z}= (1-x_a) {\bf x'}  + x_a {\bf x''}$.
It can be verified easily that $z_v = x_v$ for all $v \in V(T)$.  
\end{appendixproof} 

\vspace{0.1in} 

\noindent {\bf Finbow et al. Instances:} In this instance, the tree has degree at most $3$ and the root has degree $2$.  
Finbow et al.~\cite{FinbowG09} showed that this is polynomial time solvable. 

\begin{wrapfigure}[11]{r}{0.25\textwidth}
\centering
\includegraphics{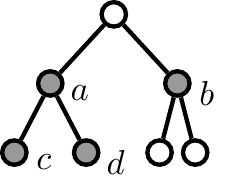}
\caption{Instance with a non-integral extreme point for (LP-1). Gray vertices: $x_v=1/2$; otherwise: $x_v = 0$.}
\label{fig:lp1badcase}
\end{wrapfigure}

\begin{theorem} 
Let $(T,s)$ be an input instance where $T$ has degree at most $3$ and $s$ has degree two. Let $(x,y)$ be a feasible fractional solution for (LP-3). 
Then there is a polynomial time algorithm that saves at least $\sum_{v \in V(T)} y_v$ vertices. 
\label{thm:finbowinst}
\end{theorem}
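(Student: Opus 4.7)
The plan is to iterate Lemma~\ref{lem: extreme} top-down, integrally resolving the root's two children at each level of recursion. Starting from the given feasible $(x,y)$, I would first replace it (in polynomial time, by solving (LP') and moving to a vertex of the optimal face) by an optimal extreme point $(x^*,y^*)$ with objective at least $\sum_v y_v$. We may assume $x^*_s = 0$ since the source is always burning. By Lemma~\ref{lem: extreme} and the Claim in its proof, the two children $a,b$ of $s$ satisfy $\{x^*_a, x^*_b\} = \{0,1\}$; say WLOG $x^*_a = 1$ and $x^*_b = 0$.

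The algorithm now picks $a$ at time $1$, saving all $|T_a|$ vertices of $T_a$; since $y^*_v \leq 1$ this already accounts for at least $\sum_{v \in T_a} y^*_v$ of the LP objective. It then recurses on the subinstance $(T_b,b)$, in which the fire starts at $b$ at the next time step. This is again of the required form: $T_b$ has maximum degree $\leq 3$, and $b$ has degree at most $2$ in $T_b$ (since $s$, its other neighbor in $T$, is no longer present). If $b$ has degree $<2$ in $T_b$, the subproblem is trivial; otherwise it is a proper Finbow et~al.\ instance on a strictly smaller tree.

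The key invariant I would maintain across the recursion is that the restriction $\{x^*_u\}_{u \in V(T_b)}$, completed into a $y$-solution via Proposition~\ref{prop:setyvalues}, is feasible for (LP') on $(T_b,b)$ with objective at least $\sum_{v \in T_b} y^*_v$. Feasibility is immediate, because every constraint of the subinstance is a constraint of the original (LP') restricted to variables in $V(T_b)$. For the objective bound, since $x^*_s = x^*_b = 0$, for every $v \in T_b$ the path sum $\sum_{u \in P_v^T} x^*_u$ equals $\sum_{u \in P_v^{T_b}} x^*_u$, so $y^*_v$ remains a valid value in the subinstance. By induction on $|V(T)|$ the recursion returns an integral strategy saving at least $\sum_{v \in T_b} y^*_v$ vertices of $T_b$, giving a grand total of at least $|T_a| + \sum_{v \in T_b} y^*_v \geq \sum_{v \in T_a} y^*_v + \sum_{v \in T_b} y^*_v = \sum_v y^*_v \geq \sum_v y_v$.

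The main obstacle I expect is the very first step: rigorously arguing that from any feasible $(x,y)$ one can obtain in polynomial time an optimal extreme point of (LP') with at least as large an objective, so that Lemma~\ref{lem: extreme} becomes applicable at the root. This should be routine via standard polyhedral techniques (solve (LP'), then pivot to a vertex of the optimal face without decreasing the objective), but it has to be made explicit. Once that is in hand, the rest of the argument is bookkeeping: handling the low-degree corner cases for $b$, checking that the Hartke constraints on $T_b$ are inherited from those on $T$, and verifying that the LP-value invariant propagates cleanly through the recursion.
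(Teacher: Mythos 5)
Your proof takes essentially the same approach as the paper: induct on $|V(T)|$, use Lemma~\ref{lem: extreme} to force the root's children into $\{0,1\}$, pick the child with value $1$, and recurse on the other subtree. You are somewhat more explicit than the paper about first passing from the given feasible $(x,y)$ to an \emph{optimal extreme point} (which is what Lemma~\ref{lem: extreme} actually requires) and about verifying that the restriction of $x$ to $T_b$ remains feasible for (LP') on $(T_b,b)$, but the underlying argument is the same one the paper uses.
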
  
\begin{appendixproof}{sec:newlp}{\subsection{Proof of Theorem \ref{thm:finbowinst}}}
We prove this by induction on the number of nodes in the tree that, for any tree $(T',s')$ that is a Finbow et al. instance, for any fractional solution $(x,y)$ for (LP'), there is an integral solution $(x',y')$ such that $\sum_{v \in T' \setminus \set{s'}} y'_v = \sum_{v \in T' \setminus \set{s'}} y_v$.  
Let $a$ and $b$ be the children of the root $s$.
From Lemma~\ref{lem: extreme}, assume w.l.o.g. that $x_a = 1$, so we have $\sum_{v \in T_a} y_v = |T_a|$. 
By the induction hypothesis, there is an integral solution $(x',y')$ for the subtree $T_b$ such that $\sum_{v \in T_b} y'_v = \sum_{v \in T_b \setminus \set{b}} y'_v = \sum_{v \in T_b} y_v$.  
The solution $(x',y')$ can be extended to the instance $T$ by defining $x'_a = 1$. 
This solution has the objective value of $|T_a| + \sum_{v \in T_b} y'_b = |T_a| + \sum_{v \in T_b} y_b$, completing the proof. 
\end{appendixproof}

\noindent {\bf Bad instance for (LP-1):} 
We show in Figure \ref{fig:lp1badcase} a Finbow et al. instance as well as a
solution for (LP-1) that is optimal and an extreme point, but not integral.

\begin{appendixextra}{sec:newlp}{\subsection{Analysis of Figure \ref{fig:lp1badcase}}}
\begin{claim}

The solution $(x,y)$ represented in Figure \ref{fig:lp1badcase}, with $y$ defined according to
Proposition \ref{prop:setyvalues}, is an extreme point of this instance for
(LP-1).

\end{claim}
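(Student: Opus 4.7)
The plan is to verify feasibility of the displayed solution $(x,y)$ for (LP-1) and then show it is a vertex of the LP-1 polytope by ruling out any nontrivial convex decomposition. First I would check feasibility: by inspection of the figure each layer $L_j$ contains at most two gray vertices, each with $x_v = 1/2$, so $\sum_{v \in L_j} x_v \leq 1$; setting the $y$-values via Proposition \ref{prop:setyvalues} makes the path constraints $y_v \leq \sum_{u \in P_v} x_u$ and all box constraints hold.

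Second, suppose for contradiction that $(x,y) = \tfrac{1}{2}(x^1,y^1) + \tfrac{1}{2}(x^2,y^2)$ with each $(x^i,y^i)$ feasible but $(x^1,y^1) \neq (x^2,y^2)$. Since $x_v = 0$ at every white vertex and $x^i_v \geq 0$, both $x^1_v$ and $x^2_v$ vanish at every white vertex. Since each layer with two grays has its layer constraint tight at $(x,y)$ and both $x^i$ obey $\sum_{v \in L_j} x^i_v \leq 1$, both $x^i$ must saturate those same layer constraints.

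Third, and this is the crux, I would propagate the equalities downward through the tree. For each pair of sibling gray vertices in a layer, the branching pattern in Figure \ref{fig:lp1badcase} contains a descendant $w$ whose path sum $\sum_{u \in P_w} x_u$ equals exactly $1$ and whose $y_w = 1$; hence both $y_w \leq 1$ and $y_w \leq \sum_{u \in P_w} x_u$ are tight. Averaging forces the same path sum to equal $1$ in both $(x^i, y^i)$, and combined with the already-established tight layer equalities, this uniquely fixes each gray $x^i_v$ to $1/2$. After all gray $x$-values are pinned down, Proposition \ref{prop:setyvalues} (applied to each $(x^i, y^i)$, which is optimal in its own $y$-coordinates since the averaging and tightness of $y_v \leq \sum x_u$ leave no slack) forces $y^i = y$, contradicting the assumption that the decomposition was nontrivial.

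The main obstacle is the third step: one must verify for the concrete tree in Figure \ref{fig:lp1badcase} that each layer of sibling gray vertices is coupled to a downstream vertex $w$ with $\sum_{u \in P_w} x_u = 1$, so that mass cannot be shifted between the siblings without violating either feasibility or the averaging identity. Once this coupling is exhibited the extreme-point property follows. (It is also this coupling that will fail once Hartke's constraints are added, which is the reason the analogous statement for (LP') holds only with an integral extreme point, consistently with Lemma \ref{lem: extreme}.)
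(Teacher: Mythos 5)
Your proposal is correct and follows essentially the same route as the paper: assume a convex decomposition, observe that the tight box constraints ($y_w = 1$ and $x_v = 0$), the tight layer constraints, and the saturated path constraints $\sum_{u\in P_w} x_u = 1$ must all be preserved in each component, and then use them to pin down every $x^i_v$ to $1/2$ and hence every $y^i_v$ as well. The paper carries out the pinning-down arithmetic explicitly (deriving $x'_a \geq 1/2$ from $x'_a + x'_c \geq 1$, $x'_a + x'_d \geq 1$, $x'_c + x'_d \leq 1$, then invoking convexity, and then handling $b$ via $x'_a + x'_b \leq 1$ and the analogous lower bound on the $b$-side); your third step asserts this uniqueness as a consequence of the collected tight equalities, which is the same argument stated more compactly and would need to be written out for the concrete tree to be fully rigorous.
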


\begin{proof}

Suppose (for contradiction) that $(x,y)$ is not an extreme point. Then, there
are distinct solutions $(x',y')$, $(x'',y'')$ and $\alpha \in (0,1)$ such that
$(x,y) = \alpha(x',y') + (1-\alpha)(x'',y'')$. Since $y_c = 1$ and $y'_c,
y''_c \leq 1$, then $y'_c = y''_c = 1$, and likewise, $y'_d = y''_d = 1$.
Combining that $x'_a + x'_c = y'_c = 1$ with $x'_a + x'_d = y'_d = 1$ and
$x'_c + x'_d \leq 1$, we conclude that $x'_a \geq 1/2$. Similarly, we get that
$x''_a \geq 1/2$, which implies that $x'_a = x''_a = 1/2$.

Similar reasoning using that $x'_a + x'_b \leq 1$ allows us to conclude that
$x'_b = x''_b = 1/2$, and thus, $(x',y') = (x'',y'') = (x,y)$, which
contradicts our assumption.
\end{proof}
\end{appendixextra}

\subsection{Rounding \texorpdfstring{$1/2$}{1/2}-integral Solutions}
\label{sec:half-int}
We say that the LP solution $(x,y)$ is $(1/k)$-integral if, for all $v$, we have $x_v = r_v/k$ for some integer $r_v
\in \{0,\ldots,k\}$.  
By standard LP theory, one can assume that the LP solution is $(1/k)$-integral for some polynomially large integer $k$.

In this section, we consider the case when $k=2$ ($1/2$-integral LP solutions).  
From Theorem~\ref{thm: gap 1/k}, (LP-1) is not strong enough to obtain a $3/4+\epsilon$ approximation algorithm, for any $\epsilon > 0$. 
Here, we show a $5/6$ approximation algorithm based on rounding (LP').  

\begin{theorem} 
Given a solution $(x,y)$ for (LP') that is $1/2$-integral, there is a polynomial time algorithm that produces a solution of cost $5/6\,\sum_{v \in V(T)} y_v$.  
\label{thm:12intalgo}
\end{theorem}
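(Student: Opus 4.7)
The plan is to exploit the structural consequences of Hartke's constraint in the $1/2$-integral case and round by a two-phase dependent procedure.

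First I would reduce to $x_v \in \{0, 1/2\}$. If $x_v = 1$ at some layer $L_j$, the layer constraint forces $x_u = 0$ for every other $u \in L_j$, so we may add $v$ to the strategy: this is layer-feasible, saves all of $T_v$, and in fact over-recovers the LP mass $\sum_{u \in T_v} y_u \le |T_v|$. The complementary forest $T \setminus T_v$ inherits a feasible $1/2$-integral solution for (LP'), so we recurse on it.

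Next, I would extract structural information from (LP'). The layer constraint yields $|V_{1/2} \cap L_j| \leq 2$, where $V_{1/2} = \{v : x_v = 1/2\}$, and Hartke's inequality $x_v + \sum_{u \in T_v \cap L_j} x_u \leq 1$ together with $x_v = 1/2$ forces at most one half-integral descendant of $v$ per layer below $v$. Hence the ``nearest half-integral ancestor'' relation endows $V_{1/2}$ with a forest $H$ in which the children of each node lie in pairwise distinct layers. Using Proposition~\ref{prop:setyvalues}, every vertex $t$ with $y_t > 0$ has $y_t \in \{1/2, 1\}$: $y_t = 1$ iff $P_t$ meets $V_{1/2}$ in at least two vertices, and $y_t = 1/2$ iff it meets $V_{1/2}$ in exactly one.

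For the rounding I would run a two-phase dependent scheme (as alluded to in the introduction). Phase 1 scans layers top-down and, at each layer carrying two half-integrals $u, u'$, commits one of them based on a weight that compares the $y$-mass of $T_u$ and $T_{u'}$ against what Phase 2 can still recover. Phase 2 processes the non-committed half-integrals: in the subtree of each abandoned $u'$, $H$ has at most one H-descendant of $u'$ per layer, so in a subsequent layer (where no Phase 1 pick has been made in that branch) Phase 2 can commit such a descendant, thereby saving its subtree. Superimposing both phases yields a layer-feasible integer strategy.

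The main obstacle is establishing the sharp constant $5/6$. A uniformly random coloring of the layer-pairs in $V_{1/2}$ only achieves $3/4$, so the improvement to $5/6$ must come from a correlated, structurally-aware choice in Phase 1. I expect the proof to proceed by a charging argument that decomposes $\sum_v y_v$ over connected components of the conflict graph on $V_{1/2}$ (edges joining same-layer pairs and H-parent/child pairs) and shows that each such cluster is rounded with loss at most $1/6$ of its contribution. The worst case is precisely an ``odd-cycle'' cluster in this conflict graph, matching the $5/6$ integrality gap instance presented in the next part of the paper.
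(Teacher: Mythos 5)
Your setup is correct as far as it goes: the reduction to $x_v\in\{0,1/2\}$, the observation that Hartke's constraint forces at most one half-integral descendant of $v$ per layer below $v$ (so the same-layer pair in $V_{1/2}$ never lives inside a common proper subtree), the classification $y_t\in\{1/2,1\}$ according to how many half-integral vertices lie on $P_t$, and the correct observation that a naive uniform coin at each two-element layer only reaches $3/4$. But the proof stops precisely where the theorem begins. You say the $5/6$ must come from a ``correlated, structurally-aware choice in Phase~1'' and ``a charging argument over connected components of the conflict graph'' — you name neither the rule nor the charging scheme, and without them there is nothing to verify. The speculation that the worst case is an odd cycle in the conflict graph is not backed by any calculation.

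What the paper actually does is simpler and fully concrete. It classifies each layer $L_j$ by how the (at most two) half-integral vertices split into $A'_j$ (no half-integral strict ancestor, hence $y=1/2$) and $A''_j$ (one half-integral strict ancestor, hence $y=1$), and then picks: uniformly in $A'_j$ if $|A'_j|=2$; uniformly in $A''_j$ if $|A''_j|=2$; and with the \emph{biased} probabilities $1/3$ on $A'_j$ and $2/3$ on $A''_j$ when $|A'_j|=|A''_j|=1$. The analysis is an induction on layers proving per-vertex save probabilities $2/3$ (for the $A'$ vertex, which only needs $5/12$ since $y=1/2$) and $5/6$ (for the $A''$ vertex); the case $|A''_j|=2$ also uses that Hartke's constraint makes the two ancestors distinct, so the relevant burn events are either independent or mutually exclusive, giving $P[\text{both ancestors burn}]\le 1/4$. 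Your proposal never arrives at this $1/3$--$2/3$ rule, never sets up the induction, and never shows that your conflict-graph clusters lose at most $1/6$ of their mass. As written, the argument has a genuine hole at the single step that carries all the weight.
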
 

We believe that the extreme points in some interesting special cases will be $1/2$-integral. 

\vspace{0.1in} 

\noindent {\bf Algorithm's Description:} 
Initially, $\uset = \emptyset$. 
Our algorithm considers the layers $L_1,\ldots, L_n$ in this order.
When the algorithm looks at layer $L_j$, it picks a vertex $u_j$ and adds it to $\uset$, as follows. 
Consider $A_j \subseteq L_j$, where $A_j = \set{v \in L_j: x_v >0}$.
Let $A'_j \subseteq A_j$ contain vertices $v$ such that there is no ancestor of $v$ that belongs to $A_{j'}$ for some $j' <j$, and $A''_j = A_j \setminus A'_j$, i.e. for each $v \in A''_j$, there is another vertex $u \in A_{j'}$ for some $j' <j$ such that $u$ is an ancestor of $v$.
We choose the vertex $u_j$ based on the following rules: 
\begin{itemize} 
\item If there is only one $v \in A_j$, such that $v$ is not saved by $\uset$ so far, choose $u_j=v$. 

\item Otherwise, if $|A'_j| =2$, pick $u_j$ at random from $A'_j$ with uniform probability.
Similarly, if $|A''_j| =2$, pick $u_j$ at random from $A''_j$.  

\item Otherwise, we have the case $|A'_j| = |A''_j| = 1$. In this case, we pick vertex $u_j$ from $A'_j$ with probability $1/3$; otherwise, we take from $A''_j$.  
\end{itemize}

\begin{appendixextra}{sec:newlp}{\subsection{Analysis of Theorem \ref{thm:12intalgo}}}
\textbf{Analysis: }    
Below, we argue that each vertex $v \in V(T): x_v >0$ is saved with probability at least $(5/6) y_v$. 
It is clear that this implies the theorem: Consider a vertex $v': x_{v'}=0$.
If $y_{v'}=0$, we are immediately done. Otherwise, consider the bottommost ancestor $v$ of $v'$ such that $x_v >0$. 
Since $y_v = y_{v'}$, the probability that $v'$ is saved is the same as that of $v$, which is at least $(5/6) y_v$.    

We analyze a number of cases.
Consider a layer $L_j$ such that $|A_j|=1$. 
Such a vertex $v \in A_j$ is saved with probability $1$. 

Next, consider a layer $L_j$ such that $|A'_j| =2$. 
Each vertex $v \in A'_j$ is saved with probability $1/2$ and $y_v = 1/2$. 
So, in this case, the probability of saving $v$ is more than $(5/6) y_v$.  

\begin{lemma}
\label{one-and-one} 
Let $L_j$ be the layer such that $|A'_j| = |A''_j| =1$. 
Then the vertex $u \in A'_j$ is saved with probability $2/3 \geq (5/6) y_u$ and vertex $v \in A''_j$ is saved with probability $5/6$. 
\end{lemma}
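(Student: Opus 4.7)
The plan is to reduce the lemma to a single probabilistic claim about the ancestor of $v$, and then establish that claim by induction on layers, with Hartke's constraint providing the structural control that makes the induction close.

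First I would use the Hartke constraint at $w$, the topmost ancestor of $v$ with $x_w>0$, applied to layer $L_j$: since $x_w = x_v = 1/2$ and both $w \in P_w$ and $v \in T_w \cap L_j$ contribute, the constraint is tight, forcing every other term to be zero. In particular, $w$ is the unique ancestor of $v$ in $A$ (any intermediate ancestor $v'$ with $x_{v'}>0$ would make the Hartke constraint at $v'$ for $L_j$ sum to at least $3/2$, a contradiction), so $y_v = 1$. For $u \in A'_j$, no ancestor is in $A$ at all, so $y_u = 1/2$ and $u$ is saved iff the algorithm picks $u_j = u$.

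Next I would condition on whether $w$ was picked at $L_{j'}$. If it was, then $v$ enters $L_j$ already saved, so case~1 of the algorithm applies and $u$ is picked deterministically. If not, then $v$ is not saved, case~3 applies, and the algorithm picks $u$ with probability $1/3$ and $v$ with probability $2/3$. Writing $p = \Pr[w \text{ picked at } L_{j'}]$, this gives $\Pr[u \text{ saved}] = p + (1-p)/3 = 1/3 + 2p/3$ and $\Pr[v \text{ saved}] = p + 2(1-p)/3 = 2/3 + p/3$. The lemma then reduces to the single claim $p \geq 1/2$.

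To establish $p \geq 1/2$, I would prove by induction on layers the stronger invariant: for every vertex $w \in A'_{j'}$ with $x_w = 1/2$, the probability that the algorithm picks $w$ at $L_{j'}$ is at least $1/2$. By the layer constraint and $1/2$-integrality, $|A_{j'}| \in \{1,2\}$. If $|A_{j'}| = 1$, case~1 picks $w$ with probability $1$. If $|A_{j'}| = 2$ with the other vertex also in $A'_{j'}$, case~2 picks uniformly, giving exactly $1/2$. The delicate sub-case is when the other vertex $w' \in A''_{j'}$, so $w'$ has an ancestor $w''$ with $x_{w''} > 0$; repeating the constraint trick of the first paragraph at $w''$ for layer $L_{j'}$ forces $w''$ to have no ancestors in $A$, so $w'' \in A'_{j''}$ for its layer, and the induction hypothesis gives $\Pr[w'' \text{ picked at } L_{j''}] \geq 1/2$. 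Since $w''$ is the unique ancestor of $w'$ in $A$ (again by the same Hartke argument), this equals $\Pr[w' \text{ saved before } L_{j'}]$, so case~3 (conditioning on whether $w'$ is already saved) yields $\Pr[w \text{ picked at } L_{j'}] \geq (1/2)\cdot 1 + (1/2)\cdot (1/3) = 2/3 \geq 1/2$.

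The main obstacle is this inductive step: one must show that despite a potentially deep chain of $A''$-to-$A'$ dependencies up the tree, the picking probability never dips below $1/2$. The point is that Hartke's constraint forces every such dependency to bottom out in a ``fresh'' ancestor in $A'$ at its own layer, so the induction closes. Without Hartke's constraint, a chain of dependencies could in principle drive the probability arbitrarily low, which is precisely why (LP') is needed here and (LP-1) would not suffice.
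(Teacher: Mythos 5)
Your proof is correct and follows essentially the same layer-by-layer induction the paper uses: condition on whether the unique $A$-ancestor of the $A''_j$ vertex was picked, then recurse on that ancestor's layer. The one piece you supply that the paper leaves implicit is the explicit invocation of Hartke's constraint (at the intermediate vertex $v'$, for layer $L_j$) to show that the $A''_j$ vertex has a \emph{unique} ancestor in $A$ and that this ancestor is necessarily in $A'$; the paper simply writes ``the ancestor $v'$'' and in its case analysis only ever places $v'$ in a $|A'_{j'}|=2$ layer or as the $A'$ vertex of a $(1,1)$-layer, silently assuming the structural fact you prove. Your inductive invariant ($\Pr[\text{picked}] \ge 1/2$ for every $A'$ vertex with $x=1/2$) is stated more uniformly and with a weaker bound than the paper's ($\ge 2/3$, restricted to $A'$ vertices in $(1,1)$-layers), but the step $\tfrac12\cdot 1 + \tfrac12\cdot\tfrac13 = \tfrac23 \ge \tfrac12$ closes it with room to spare, so both formulations work; they are really the same argument packaged slightly differently.
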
 

\begin{proof} 
Let $v' \in A_{j'}$ be the ancestor of $v$ in some layer above $A_j$. 
The fact that $v$ has not been saved means that $v'$ is not picked by the algorithm, when it processed $A_{j'}$. 

We prove the lemma by induction on the value of $j$. 
For the base case, let $L_j$ be the first layer such that $|A'_j| = |A''_j| =1$.
This means that the layer $L_{j'}$ must have $|A'_{j'}| = 2$, and therefore the probability of $v'$ being saved is at least $1/2$. 
Vertex $u$ is not saved only if both $v'$ and $u$ are not picked, and this happens with probability $1/2 \cdot 2/3 = 1/3$.
Hence, vertex $u$ is saved with probability $2/3$ as desired.   
Consider now the base case for vertex $v$, which is not saved only if $v'$ is not saved and $u$ is picked by the algorithm among $\set{u,v}$.
This happens with probability $1/2 \cdot 1/3 = 1/6$, thus completing the proof of the base case.

 For the purpose of induction, we now assume that, for all layer $L_i$ above $L_j$ such that $|A'_{i}| = |A''_i| =1$, the probability that the algorithm saves the vertex in $A'_i$ is at least $2/3$.
Since the vertex $u$ is not saved only if $v'$ is not saved, this probability is either $1/2$ or $1/3$ depending on the layer to which $v'$ belongs. If it is $1/3$, we are done; otherwise, the probability is at most $1/2 \cdot 2/3 = 1/3$.   
Now consider vertex $v$, which is not saved only if $v'$ is not saved and $u$ is picked at $L_j$. This happens with probability at most $1/2 \cdot 1/3 = 1/6$.  
\end{proof}

\begin{lemma}
Let $L_j$ be a layer such that $A''_j = \set{u,v}$ (containing two vertices). Then each such vertex is saved with probability at least $5/6$. 
\end{lemma}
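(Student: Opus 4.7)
The goal is to show each of $u, v \in A''_j$ is saved with probability at least $5/6$, i.e., burns with probability at most $1/6$. Let $u^*$ (resp.\ $v^*$) denote the closest $A$-ancestor of $u$ (resp.\ $v$); both exist because $u, v \in A''_j$. The main structural step is to apply Hartke's constraint with vertex $u^*$ and layer $L_j$: $\sum_{w \in P_{u^*}} x_w + \sum_{w \in T_{u^*} \cap L_j} x_w \leq 1$. Since $x_{u^*} \geq 1/2$ and $x_u = 1/2$ by $1/2$-integrality, this inequality must be tight, which forces (a) every strict ancestor of $u^*$ to have $x = 0$, so $u^*$ lies in $A'_{j_u}$ and is the unique $A$-ancestor of $u$; and (b) $v \notin T_{u^*}$. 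Symmetric conclusions hold for $v^*$; in particular $u^* \neq v^*$.

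Let $E_u$ denote the event ``$u^*$ is not picked at $L_{j_u}$'' and define $E_v$ analogously. Since $u^*$ is $u$'s only $A$-ancestor, $u$ enters $L_j$ alive iff $E_u$ holds. Running through the algorithm's rule at $L_j$---uniform pick when both $u, v$ are alive, Rule~1 when only one is---yields $\Pr[u \text{ burns}] = \tfrac{1}{2}\Pr[E_u \cap E_v]$, so the task reduces to showing $\Pr[E_u \cap E_v] \leq 1/3$. If $j_u = j_v$, then $u^*$ and $v^*$ are exactly the two members of $A'_{j_u}$, both alive, and the algorithm picks one uniformly, making $E_u$ and $E_v$ complementary; hence $\Pr[E_u \cap E_v] = 0$.

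In the remaining case $j_u \neq j_v$, I first bound $\Pr[E_u]$ (and symmetrically $\Pr[E_v]$) by $1/2$. Since $u^* \in A'_{j_u}$, the configuration at $L_{j_u}$ is one of: $|A_{j_u}| = 1$ (then $\Pr[E_u] = 0$); $|A'_{j_u}| = 2$ (then $\Pr[E_u] = 1/2$); or $|A'_{j_u}| = |A''_{j_u}| = 1$ with $u^*$ the $A'$-vertex, in which case Lemma~\ref{one-and-one} gives $\Pr[u^* \text{ saved}] \geq 2/3$ and thus $\Pr[E_u] \leq 1/3$. Next I argue that $E_u$ and $E_v$ are at most independent: the algorithm's coins at different layers are independent, and whenever the companion chains above $u^*$ and $v^*$ pass through a common vertex $w$ with $x_w > 0$, picking $w$ saves both companions and therefore forces both $u^*$ and $v^*$ to be chosen at their layers by Rule~1, producing negative rather than positive correlation. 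Combined with the pointwise bounds, this gives $\Pr[E_u \cap E_v] \leq \Pr[E_u]\Pr[E_v] \leq 1/4 \leq 1/3$.

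The main obstacle is the last step: formally accounting for the shared randomness when the ancestry chains of $u^*$'s and $v^*$'s companions intersect. The cleanest route seems to be a joint induction together with Lemma~\ref{one-and-one} on the layer index, maintaining the invariant that non-save events in disjoint-subtree contexts remain at most independent; Hartke's constraint applied at each shared ancestor bounds how much positive dependence can ever accumulate.
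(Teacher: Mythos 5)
Your structural setup via Hartke's constraint (forcing every strict ancestor of $u^*$ to have zero weight, hence $u^* \in A'_{j_u}$, $u^* \neq v^*$, and the reduction to bounding $\Pr[E_u \cap E_v] \leq 1/3$) matches the paper, and your $j_u = j_v$ case is correct. The gap you flag in the final paragraph is genuine, and unfortunately your intuition about its sign is backwards. When the companions $w_u \in A''_{j_u}$ and $w_v \in A''_{j_v}$ share a positive-weight ancestor $z$, the correlation between $E_u$ and $E_v$ is \emph{positive}, not negative: the event ``$z$ is picked'' forces $\neg E_u$ and $\neg E_v$ simultaneously, and a common cause for two events induces positive, not negative, correlation (and $\mathrm{Cov}(E_u,E_v) = \mathrm{Cov}(\neg E_u, \neg E_v)$, so the sign carries over). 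Concretely, take $x_z = 1/2$ with $z$ picked via a fair coin (e.g.\ $z \in A'_\ell$, $|A'_\ell|=2$) and $z$ the unique $A$-ancestor of both $w_u$ and $w_v$. Then $\Pr[E_u] = \Pr[E_v] = \tfrac12\cdot\tfrac23 = \tfrac13$, but $\Pr[E_u \cap E_v] = \tfrac12\cdot\tfrac23\cdot\tfrac23 = \tfrac29$, which strictly exceeds $\Pr[E_u]\Pr[E_v] = \tfrac19$. So the inequality $\Pr[E_u\cap E_v] \leq \Pr[E_u]\Pr[E_v]$ is false, and no joint induction on ``at most independent'' can establish it.

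The paper sidesteps this entirely by never proving a product bound. It splits on the type of the layers containing $u^*$ and $v^*$. If both lie in layers with $|A'| = 2$, then those picks are fresh uniform coin flips (rule~1 can never fire, since $A'$-vertices have no $A$-ancestor and hence are never saved before their layer is processed), so $E_u$ and $E_v$ are genuinely independent when $j_u \neq j_v$ and complementary when $j_u = j_v$, giving $\Pr[E_u \cap E_v] \leq 1/4$. Otherwise at least one of them, say $u^*$, lies in a one-and-one layer, and Lemma~\ref{one-and-one} already gives $\Pr[E_u] \leq 1/3$, so the trivial monotonicity bound $\Pr[E_u \cap E_v] \leq \Pr[E_u] \leq 1/3$ finishes with no correlation analysis whatsoever. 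Notice that in the counterexample above, $2/9 \leq 1/3$, so it is exactly this trivial bound that saves the day. Replacing your product-bound step by this two-way case split closes the gap.
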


\begin{proof}

Let $u'$ and $v'$ be the ancestors of $u$ and $v$ in some sets $A'_{i}$ and $A'_k$ above the layer $L_j$.
There are the two possibilities: either both $u'$ and $v'$ are in layers with $|A'_i| = |A'_k| =2$ (maybe $i=k$); or $u'$ is in the layer with $|A'_i| = |A''_i| =1$. We remark that $u' \neq v'$: otherwise, the LP constraint for $v'$ and $L_j$ would not be satisfied.

For $u$ or $v$ to be unsaved, we need that both $u'$ and $v'$ are not saved by the algorithm.  
Otherwise, if, say, $u'$ is saved, $u$ is also saved, and the algorithm would have picked $v$.

\begin{align*}
P[u \text{ is not saved}] &= P[u \text{ not picked} \wedge u' \text{ is not saved} \wedge v' \text{ is not saved}] \\
                   &= P[u \text{ not picked}]\cdot P[u' \text{ is not saved} \wedge v' \text{ is not saved}] \\
                   &= \frac12 \cdot \frac14 
                    = \frac18\\
P[u \text{ is saved}] &=\frac78 \geq \frac56
\end{align*}

It must be that $P[u' \text{ burns} \wedge v' \text{ burns}] \leq 1/4$,
since either $u'$ and $v'$ are in different layers or they are in the same
layer. 
If they are in different layers, picking each of them is independent,
and the probability of neither being saved is at most $1/4$. 
If they are in
the same layer, one of them is necessarily picked, which implies that the
probability of neither being saved is $0$. In any case, the probability is at most $1/4$.

In the second case, at least one of the vertices $u'$, $v'$ is in a layer with one
2-special vertex. W.~l.~o.~g.~let $u'$ be in such a
layer. By Lemma~\ref{one-and-one}, we know that the probability that $u'$
is not saved is at most $1/3$. 
Therefore, 

\begin{align*}
P[u \text{ burns}] &= P[u \text{ not picked} \wedge u' \text{ burns} \wedge v' \text{ burns}] \\
                   &= P[u \text{ not picked}]\cdot P[u' \text{ burns} \wedge v' \text{ burns}] \\
                   &\leq P[u \text{ not picked}]\cdot P[u' \text{ burns}] \\
                   &\leq \frac12 \cdot \frac13 
                    = \frac16\\
P[u \text{ is saved}] &\geq \frac56
\end{align*}

The proof for both cases works analogously for $v$.
\end{proof}

\end{appendixextra}

\subsection{Ruling Out the Gap Instances in Section~\ref{sec:gap}}  
\label{sec:well-separable}

In this section, we show that the integrality gap instances for (LP-1) presented in the previous section admit a better than $(1-1/e)$ approximation via (LP').
To this end, we introduce the concept of well-separable LP solutions  and show an improved rounding algorithm for solutions in this class.

Let $\eta \in (0,1)$. 
Given an LP solution $(x,y)$ for (LP-1) or (LP'), we say that a vertex $v$ is $\eta$-light if $\sum_{u \in P_v \setminus \set{v}} x_u < \eta$; if a vertex $v$ is not $\eta$-light, we say that it is $\eta$-heavy. 
A fractional solution is said to be $\eta$-separable if for all layer $j$, either all vertices in $L_j$ are $\eta$-light, or they are all $\eta$-heavy.  
For an $\eta$-separable LP solution $(x,y)$, each layer $L_j$ is either an $\eta$-light layer that contains only $\eta$-light vertices, or $\eta$-heavy layer that contains only $\eta$-heavy vertices.

\begin{observation} 
The LP solution in Section~\ref{sec:gap} is $\eta$-separable for all values of $\eta \in \set{1/k, 2/k, \ldots, 1}$.  
\end{observation}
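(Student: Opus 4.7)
The plan is to reduce $\eta$-separability to the sharper uniformity claim that, for every layer $L_j$ of the constructed tree, all vertices $v\in L_j$ satisfy $\sum_{u\in P_v\setminus\{v\}}x_u=\sigma_j$ for a common value $\sigma_j$. Since the LP solution of Section~\ref{sec:gap} assigns $x_v\in\{0,1/k\}$, the quantity $\sigma_j$ is automatically a multiple of $1/k$, so any threshold $\eta\in\{1/k,2/k,\ldots,1\}$ is either at most $\sigma_j$ (and the whole layer is $\eta$-heavy) or strictly greater (and the whole layer is $\eta$-light). This immediately yields $\eta$-separability for every such $\eta$.

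To prove uniformity of $\sigma_j$, I would induct on the phase number $q$, using the two structural invariants established in Section~\ref{sec:gap}: after phase $q$, all leaves lie in the single layer $L_{\alpha_q}$, and every path from the root to a vertex of $L_{\alpha_q}$ contains exactly $q$ special vertices. This already delivers $\sigma_{\alpha_q}=q/k$ uniformly on every phase-boundary layer. For layers strictly inside phase $q+1$, I would unfold the $(M_q,k,\delta)$-good gadget used in that phase and verify from its definition that the number of phase-$(q+1)$ special vertices lying strictly above any vertex at an intra-gadget layer $j$ is a function of $j$ alone; composing with the inductive count of $q$ inherited at the start of the phase then gives the layer-uniform value of $\sigma_j$.

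The main obstacle is the intra-gadget bookkeeping. In the phase-$(q+1)$ gadget, each tree $T_i$ places its first-spider feet in the tree-specific layer band $[\alpha_i,\alpha_{i+1}-1]$ with $\alpha_{i+1}-\alpha_i=D^{i-1}$, so at a fixed intra-gadget layer $j$ one must separately account for trees whose feet sit entirely above $j$, trees whose feet sit entirely below $j$, and the unique tree whose feet band straddles $j$, and confirm that the number of phase-$(q+1)$ specials met on the root-to-$v$ path agrees across these cases. Once this intra-gadget accounting is settled, the $\eta$-separability conclusion follows immediately from the reduction in the first paragraph; this is essentially a direct inspection of the construction, which is why the statement is recorded as an observation rather than a full lemma.
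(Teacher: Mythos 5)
Your sharper uniformity claim --- that every vertex of a fixed layer $L_j$ has the same quantity $\sum_{u\in P_v\setminus\{v\}}x_u$ --- is false for the construction of Section~\ref{sec:gap}, and the step where you ``confirm that the number of phase-$(q+1)$ specials met on the root-to-$v$ path agrees across these cases'' cannot be carried out, because it does not. Inside a single phase the trees $T_1,\ldots,T_{M_q}$ place their outer-spider feet in staggered, disjoint layer bands: $T_i$'s feet occupy relative layers $\alpha_i,\ldots,\alpha_{i+1}-1$. At a relative layer $j$ inside the band of some $T_i$ with $i>1$, the vertices coming from trees $T_{i'}$ with $i'<i$ are already on inner-spider legs and have passed exactly one special vertex of the current phase, whereas outer-spider leg vertices and feet of $T_i$ at that layer have passed none. (Even within $T_i$ alone the two values coexist, since inner-spider vertices descending from feet at layers $<j$ sit alongside not-yet-terminated legs.) So the prefix sums differ by $1/k$ within the layer, and for $\eta$ equal to the larger of the two values the layer contains both $\eta$-light and $\eta$-heavy vertices.

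What rescues the observation --- and what the rounding algorithm of Section~\ref{sec:well-separable} actually uses --- is that the discrepancy above is confined to vertices with $x_v=0$, which the algorithm never picks and whose light/heavy status is irrelevant. The only vertices with $x_v>0$ are the special (outer-spider foot) vertices, and within any layer $L_j$ all of these lie in the \emph{same} tree $T_i$ because the foot bands of distinct trees are disjoint. Each such foot has exactly zero special ancestors inside the current phase and exactly one per earlier phase, so every special vertex in $L_j$ has prefix sum $q/k$ when $L_j$ belongs to phase $q+1$. Restricting your uniformity argument to the special vertices gives the needed conclusion: at each layer, all vertices with $x_v>0$ share one $\sigma_j\in\{0,1/k,\ldots,(k-1)/k\}$, so for every threshold $\eta\in\{1/k,\ldots,1\}$ the layer is uniformly $\eta$-light or $\eta$-heavy with respect to the vertices the algorithm ever considers. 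You should state and prove the uniformity claim in this relativized form rather than over all of $L_j$.
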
   

\begin{theorem}
If the LP solution $(x,y)$ is $\eta$-separable for some $\eta$, then there is an efficient algorithm that produces an integral solution of cost $(1-1/e + f(\eta))\sum_{v} y_v$, where $f(\eta)$ is some function depending only on $\eta$.
\label{thm:wellseparable}
\end{theorem}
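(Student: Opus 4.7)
The plan is to design a two-phase dependent rounding algorithm for (LP') that uses the heavy/light partition given by $\eta$-separability, together with the Hartke constraints, to improve over the $1-1/e$ bound of vanilla independent rounding. First I would partition the layers of $T$ into the set $\hset$ of $\eta$-heavy layers and $\lset$ of $\eta$-light layers (this partition is well-defined by $\eta$-separability), and treat these two groups separately in two mutually independent phases.

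In more detail, one phase processes the $\eta$-light layers using straightforward independent Bernoulli sampling (each $v \in L_j$ with $L_j$ light is picked with probability $x_v$, independently across vertices). The other phase processes the $\eta$-heavy layers top-to-bottom with a per-layer dependent rounding in the spirit of Section~\ref{sec:half-int}: pick at most one vertex per heavy layer with marginal probability $x_v$, and introduce negative correlation between consecutive heavy layers whenever the Hartke constraint allows it. The two phases operate on disjoint sets of layers, so for each vertex $v$ the event ``$v$ is saved'' decomposes cleanly into the contributions from light ancestors and from heavy ancestors.

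For the analysis I would split the vertices into two cases. First, for any vertex $v$ whose path $P_v$ crosses only light layers, we have $y_v < \eta$, so the independent-rounding bound $1-\prod_{u \in P_v}(1-x_u) \geq 1-e^{-y_v}$ combined with the monotonicity of $(1-e^{-t})/t$ yields a saving ratio of at least $(1-e^{-\eta})/\eta$, strictly exceeding $1-1/e$ by a positive constant depending only on $\eta$. Second, for a vertex $v$ with at least one heavy ancestor, $\sum_{u \in P_v \setminus \{v\}} x_u \geq \eta$; here I would exploit the Hartke constraint applied to the topmost heavy ancestor $u^{*}$ of $v$, which caps $\sum_{u \in T_{u^{*}} \cap L_j} x_u \leq 1 - \sum_{u \in P_{u^{*}}} x_u$ for every layer $L_j$ below $u^{*}$. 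Together with the negative correlation introduced by the heavy-phase dependent rounding, this cap translates into a saving probability of at least $(1-1/e + h(\eta)) y_v$ for some $h(\eta)>0$. Taking $f(\eta) = \min\{(1-e^{-\eta})/\eta - (1-1/e),\, h(\eta)\}$ and summing over all vertices then yields the claimed bound.

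The main obstacle is the quantitative heavy-vertex analysis: converting the qualitative structural fact (Hartke's subtree-mass cap) into a concrete, positive improvement $h(\eta)$ over $1-1/e$ that is uniform in the tree. I expect this step to parallel the analysis in Section~\ref{sec:half-int}, which already exploits the Hartke constraints to break the $1-1/e$ barrier in the half-integral special case, but now the picks in heavy layers have real-valued marginals rather than values in $\{0,1/2,1\}$, so the bookkeeping of the events ``$v$'s heavy ancestors are not picked'' requires an inductive argument over heavy layers in the spirit of Lemma~\ref{one-and-one} rather than a case analysis.
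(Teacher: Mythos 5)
Your high-level two-phase structure matches the paper's algorithm (independent rounding on light layers, then a rounding pass on heavy layers), but the decisive design choice in the paper is the opposite of what you propose: the two phases are \emph{not} independent. In the paper's algorithm, the heavy-layer pass samples from the \emph{renormalized} distribution $\{x_v/x(\tilde L_j)\}_{v\in\tilde L_j}$ over the set $\tilde L_j$ of vertices not already saved by phase 1. The entire gain over $1-1/e$ comes from showing that, conditioned on a terminal $t$ having escaped phase 1, the expected remaining mass $x(\tilde L_j)$ in each heavy layer above $t$ is bounded by a constant $\alpha<1$ (Lemma~\ref{newlp:advantage}), so that $t$'s heavy ancestor is picked with conditional probability strictly larger than $x_v$. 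This is precisely where Hartke's constraint enters: in Claim~\ref{newlp:advantage:nelems}, it caps the mass of vertices $u\in L_j$ whose $lca(u,v)$ is high, while vertices with low $lca(u,v)$ have enough light ancestors disjoint from $P_t$ that phase~1 removes most of their mass in expectation. If you decouple the phases as you propose, the marginal probability of picking $t$'s heavy ancestor stays $x_v$ and you are back to the plain $1-e^{-y_t}$ bound.

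The remaining items in your sketch do not close this gap. Your appeal to ``negative correlation between consecutive heavy layers whenever the Hartke constraint allows it'' is left undefined, and since $t$ has at most one ancestor per layer, per-layer negative correlation between distinct layers does not by itself boost the probability that some ancestor in the chain $P_t$ is picked --- one would need, in effect, to show a conditional boost of the form the paper derives via $\tilde L_j$. Finally, a small but real bug in your light-only case: if $P_v$ lies entirely in light layers, this only gives $\sum_{u\in P_v\setminus\{v\}}x_u<\eta$, so $y_v<\eta+x_v$, not $y_v<\eta$; when $x_v$ is large (e.g.\ close to $1$) the ratio $(1-e^{-y_v})/y_v$ can approach $1-1/e$ and your claimed constant gain disappears. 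The paper handles this by a separate easy case ``some ancestor has $x_v$ large'' (Lemma~\ref{newlp:specialcases}), which you would need to add. The main missing idea, though, is the conditional, renormalized second phase together with the Markov-type argument (``partially protected'' terminals) that converts Lemma~\ref{newlp:advantage} into a uniform improvement.
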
  

\vspace{0.1in} 

\textbf{Algorithm:}
Let $T$ be an input tree, and $(x,y)$ be a solution for (LP') on $T$ that is $\eta$-separable for some constant $\eta \in (0,1)$.
Our algorithm proceeds in two phases.  
In the first phase, it performs
randomized rounding independently for each $\eta$-light layer.
Denote by $V_1$ the (random) collection of vertices selected in this phase. 
Then,
in the second phase, our algorithm performs randomized rounding conditioned on the solutions in
the first phase. 
In particular, when we process each $\eta$-heavy layer $L_j$, let $\tilde L_j$ be the collection of vertices that have not yet been saved by $V_1$.
We sample one vertex $v \in \tilde L_j$ from the distribution $\left\{\frac{x_v}{x(\tilde L_j)}\right\}_{v \in \tilde L_j}$. 
Let $V_2$ be the set of vertices chosen from the second phase. 
This completes the description of our algorithm. 
\begin{appendixextra}{sec:newlp}{\subsection{Analysis of Theorem \ref{thm:wellseparable}}}

For notational simplification, we present the proof when $\eta = 1/2$. It will be relatively obvious that the proof can be generalized to work for any $\eta$.  
Now we argue that each terminal $t \in \xset$ is saved with probability at least $(1-1/e +\delta)y_t$ for some universal constant $\delta >0$ that depends only on $\eta$. 
We will need the following simple observation that follows directly by standard probabilistic analysis.

\begin{proposition}
\label{prop:LP-frac}  
For each vertex $v \in V(T)$, the probability that $v$ is not saved is at most $\prod_{u \in P_v}(1-x_u) \geq 1- e^{-y_v}$. 
\end{proposition}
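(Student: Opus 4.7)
The plan is to decompose the event ``$v$ is not saved'' by the two phases of the algorithm. Write $P_v = (s, w_1, \ldots, w_k = v)$ with $w_i$ in layer $j_i$; then $v$ is not saved iff none of the $w_i$ is picked. By $\eta$-separability each $L_{j_i}$ is either entirely $\eta$-light (so $w_i$ can only be picked in phase~1) or entirely $\eta$-heavy (so $w_i$ can only be picked in phase~2); let $L$ and $H$ be the corresponding index sets on $P_v$. I will show that phase~1 contributes a factor $\prod_{i \in L}(1-x_{w_i})$ and that phase~2, conditional on the phase-1 miss event, contributes a factor at most $\prod_{i \in H}(1-x_{w_i})$, so the overall miss probability is at most $\prod_{u \in P_v}(1-x_u)$. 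The statement then follows from $1-t \leq e^{-t}$ applied coordinatewise together with the LP inequality $y_v \leq \sum_{u \in P_v} x_u$.

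Phase~1 is immediate: the rounding there is independent Bernoulli with $\Pr[w_i \text{ picked}] = x_{w_i}$ for every $i \in L$, so if $E_1$ denotes the event that no $w_i$ with $i \in L$ is picked, then $\Pr[E_1] = \prod_{i \in L}(1-x_{w_i})$ exactly. Now condition on $E_1$ and on the random set $V_1$ of phase-1 picks. Under $E_1$, every $w_i$ with $i \in H$ still lies in $\tilde L_{j_i}$: each of its ancestors sits on $P_v$, and $E_1$ guarantees that none of these was chosen in phase~1. Given $V_1$, the second-phase samples in distinct heavy layers are independent, so the conditional probability that none of the $w_i$, $i \in H$, is picked equals $\prod_{i \in H}\bigl(1 - x_{w_i}/x(\tilde L_{j_i})\bigr)$.

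The crucial step is the LP-driven inequality $x(\tilde L_{j_i}) \leq x(L_{j_i}) \leq 1$, from which $x_{w_i}/x(\tilde L_{j_i}) \geq x_{w_i}$ and so each factor in the product above is at most $1-x_{w_i}$. Taking expectation over $V_1$ preserves the bound, and multiplying by $\Pr[E_1]$ gives $\Pr[v \text{ not saved}] \leq \prod_{u \in P_v}(1-x_u)$; the final step $\prod(1-x_u) \leq \exp(-\sum x_u) \leq e^{-y_v}$ then completes the proof. The only nontrivial point is the phase-2 conditioning: one must observe that normalising by $x(\tilde L_{j_i}) \leq 1$ only \emph{increases} the per-vertex pick probability compared to the naive rate $x_{w_i}$, which is precisely what allows the phase-2 miss bound to decouple cleanly from the randomness of $V_1$.
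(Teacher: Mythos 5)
Your proof is correct. The paper itself supplies no argument for this Proposition --- it is flagged as something that ``follows directly by standard probabilistic analysis'' --- and your two-phase decomposition gives exactly the intended details: phase 1 contributes the exact factor $\prod_{i\in L}(1-x_{w_i})$ by independence of the per-light-layer Bernoulli picks, while in phase 2 the conditional pick rate $x_{w_i}/x(\tilde L_{j_i}) \geq x_{w_i}$ (since $x(\tilde L_{j_i}) \leq x(L_{j_i}) \leq 1$ by the layer constraint) makes each heavy-layer factor at most $1-x_{w_i}$, after which $1-t\leq e^{-t}$ and $y_v \leq \sum_{u\in P_v}x_u$ finish the bound as you state. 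One small remark: the inequality sign in the paper's statement is garbled (it should read $\leq e^{-y_v}$, equivalently $\Pr[v \text{ saved}]\geq 1-e^{-y_v}$), and you read it correctly.
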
 
 
We start by analyzing two easy cases.  

\begin{lemma} 
\label{newlp:specialcases}
Consider $t \in \xset$. If $y_t < 0.9$ or there is some ancestor $v \in P_t$ such that $x_v > 0.2$, then the probability that $v$ is saved by the algorithm is at least $(1-1/e+\delta) y_t$.  
\end{lemma}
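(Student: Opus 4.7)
The plan is to first establish a clean baseline bound on the success probability of the two-phase algorithm, then strengthen it under each of the two hypotheses to beat the factor $1-1/e$ by a constant.

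\emph{Step 1 (baseline).} I would show that
\[
\Pr[\,t \text{ is saved}\,] \;\geq\; 1 - \prod_{u \in P_t}(1-x_u) \;\geq\; 1 - e^{-y_t},
\]
where the second inequality uses $1-x_u \leq e^{-x_u}$ and $y_t = \sum_{u \in P_t} x_u$ (Proposition \ref{prop:setyvalues}). The key observation is that each light-layer ancestor of $t$ is picked independently in phase~1 with probability exactly $x_u$, while for a heavy-layer ancestor $v \in L_j$, conditioned on any history in which $v$ is still unsaved (equivalently, $v \in \tilde L_j$), the conditional probability that $v$ is picked in phase~2 equals $x_v / x(\tilde L_j) \geq x_v$, since $x(\tilde L_j) \leq x(L_j) \leq 1$. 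Processing ancestors of $t$ top-down and telescoping the bound $\Pr[v_i \text{ not picked} \mid v_1,\dots,v_{i-1} \text{ not picked}] \leq 1-x_{v_i}$ yields the claimed product bound.

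\emph{Step 2 (case $y_t < 0.9$).} The function $\phi(y) = (1-e^{-y})/y$ is strictly decreasing on $(0,1]$, so for $y_t < 0.9$,
\[
\frac{1-e^{-y_t}}{y_t} \;\geq\; \frac{1-e^{-0.9}}{0.9} \;>\; 1-1/e.
\]
Setting $\delta_1 := (1-e^{-0.9})/0.9 - (1-1/e) > 0$ gives the desired bound in this case.

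\emph{Step 3 (case $x_v > 0.2$ for some ancestor).} I would refine the product bound to single out $v$:
\[
\prod_{u \in P_t}(1-x_u) \;\leq\; (1-x_v)\prod_{u \neq v}(1-x_u) \;\leq\; (1-x_v)\,e^{-(y_t - x_v)}.
\]
Since $(1-x)e^{x-y_t}$ has derivative $-x\,e^{x-y_t} \leq 0$ in $x$, the right-hand side is decreasing in $x_v$ on $[0, y_t]$, hence at most $0.8\,e^{0.2 - y_t}$. Therefore $\Pr[t \text{ saved}] \geq 1 - 0.8\,e^{0.2 - y_t}$. It remains to check that the ratio $(1 - 0.8\,e^{0.2 - y_t})/y_t$ exceeds $1-1/e$ uniformly for $y_t \in [0.2, 1]$; since this ratio is decreasing in $y_t$, the worst case is $y_t = 1$, where it equals $1 - 0.8\,e^{-0.8} \approx 0.6405 > 1 - 1/e$. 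This yields a constant gap $\delta_2 > 0$, and taking $\delta := \min(\delta_1, \delta_2)$ completes the proof.

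\emph{Main obstacle.} The only delicate step is Step~1: one must justify carefully that the adaptive re-normalization in phase~2 never drops the conditional per-ancestor capture probability below $x_v$. The verification hinges on the observation that, conditioned on none of the ancestors of $t$ above $L_j$ having been picked, the target ancestor $v$ is still in $\tilde L_j$, so the denominator $x(\tilde L_j)$ is a subsum of $x(L_j) \leq 1$ and the bound $x_v/x(\tilde L_j) \geq x_v$ is clean. The remaining calculus in Steps~2 and~3 is routine, provided the numerical constants $(0.2, 0.9)$ are chosen compatibly so both $\delta_i$ come out strictly positive, which the computations above confirm.
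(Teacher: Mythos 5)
Your proposal is correct and follows essentially the same route as the paper's proof: invoke the bound $\Pr[t\text{ saved}]\geq 1-\prod_{u\in P_t}(1-x_u)\geq 1-e^{-y_t}$ (which the paper records as a proposition), then use monotonicity of $(1-e^{-y})/y$ for the $y_t<0.9$ case and the refinement $1-(1-x_v)e^{-(y_t-x_v)}$ with the substitution $x_v=0.2$ for the other case. The paper simply cites the baseline product bound as following from ``standard probabilistic analysis'' whereas you spell out the conditional-probability argument for phase~2; otherwise the two proofs are the same calculation with the same thresholds.
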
  
\begin{proof} 
First, let us consider the case where $y_t < 0.9$. The probability of $t$
being saved is at least $1-e^{-y_v}$, according to the straightforward analysis.
If $y_t < 0.9$, we have $1-e^{-y_t}/y_t > 1.04(1-1/e)y_t$ as desired. 

Consider now the second case when $x_v > 0.2$ for some ancestor $v \in P_t$. 
The bound used
typically in the analysis is only tight when the values are all small, and,
therefore, we get an advantage when one of the values is relatively big. In
particular,
\begin{align*}
\pr{}{t \text{ is saved}} &\geq 1 - \prod_{u \in P_t} (1-x_u) \\
&\geq 1 - (1-x_v)e^{-(y_t-x_v)} \\
&\geq 1 - (1-0.2)e^{-(y_t-0.2)} \\
&\geq 1.01(1-1/e)y_v
\end{align*}
\end{proof} 

From now on, we only consider those terminals $t \in\xset$ such that $y_t \geq 0.9$ and $x_v < 0.2$, for all $v \in
P_t$. 
We remark here that if the value of $\eta$ is not $1/2$, we can easily pick other suitable thresholds instead of $0.9$ and $0.2$.  

For each vertex $v \in V$, let $\xset_1 \subseteq \xset$ be the set of terminals that are saved by $V_1$, i.e. a vertex $t \in \xset_1$ if and only if $t$ is a descendant of some vertex in $V_1$.  
Let $\xset_2 \subseteq \xset \setminus \xset_1$ contain the set of terminals that are not saved by the first phase, but are saved by the second phase, i.e. $t \in \xset_2$ if and only if $t$ has some ancestor in $V_2$.  
\[\pr{V_1, V_2}{t \not \in \xset_1 \cup \xset_2} = \pr{V_1, V_2}{t \not \in \xset_1}
                                  \pr{V_1, V_2}{t \not \in \xset_2 : t \not \in \xset_1}\]

\newcommand{\avgsumgoodterm}[0]{\ensuremath{\sum_{i \in \lset_{t,2}}x_{v_{i,t}} x(\tilde L_{i,t})}}

For any terminal $t$, let $\sset'_t$ and $\sset''_t$ be the sets of ancestors of $t$ that are $\eta$-light and $\eta$-heavy respectively, i.e. ancestors in $\sset'_t$ and $\sset''_t$ are considered by the algorithm in Phase 1 and 2 respectively. 
By Proposition~\ref{prop:LP-frac}, we can upper bound the first term by $e^{-x(\sset'_t)}$. 
In the rest of this section, we show that the second term is upper bounded by $e^{-x(\sset''_t)} c$ for some $c <1$, and therefore $\pr{}{t \not \in \xset_1 \cup \xset_2} \leq c e^{-x(\sset'_t) - x(\sset''_t)} \leq c e^{-y_t}$, as desired.   

The following lemma is the main technical tool we need in the analysis.
We remark that this lemma is the main difference between (LP') and (LP-2).  
\begin{lemma}
\label{newlp:advantage}
Let $t \in \xset$ and  $L_j$ be a
layer containing some $\eta$-heavy ancestor of $t$.  
Then 
\[{\mathbb E}_{V_1} [x(\tilde L_j) \mid t \not \in \xset_1] \leq  \alpha \]
for $\displaystyle{\alpha = \frac12 + (1-e^{-1/2}) \leq 0.9}$ 
\end{lemma}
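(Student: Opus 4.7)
The plan is to decompose $\mathbb{E}_{V_1}[x(\tilde L_j) \mid t \not\in \xset_1]$ according to the $\eta$-heavy ancestor $w$ of $t$ in $L_j$ and bound each piece using the Hartke family of constraints. Write $N := \{t \not\in \xset_1\}$.

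First, I would observe that $y_w \leq 1$ together with heaviness $y_w - x_w \geq 1/2$ gives $x_w \leq 1/2$, and that every $\eta$-light ancestor of $w$ already lies on $P_t$, so conditioning on $N$ forces $w \in \tilde L_j$ deterministically and contributes exactly $x_w$. For any other $u \in L_j$, a layer-by-layer computation using independence of Phase~1's picks yields the following: in each $\eta$-light layer above $\mathrm{LCA}(u,w)$ the conditional probability that $u$'s ancestor is not picked equals $1$ (the same vertex appears in both numerator and denominator), while in each $\eta$-light layer strictly below the LCA it equals $(1 - x_{a_u} - x_{a_t})/(1 - x_{a_t}) \leq 1 - x_{a_u}$, where $a_u, a_t$ denote the respective ancestors. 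Multiplying across layers gives $\Pr[u \in \tilde L_j \mid N] \leq e^{-\phi_u}$, where $\phi_u$ is the total $\eta$-light LP mass on $P_u$ strictly below $\mathrm{LCA}(u,w)$.

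Next, I would use the $\eta$-separable structure: since heavy layers propagate downward (any child of a heavy vertex is heavy), there is a first heavy layer $L_{j_0}$ above which every layer is light, and the heaviness of $u$'s ancestor in $L_{j_0}$ forces the total $\eta$-light mass on $P_u$ to be at least $1/2$. Splitting this mass as $\gamma_u + \phi_u$, where $\gamma_u$ is the light mass shared with $w$ (on ancestors at or above $\mathrm{LCA}(u,w)$), gives $\phi_u \geq (1/2 - \gamma_u)^+$ and hence $e^{-\phi_u} \leq e^{\gamma_u - 1/2}$ whenever $\gamma_u < 1/2$. Partitioning $L_j \setminus \{w\}$ by $\mathrm{LCA}(u,w) = v_i$ into blocks $B_i$ along $P_w$, the Hartke inequality $y_{v_i} + x_w + \sum_{i' \geq i} x(B_{i'}) \leq 1$ bounds the ``close'' mass $c := \sum_{u:\, \gamma_u \geq 1/2} x_u$ by $1/2 - x_w$, with contribution to the expectation at most $c$ (using $e^{-\phi_u} \leq 1$).

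Finally, for the ``far'' contribution $\sum_{u:\, \gamma_u < 1/2} x_u\, e^{\gamma_u - 1/2}$, the same Hartke family (after subtracting $c$) translates into the survival-function bound $\mu([y, 1/2)) \leq 1 - y - x_w - c$ for the measure $\mu$ on $[0, 1/2)$ induced by the $\gamma_u$'s with weights $x_u$. I would then maximize $e^{-1/2} \int_0^{1/2} e^y\, d\mu$ over such measures; an exchange argument (the extremal measure combines density $1$ on $[0, 1/2)$ with a point mass of size $1/2 - x_w - c$ approaching $1/2$) gives far contribution at most $(1 - e^{-1/2}) + (1/2 - x_w - c) = \alpha - (x_w + c)$. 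Summing the three pieces yields $x_w + c + [\alpha - (x_w + c)] = \alpha$, as required. The hardest step will be this final optimization: it uses the entire Hartke family along $P_w$ in an essential way, while the single layer-sum inequality of (LP-1) would only give $\mu([y, 1/2)) \leq 1$, which is too weak to produce the needed exponential decay. This is precisely where (LP') outperforms (LP-1).
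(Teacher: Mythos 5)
Your proof is correct and mirrors the paper's argument: the paper likewise decomposes the conditional expectation as $\sum_{u\in L_j} x_u\Pr[u \text{ unsaved}\mid N]$, bounds each conditional probability by $e^{-(1/2-x(P_{\mathrm{lca}(u,v)}))}$ using the light mass on $P_u$ strictly below the LCA, invokes the Hartke constraint at the topmost ancestor $v'$ of $v$ with $x(P_{v'})\geq i/k$ to obtain the tail bound $\sum_{u:\,x(P_{\mathrm{lca}(u,v)})\geq i/k} x_u\leq 1-i/k$, and finishes with a discrete extremal-distribution calculation whose limit is exactly your integral. The only cosmetic difference is that you track the contribution $x_w$ of the heavy ancestor and the close mass $c$ explicitly, whereas the paper absorbs both into the bucket of LCAs with coverage at least $1/2$ (coefficient $1$), and you phrase the extremal step as a continuous measure optimization rather than an $O(k)$-variable LP; both routes cancel cleanly and give $\alpha = \tfrac12 + (1-e^{-1/2})$.
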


Intuitively, this lemma says that any terminal that is still not saved by the result of the first phase will have a relatively ``sparse'' layer above it.   
We defer the proof of this lemma to the next subsection. 
Now we proceed to complete the analysis.

For each vertex $v$, denote by $\ell(v)$ the layer to which vertex $v$ belongs. 
For a fixed choice of $V_1$, we say that terminal $t$ is {\em partially protected by $V_1$} if 
$\sum_{ v \in \sset''_t} x_v x(\tilde L_{\ell(v)}) \leq C x(\sset''_t)$ (we will choose the value of $C \in (\alpha,1)$ later).
Let $\xset' \subseteq \xset \setminus \xset_1$ denote the subset of terminals that are partially protected by $V_1$.  

\begin{claim}
\label{improv:layer:alt:goodterm}
For any $t \in \xset$,
$\pr{V_1}{t \in \xset' \mid t \not \in \xset_1} \geq 1- \alpha/C$. 
\end{claim}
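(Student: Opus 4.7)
The plan is to combine Lemma~\ref{newlp:advantage} with linearity of expectation and Markov's inequality. Let me sketch the three steps.

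First, I would define the random variable
\[
Z \;=\; \sum_{v \in \sset''_t} x_v \, x(\tilde L_{\ell(v)}),
\]
which is non-negative for every outcome of $V_1$ (since all the $x_v$ are in $[0,1]$ and $\tilde L_j$ is a subset of $L_j$). The event that $t$ is partially protected by $V_1$ is exactly $\{Z \le C\, x(\sset''_t)\}$.

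Second, I would bound the conditional expectation of $Z$ given that $t \notin \xset_1$. Since each $v \in \sset''_t$ lies in an $\eta$-heavy layer (by definition of $\sset''_t$), Lemma~\ref{newlp:advantage} applies to each such $v$, giving
\[
\expect{V_1}{x(\tilde L_{\ell(v)}) \,\mid\, t \notin \xset_1} \;\le\; \alpha.
\]
Taking the $x_v$ factors outside (they are constants of the LP), linearity of expectation then yields
\[
\expect{V_1}{Z \,\mid\, t \notin \xset_1} \;=\; \sum_{v \in \sset''_t} x_v \,\expect{V_1}{x(\tilde L_{\ell(v)}) \,\mid\, t \notin \xset_1} \;\le\; \alpha\, x(\sset''_t).
\]

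Finally, I would apply Markov's inequality to the non-negative random variable $Z$ conditional on $t \notin \xset_1$. Assuming $x(\sset''_t) > 0$,
\[
\pr{V_1}{Z > C\, x(\sset''_t) \,\mid\, t \notin \xset_1} \;\le\; \frac{\expect{V_1}{Z \mid t \notin \xset_1}}{C\, x(\sset''_t)} \;\le\; \frac{\alpha}{C},
\]
which is precisely the stated bound after taking the complement. The degenerate case $x(\sset''_t)=0$ forces $Z=0$ deterministically, so the inequality holds trivially.

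No real obstacle is expected here: the meat of the argument is in Lemma~\ref{newlp:advantage}, which is what lets Hartke's additional LP constraints enter (it crucially uses that each $\eta$-heavy layer $L_j$ sits below an ancestor of $t$ with $x$-value summing to at least $\eta$, an inequality that is not valid for (LP-1)). Once that lemma is in hand, the claim itself is just Markov's inequality applied to a sum of correlated but non-negative contributions.
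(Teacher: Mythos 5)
Your proof is correct and follows essentially the same route as the paper: bound the conditional expectation of $\sum_{v \in \sset''_t} x_v\, x(\tilde L_{\ell(v)})$ by $\alpha\, x(\sset''_t)$ via linearity of expectation and Lemma~\ref{newlp:advantage}, then apply Markov's inequality. The only addition on your side is the explicit treatment of the degenerate case $x(\sset''_t)=0$, which the paper leaves implicit.
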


\begin{proof} 
By linearity of expectation and Lemma \ref{newlp:advantage}, 
\[\expect{V_1}{\sum_{v \in \sset''_t} x_v x(\tilde L_{\ell(v)}) \mid t \not \in \xset_1} = \sum_{v \in \sset''_t} x_v \expect{V_1}{x(\tilde L_{\ell(v)}) \mid t \not \in \xset_1} \leq \alpha x(\sset''_t)\]
Using Markov's inequality, 
\begin{align*}
&\pr{V_1}{\sum_{v \in \sset''_t} x_v x(\tilde L_{\ell(v)}) \leq C x(\sset''_t) \mid t \not \in \xset_1} \\
  &\quad\quad\quad\quad= 1 - \pr{}{\sum_{v \in \sset''_t} x_v x(\tilde L_{\ell(v)})   > C x(\sset''_t) \mid t \not \in \xset_1} \\
  &\quad\quad\quad\quad\geq 1 - \frac{\alpha x(\sset''_t)}{C x(\sset''_t)} \\
  &\quad\quad\quad\quad= 1 - \frac{\alpha }{C}
\end{align*}
\end{proof}

We can now rewrite the probability of a terminal $t \in \xset$ not being saved by the solution after the second phase.
\begin{align*}
&\pr{V_1, V_2}{t \not \in \xset_2 \mid  t \not \in \xset_1 } \\
&= \pr{}{ t \in \xset' \mid  t \not \in \xset_1}\,\pr{}{t \not \in \xset_2\mid  t \in \xset'} 
        + \pr{}{t \not \in \xset' \mid t \not \in \xset_1}\,\pr{}{ t \not \in \xset_2 \mid t \not \in \xset'}\\
&\leq (1-\alpha/C) \pr{V_1, V_2}{t \not \in \xset_2  \mid t \in \xset'} 
       + \frac{\alpha}{C} \cdot e^{-x(\sset''_t)}
\end{align*}

The last inequality holds because $\pr{V_1, V_2}{t \not \in \xset_2 \mid t \not\in \xset'}$ is at most $e^{-x(\sset''_t)}$ from Proposition~\ref{prop:LP-frac}.
 
It remains to provide a better
upper bound for $\pr{}{t \not \in \xset_2 \mid t \in \xset'}$. 
Consider a vertex $v \in \sset''_t$ that is involved in the second phase rounding.  
We say that vertex $v$  is \emph{good} for $t$ and $V_1$ if $x(\tilde L_{\ell(v)}) \leq C'$ (we will choose the value $C' \in (C,1)$ later.)  
Denote by $\sset^{good}_t \subseteq \sset''_t$ the set of good ancestors of $t$. 
The following claim ensures that good ancestors have large LP-weight in total. 

\begin{claim}
For any node $t \in \xset'$, $x(\sset^{good}_t) =  \sum_{v \in \sset^{good}_t} x_{v} \geq (1-C/C') x(\sset''_t)$. 
\end{claim}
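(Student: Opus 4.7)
The plan is a one-line Markov-style argument that exploits the very definition of $t \in \xset'$. I would partition $\sset''_t$ into the good ancestors $\sset^{good}_t$ and the non-good ones $\sset''_t \setminus \sset^{good}_t$, and lower-bound the contribution of the non-good ancestors to the weighted sum $\sum_{v \in \sset''_t} x_v \cdot x(\tilde L_{\ell(v)})$.

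Concretely, for every non-good $v \in \sset''_t \setminus \sset^{good}_t$ the definition of goodness gives $x(\tilde L_{\ell(v)}) > C'$, hence $x_v \cdot x(\tilde L_{\ell(v)}) \geq C' \cdot x_v$. Summing over all non-good $v$ and then using the upper bound $\sum_{v \in \sset''_t} x_v \cdot x(\tilde L_{\ell(v)}) \leq C \cdot x(\sset''_t)$ that is guaranteed by $t \in \xset'$ yields
\[
C' \cdot x\bigl(\sset''_t \setminus \sset^{good}_t\bigr) \;\leq\; \sum_{v \in \sset''_t \setminus \sset^{good}_t} x_v \cdot x(\tilde L_{\ell(v)}) \;\leq\; \sum_{v \in \sset''_t} x_v \cdot x(\tilde L_{\ell(v)}) \;\leq\; C \cdot x(\sset''_t).
\]
Dividing by $C'$ and substituting into $x(\sset^{good}_t) = x(\sset''_t) - x(\sset''_t \setminus \sset^{good}_t)$ gives exactly the claimed bound $x(\sset^{good}_t) \geq (1 - C/C') \cdot x(\sset''_t)$.

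There is essentially no obstacle: the statement is just Markov's inequality applied to the probability distribution $\{x_v / x(\sset''_t)\}_{v \in \sset''_t}$ with random variable $x(\tilde L_{\ell(v)})$ and threshold $C'$, combined with the averaging guarantee defining $\xset'$. The only care needed is to ensure the parameters are chosen with $C < C'$ (as will be fixed in the subsequent optimization over $\alpha, C, C'$), so that the slack $(1 - C/C')$ is strictly positive and can feed into the improved survival probability for terminals $t \in \xset'$.
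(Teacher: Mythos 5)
Your proof is correct and is the same argument the paper uses: you bound the total weight of the non-good ancestors by combining the defining inequality $\sum_{v \in \sset''_t} x_v\, x(\tilde L_{\ell(v)}) \leq C\, x(\sset''_t)$ with the threshold $x(\tilde L_{\ell(v)}) > C'$ on non-good vertices, exactly a Markov-type step. The only cosmetic difference is that you argue directly while the paper phrases it as a proof by contradiction.
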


\begin{proof}
Suppose (for contradiction) that the fraction of good layers was less than $1-C/C'$. This means that $x(\sset''_t \setminus \sset^{good}_t) \geq C/C'$.
For each such $v \in \sset''_t \setminus \sset^{good}_t$, we have $x(\tilde L(v)) > C'$.  
Then,
$\sum_{v \in \sset''_t} x_v x(\tilde L_{\ell(v)}) > \sum_{v\in \sset''_t \setminus \sset^{good}_t} x_v C' \geq C$. 
This contradicts the assumption that $t$ is partially protected, and concludes our proof.
\end{proof}

Now the following lemma follows.  

\newcommand{\qqqq}{\quad\quad\quad\quad}
\begin{lemma}
\label{lem:partial}  
$\pr{V_1, V_2}{t \not \in \xset_2 \mid t \in \xset'} \leq  e^{-x(\sset''_t)} e^{-(1-\frac{C}{C'}) x(\sset''_t) \paren{\frac{1}{C'}-1}}$ 
\end{lemma}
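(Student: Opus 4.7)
The plan is to condition on $V_1$ (equivalently, on the event $t \in \xset'$) and bound the conditional probability that $V_2$ fails to save $t$ by a product over the heavy ancestors of $t$. Recall that, given $V_1$, Phase 2 processes each heavy layer $L_j$ independently, picking a single vertex from $\tilde L_j$ with probabilities proportional to $x$. Because we are conditioning on $t \not\in \xset_1$ (which holds whenever $t \in \xset'$), every heavy ancestor $v \in \sset''_t$ remains unsaved after Phase 1 and hence lies in $\tilde L_{\ell(v)}$. Each such $v$ is picked independently with probability $x_v/x(\tilde L_{\ell(v)})$, and no two ancestors of $t$ lie in the same layer, so
\[
\pr{V_2}{t \notin \xset_2 \mid V_1,\, t \in \xset'} \;\le\; \prod_{v \in \sset''_t}\!\left(1 - \frac{x_v}{x(\tilde L_{\ell(v)})}\right)\;\le\; \exp\!\left(-\sum_{v \in \sset''_t}\frac{x_v}{x(\tilde L_{\ell(v)})}\right),
\]
using $1-z \le e^{-z}$.

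The main step is to exploit the definition of good ancestors to boost the exponent. Split $\sset''_t$ into $\sset^{good}_t$ and $\sset''_t \setminus \sset^{good}_t$. For $v \in \sset^{good}_t$ we have $x(\tilde L_{\ell(v)}) \le C'$, so $x_v/x(\tilde L_{\ell(v)}) \ge x_v/C'$; for the remaining $v$ we use only the trivial bound $x(\tilde L_{\ell(v)}) \le 1$ and get $x_v/x(\tilde L_{\ell(v)}) \ge x_v$. Summing,
\[
\sum_{v \in \sset''_t}\frac{x_v}{x(\tilde L_{\ell(v)})} \;\ge\; \frac{1}{C'}\, x(\sset^{good}_t) + \bigl(x(\sset''_t) - x(\sset^{good}_t)\bigr) \;=\; x(\sset''_t) + \left(\frac{1}{C'}-1\right) x(\sset^{good}_t).
\]
By the previous claim, $x(\sset^{good}_t) \ge (1 - C/C')\, x(\sset''_t)$, so the exponent is at least $x(\sset''_t) + (1 - C/C')(1/C' - 1)\, x(\sset''_t)$, yielding exactly the bound claimed in the lemma after exponentiating.

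The only subtlety worth double-checking is that Phase 2 samples are genuinely independent across heavy layers once $V_1$ is fixed — which holds because $\tilde L_j$, as defined, depends only on $V_1$ and not on earlier Phase 2 picks. Beyond that, the proof is just the two inequalities above chained together: a union-bound-style exponential relaxation, followed by the split between good and non-good ancestors. There is no significant technical obstacle; the whole point of the construction of $\xset'$ and $\sset^{good}_t$ in the preceding claims was precisely to make this calculation go through, and the $(1/C'-1)$ factor in the statement is the exact saving one gets from the good ancestors contributing $1/C'$ instead of $1$ to the exponent.
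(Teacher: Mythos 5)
Your proof is correct and follows essentially the same route as the paper's: condition on $V_1$, bound the Phase-2 failure probability by a product over heavy ancestors (using that Phase-2 picks in distinct heavy layers are independent given $V_1$ and that each ancestor in $\sset''_t$ remains in its $\tilde L_{\ell(v)}$ once $t \notin \xset_1$), apply $1-z\le e^{-z}$, split into good and non-good ancestors, and invoke the preceding claim $x(\sset^{good}_t)\ge(1-C/C')x(\sset''_t)$. Your explicit rewriting of the exponent as $x(\sset''_t)+(\frac1{C'}-1)x(\sset^{good}_t)$ before applying that claim is a slightly cleaner presentation of the same algebra the paper compresses into its final two inequalities.
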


\begin{proof} 
\begin{align*}
&\pr{V_1, V_2}{t \not \in \xset_2 \mid t \in \xset'} \\
&\qqqq=    \sum_{V^\prime_1: t \in \xset'} \pr{V_1}{V_1 = V^\prime_1}\,\pr{V_2}{t \not \in \xset_2 \mid V_1 = V^\prime_1} \\
&\qqqq\leq \sum_{V^\prime_1: t \in \xset'} \pr{V_1}{V_1 = V^\prime_1}\,
          \prod_{\text{bad }v \in \sset''_t}\paren{1-x_{v}}\,
          \prod_{\text{good }v \in \sset''_t}\paren{1-\frac{x_{v}}{C'}} \\
&\qqqq\leq \sum_{V^\prime_1: t \in \xset'} \pr{V_1}{V_1 = V^\prime_1}\,
          \prod_{\text{bad } v \in \sset''_t} e^{-x_{v}}\,
          \prod_{\text{good }v \in \sset''_t} e^{-x_{v}/C'} \\
&\qqqq\leq \sum_{V^\prime_1: t \in \xset'} \pr{V_1}{V_1 = V^\prime_1}\,
          e^{-x(\sset''_t)\frac{C}{C'} - (1-C/C') x(\sset''_t) / C'} \\
&\qqqq\leq e^{-x(\sset''_t)\frac{C}{C'} - (1-C/C')x(\sset''_t) / C'}\,
          \sum_{V^\prime_1: t \in \xset'} \pr{V_1}{V_1 = V^\prime_1} \\
&\qqqq\leq e^{-x(\sset''_t)\frac{C}{C'} - (1-C/C') x(\sset''_t) / C'} \\
&\qqqq\leq e^{-x(\sset''_t)} e^{-(1-C/C') x(\sset''_t) \paren{\frac1{C'}-1}}
\end{align*}
\end{proof} 
 
Now we choose the parameters $C$ and $C'$ such that $C = (1+\delta) \alpha$, $C' = (1+\delta)C$, and $(1+\delta)C' = 1$, where $(1+\delta)^3 = 1/\alpha$.
Notice that this choice of parameters satisfy our previous requirements that $\alpha < C < C' <1$. 
The above lemma then gives the upper bound of $e^{-x(\sset''_t)} e^{-\frac{\delta^2}{1+\delta} x(\sset''_t)}$, which is at most $e^{-(1+\delta^2/2) x(\sset''_t)}$.  
Since $\delta>0$ is a constant, notice that we do have an advantage over the standard LP rounding in this case.
Now we plug in all the parameters to obtain the final result.   

\begin{align*}
\pr{V_1, V_2}{t \not \in \xset_1 \cup \xset_2} 
&= \pr{V_1, V_2}{t \not\in \xset_1}\,\pr{V_1, V_2}{t \not \in \xset_2 \mid t \not\in \xset_1} \\
&\leq e^{-x(\sset'_t)}\,\paren{(1-\alpha/C) \, \pr{V_1, V_2}{t \not \in \xset_2 \mid t \in \xset'} + \frac{\alpha}{C} \,e^{-x(\sset''_t)}} \\
&\leq e^{-x(\sset'_t)}\,
        \paren{(1-\alpha/C) \, e^{-x(\sset''_t)} e^{-\frac{\delta^2}{2} x(\sset''_t)}  + \frac{\alpha}{C} \, e^{-x(\sset''_t)}} \\
&\leq e^{-y_t}\,\paren{(1-\alpha/C) \, e^{- \frac{\delta^2}{2}x(\sset''_t) } + \alpha/C } \\
& \leq  e^{-y_t} \paren{\frac{\delta}{1+\delta} e^{-(1+\delta^2/2)x(\sset''_t)} + \frac{1}{1+\delta}} 
\end{align*}

Since we assume that $y_t > 0.9$ and $x_v \leq 0.2$, we must have $x(\sset''_t) \geq 0.2$,
 and therefore the above term can be seen as $e^{-y_t} \cdot \delta'$ for some $\delta' <1$.
Overall, the approximation factor we get is $(1-\delta'/e)$ for some universal constant $\delta' \in (0,1)$.  

\subsubsection{Proof of Lemma~\ref{newlp:advantage}}
For each $u$, let $\eset_u$ denote the event that $u$ is not saved by $V_1$.  
First we break the expectation term into $\sum_{u \in L_j} x_u \pr{}{\eset_u \mid t \not \in \xset_1}$.  
Let $v \in L$ be the ancestor of $t$ in layer $L_j$.  
We break down the sum further based on the ``LP coverage'' of the least common ancestor between $u$ and $v$, as follows: 
\[\sum_{i=0}^{k/2} \sum_{u\in L_j: q'(lca(u,v))=i} x_u \pr{}{\eset_u \mid t \not\in \xset_1} \]

Here, $q'(u)$ denotes $k \cdot x(P_u)$; this term is integral since we consider the $1/k$-integral solution $(x,y)$.  
The rest of this section is devoted to upper bounding the term $\pr{}{\eset_u \mid t \not \in \xset_1}$. 
The following claim gives the bound based on the level $i$ to which the least common ancestor belongs.

\begin{claim} 
\label{newlp:advantage:probpair}
For each $u \in L_j$ such that $q'(lca(u,v)) = i$, 
\[\pr{}{\eset_u \mid t \not \in \xset_1} \leq e^{-\paren{1/2 -i/k}}\]
\end{claim}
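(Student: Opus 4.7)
The plan is to exploit the $1/2$-separability to partition the layers into a clean ``light prefix'' followed by a ``heavy suffix'', then to identify precisely which light ancestors of $u$ lie outside $P_t$, and finally to use the per-layer independence of Phase~1 to bound the conditional probability by a product. First, I would establish a monotonicity lemma: \emph{lightness is preserved upward}. Indeed, if $w$ is heavy and $w'$ is a descendant of $w$, then $x(P_{w'}\setminus\{w'\}) \geq x(P_w) \geq x(P_w\setminus\{w\}) \geq 1/2$, so $w'$ is heavy as well. Combined with $1/2$-separability, this forces the existence of an index $j^*$ such that $L_1,\ldots,L_{j^*}$ consist entirely of light vertices and $L_{j^*+1},L_{j^*+2},\ldots$ of heavy ones.

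Next, set $w = lca(u,v)$ and write $P_u = P_w \cup A_u$ (disjoint), where $A_u$ is the set of strict descendants of $w$ on $P_u$. If $w$ lies in a heavy layer, then $i/k = x(P_w) \geq 1/2$, so $e^{-(1/2 - i/k)} \geq 1$ and the claim is trivial. If $w$ lies in a light layer, let $\ell$ be the bottommost light vertex on $P_u$ (it exists since the root is vacuously light). The child of $\ell$ on $P_u$ -- or $u$ itself, if $\ell$ is the parent of $u$ -- is heavy, so $x(P_\ell) \geq 1/2$. Setting $A_u^{\mathrm{light}} := A_u \cap (L_1 \cup \cdots \cup L_{j^*})$ and using that $w$ is an ancestor of $\ell$, we obtain
\[x(A_u^{\mathrm{light}}) \;=\; x(P_\ell) - x(P_w) \;\geq\; 1/2 - i/k.\]

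Finally, Phase~1 rounding is independent across layers. The light ancestors of $u$ outside $P_t$ are exactly the vertices in $A_u^{\mathrm{light}}$, since by definition of $lca$ every vertex of $A_u$ lies in a subtree of $w$ disjoint from $P_t$. Conditioning on $t\not\in\xset_1$ only constrains picks in layers of $P_t$; even for layers shared with $A_u^{\mathrm{light}}$, the ``at most one pick per layer'' rule makes this conditioning at least as favourable as the unconditional case, so
\[\pr{}{\eset_u \mid t\not\in\xset_1} \;\leq\; \prod_{w' \in A_u^{\mathrm{light}}}(1 - x_{w'}) \;\leq\; e^{-x(A_u^{\mathrm{light}})} \;\leq\; e^{-(1/2 - i/k)}.\]
The main obstacle is the structural monotonicity lemma and the LP inequality $x(P_\ell) \geq 1/2$ at the light/heavy boundary; once these are in hand, the probabilistic step is a routine product bound.
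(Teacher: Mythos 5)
Your proof is correct and follows essentially the same route as the paper's: decompose $P_u$ at $w = lca(u,v)$, show that the light vertices of $P_u$ strictly below $w$ have total LP-mass at least $1/2 - i/k$, and then bound the conditional probability by a per-layer product using independence of Phase~1 and the fact that conditioning on $t\notin\xset_1$ only makes each layer's pick more (not less) likely to hit $P_u\setminus P_t$. Your write-up is somewhat cleaner than the paper's (which has a notational slip, writing $\sset'_t \cap P'$ where the light vertices of $P'$ are intended, and leaves the monotonicity of heaviness and the $x(P_\ell)\geq 1/2$ boundary fact implicit), and you correctly flag the trivial case where $w$ sits in a heavy layer, but the substance is the same.
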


\begin{proof}
First, we recall that $y_u \geq 1/2$ and $q'(u) \geq k/2$, since $u$ is in the $1/2$-heavy layer $L_j$.  
Let $w= lca(u,v)$ and $P'$ be the path that connects $w$ to $u$.
Moreover, denote by $S \subseteq P'$ the set of light vertices on the path $P'$, i.e. $S = \sset'_t \cap P'$.   
Notice that $x(S) \geq \sum_{a \in \sset'_t \cap P_u} x_a - \sum_{a \in P_w} x_a  \geq (1/2 - i/k)$.  

For each $w' \in S$, $\pr{}{w' \not \in V_1 \mid t \not \in \xset_1} \in \set{1-x_{w^\prime}, 1- x_{w^\prime}/(1-x_{v^\prime})}$ depending on whether there is a vertex $v^\prime$ in $P_v$ that shares a layer with $w'$. 
In any case, it holds that $\pr{}{w' \not \in V_1 \mid t \not \in \xset_1} \leq (1-x_{w^\prime})$.
This implies that 
\begin{align*}
\pr{}{\eset_u \mid t \not \in \xset_1 } &\leq \prod_{w' \in S} \pr{}{w' \not \in V_1 \mid t \not \in \xset_1}  \\
&\leq \prod_{w' \in S} (1-x_{w^\prime}) \\
&\leq \prod_{w' \in S} e^{-x_{w^\prime}} \\
&\leq e^{-(1/2-i/k)} 
\end{align*}
\end{proof} 

\begin{claim} 
\label{newlp:advantage:nelems}
Let $i$ be an integer and $L^\prime \subseteq L_j$ be the set of vertices $u$ such that $q'(lca(u,v))$ is at least $i$. Then $x(L^\prime) \leq (k-i)/k$.
\end{claim}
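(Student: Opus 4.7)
The plan is to exploit Hartke's constraint on a carefully chosen ancestor of $v$. First I would let $w$ be the topmost vertex on the path $P_v$ satisfying $q'(w)\geq i$ (if no such $w$ exists, then $L'=\emptyset$ and we are done). Since the $q'$-values are non-decreasing along $P_v$, this $w$ exists whenever $q'(v)\geq i$, and any $u\in L'$ has $\mathrm{lca}(u,v)$ on $P_v$ at or below $w$, which forces $u\in T_{\mathrm{lca}(u,v)}\subseteq T_w$. Hence $L'\subseteq T_w\cap L_j$.

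Next I would verify that $w$ lies strictly above layer $L_j$, so that Hartke's constraint at $w$ with layer $L_j$ is applicable. This is the place where the hypothesis that $L_j$ is an $\eta$-heavy layer (with $\eta=1/2$) enters: because $v\in L_j$ is $1/2$-heavy, its parent $w^\star$ already satisfies $q'(w^\star)\geq k/2\geq i$, so the topmost such ancestor $w$ is at least as high as $w^\star$, in particular above $L_j$. (For $i=0$ the claim is immediate from the basic constraint $x(L_j)\leq 1$, so this case can be handled separately.)

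Given this, applying Hartke's constraint
\[\sum_{u\in P_w\cup(T_w\cap L_j)} x_u \;\leq\; 1\]
yields $x(T_w\cap L_j)\leq 1-x(P_w)=1-q'(w)/k\leq 1-i/k=(k-i)/k$, and combining with $L'\subseteq T_w\cap L_j$ completes the proof.

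The only real obstacle is the edge case check that $w$ is strictly above $L_j$; once this is secured, the argument is a one-line application of Hartke's constraint, and indeed this is precisely where (LP') is genuinely stronger than (LP-1), since under (LP-1) one could only derive $x(T_w\cap L_j)\leq x(L_j)\leq 1$, losing the crucial $i/k$ savings.
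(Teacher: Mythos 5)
Your proof is correct and follows essentially the same route as the paper's: pick the topmost ancestor $w$ of $v$ with $q'(w)\geq i$, observe $L'\subseteq T_w\cap L_j$, and invoke Hartke's constraint at $(w,L_j)$. You are in fact slightly more careful on two points the paper glosses over: you verify that $w$ lies strictly above $L_j$ (needed for the constraint to be applicable, and guaranteed by $L_j$ being a heavy layer and $i\leq k/2$), and you use $q'(w)\geq i$ as an inequality rather than the paper's imprecise assertion that $x(P_{v'})$ equals $i/k$.
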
 

\begin{proof} 
This claim is a consequence of Hartke's constraints.
Let $v'$ be the topmost ancestor of $v$ such that $q'(v') \geq i$. 
We remark that all vertices in $L'$ must be descendants of $v'$, so it must be that $\sum_{w \in P_{v'}} x_w + x(L^\prime) \leq 1$. 
The first term is $i/k$, implying that $x(L') \leq (k-i)/k$.
\end{proof} 

Let $L^i_j \subseteq L_j$ denote the set of vertices $u$ whose least common ancestor $lca(u,v)$ satisfies $q'(lca(u,v)) = i$.
As a consequence of Claim \ref{newlp:advantage:nelems}, $\sum_{i' \geq i} x(L^{i'}_j) \leq (k-i)/k$.  
Combining this inequality with Claim \ref{newlp:advantage:probpair}, we get that
\[\expect{}{x(\tilde L_j) \mid t \not \in \xset_1} \leq \sum_{i=0}^{k/2} x(L_j^i) e^{-1/2+i/k}\]

This term is maximized when $x(L_j^{k/2}) =1/2$ and $x(L_j^i) = 1/k$ for all other $i=0,1,\ldots, k/2-1$. 
This implies that 
\[\expect{}{x(\tilde L_j) \mid t \not \in \xset_1} \leq 1/2 + \sum_{i=0}^{k/2-1} e^{-1/2+i/k}/k\]

Finally, using some algebraic manipulation and the fact that $1+x \leq e^x$, we get  
\begin{align*}
\expect{}{x(\tilde L_j) \mid t \not \in \xset_1} 
&\leq 1/2 + \sum_{i=0}^{k/2-1} e^{-1/2+i/k}/k \\
&= 1/2 + \frac{1}{k}\, e^{-1/k}\,\frac{1-e^{-1/2}}{1-e^{-1/k}} \\
&= 1/2 + (1-e^{-1/2})\, \frac1{e^{1/k}}\, \frac{1/k}{1-e^{-1/k}} \\
&= 1/2 + (1-e^{-1/2})\, \frac{1/k}{e^{1/k}-1} \\
&\leq 1/2 + (1-e^{-1/2})
\end{align*}

\end{appendixextra}

\subsection{Integrality Gap for (LP')}
\label{sec:gap-newlp}  
In this section, we present an instance where (LP') has an integrality gap of
$5/6+\eps$, for any $\eps > 0$.
Interestingly, this instance admits an optimal $\frac{1}{2}$-integral LP solution.  

{\bf Gadget:}
The motivation of our construction is a simple gadget represented in
Figure \ref{12intgap:gadget}. 
In this instance, vertices are either {\em special} (colored gray) or {\em regular}. 
This gadget has three properties of our interest: 

\begin{itemize}

\item If we assign an LP-value of $x_v= 1/2$ to every special vertex, then this is a feasible LP solution that ensures $y_u = 1$ for all leaf $u$.  

\item For any integral solution $\uset$ that does not pick any vertex in the first layer of this gadget, at most $2$ out of $3$ leaves of the gadget are saved. 

\item Any pair of special vertices in the same layer do not have a common ancestor inside this gadget. 

\end{itemize}

\begin{wrapfigure}[13]{r}{0.25\textwidth}
\centering
\includegraphics{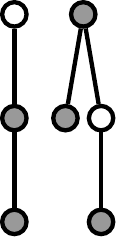}
\caption{Gadget used to get $5/6$ integrality gap. Special vertices are colored gray.}
\label{12intgap:gadget}
\end{wrapfigure}

Our integrality gap instance is constructed by creating partially overlapping copies of this gadget.
We describe it formally below.

\vspace{0.1in} 

\noindent {\bf Construction:}
The first layer of this instance, $L_1$, contains 4 nodes: two special nodes, which we
name $a(1)$ and $a(2)$, and two regular nodes, which we name $b(1)$ and $b(2)$.
We recall the definition of spider from Section~\ref{sec:good-gadget}.

Let $\alpha = 5\left\lceil 1/\eps \right\rceil$. The nodes $b(1)$ and $b(2)$ are
the roots of two spiders. 
Specifically, the spider $Z_1$ rooted at $b(1)$ has $\alpha$ feet, with one foot per layer, in consecutive layers $L_2,\ldots, L_{\alpha+1}$. For each $j \in [\alpha]$, denote by $b'(1,j)$, the $j^{th}$ foot of spider $Z_1$.  
The spider $Z_2$, rooted at $b(2)$, has $\alpha^2$ feet, with one foot per layer, in layers $L_{\alpha+2}, \ldots, L_{\alpha^2+\alpha+1}$. 
For each $j \in [\alpha^2]$, denote by $b'(2,j)$, the $j^{th}$ foot of spider $Z_2$.  
All the feet of spiders $Z_1$ and $Z_2$ are special vertices. 

For each $j \in [\alpha]$, the node $b'(1,j)$ is also the root of spider $Z'_{1,j}$, with $\alpha^2$ feet, lying in the $\alpha^2$ consecutive layers $L_{2+\alpha + j \alpha^2},\ldots, L_{1+\alpha+ (j+1) \alpha^2}$ (one foot per layer).
For $j' \in [\alpha^2]$, let $b''(1,j,j')$ denote the $j'$-th foot of spider $Z'_{1,j}$ that lies in layer $L_{1+\alpha+j\alpha^2 +j'}$.  
Notice that we have $\alpha^3$ such feet of these spiders $\set{Z'_{1,j}}_{j=1}^{\alpha}$ lying in layers $L_{2+\alpha+\alpha^2}, \ldots, L_{1+\alpha+\alpha^2+\alpha^3}$. 
Similarly, for each $j \in [\alpha^2]$, the node $b'(2,j)$ is the root of spider $Z'_{2,j}$ with $\alpha^2$ feet, lying in consecutive layers $L_{2+\alpha+\alpha^3 + j\alpha^2}, \ldots, L_{1+\alpha+ \alpha^3 + (j+1)\alpha^2}$. 
We denote by $b''(2,j,j')$ the $j'$-th foot of this spider.  

The special node $a(1)$ is also the root of spider $W_1$ which has $\alpha + \alpha^3$ feet: The first $\alpha$ feet, denoted by $a'(1,j)$ for $j \in [\alpha]$, are aligned with the nodes $b'(1,j)$, i.e. for each $j \in [\alpha]$, the foot $a'(1,j)$ of spider $W_1$ is in the same layer as the foot $b'(1,j)$ of $Z_1$. 
For each $j \in [\alpha], j' \in [\alpha^2]$, we also have a foot $a''(1,j,j')$ which is placed in the same layer as $b''(1,j,j')$.
Similarly, the special node $a(2)$ is the root of spider $W_2$ having $\alpha^2 + \alpha^4$ feet. 
For $j \in [\alpha^2]$, spider $W_2$ has a foot $a'(2,j)$ placed in the same layer as $b'(2,j)$. 
For $j \in [\alpha^2], j' \in [\alpha^2]$, $W_2$ also has a foot $a''(2,j,j')$ in the layer of $b''(2,j,j')$.  
All the feet of both $W_1$ and $W_2$ are special vertices.

Finally, for $i \in \set{1,2}$, and $j \in [\alpha^i]$, each node $a'(i,j)$ has $\alpha^{5-i}$ children, which are leaves of the instance.  
For $j \in [\alpha], j' \in [\alpha^2]$, the nodes $b''(i,j,j')$, $a''(i,j,j')$ have $\alpha^{3-i}$ children each which are also leaves of the instance. 
The set of terminals $\xset$ is simply the set of leaves. 

\begin{proposition} 
We have $|\xset|  = 6 \alpha^5$.
Moreover, 
(i) the number of terminals in subtrees $T_{a(1)} \cup T_{b(1)}$ is $3 \alpha^5$, and (ii) the number of terminals in subtrees $T_{a(2)} \cup T_{b(2)}$ is $3 \alpha^5$.  
\label{prop:56intgapanalysis}
\end{proposition}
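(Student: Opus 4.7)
The plan is to verify both parts by enumerating the six families of leaf-bearing internal nodes created in the construction and checking that each family contributes exactly $\alpha^5$ leaves, then grouping these contributions by which of the four top-level subtrees they sit in.

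First I would list the internal nodes that have leaf children. By construction, the only parents of leaves are the nodes $a'(i,j)$ (for $i\in\{1,2\}$ and $j\in[\alpha^i]$) and the nodes $a''(i,j,j'),\,b''(i,j,j')$ (for $i\in\{1,2\}$, with $j\in[\alpha^i]$ and $j'\in[\alpha^2]$). The multiplicities are read off directly from the spider descriptions: $a'(i,j)$ has $\alpha^{5-i}$ children, while $a''(i,j,j')$ and $b''(i,j,j')$ each have $\alpha^{3-i}$ children. I would then compute each family's total:
\begin{align*}
\#\{\text{leaves under the }a'(1,\cdot)\} &= \alpha\cdot \alpha^{4} = \alpha^5,\\
\#\{\text{leaves under the }a'(2,\cdot)\} &= \alpha^2\cdot \alpha^{3} = \alpha^5,\\
\#\{\text{leaves under the }a''(1,\cdot,\cdot)\} &= \alpha\cdot\alpha^2\cdot\alpha^2 = \alpha^5,\\
\#\{\text{leaves under the }b''(1,\cdot,\cdot)\} &= \alpha\cdot\alpha^2\cdot\alpha^2 = \alpha^5,\\
\#\{\text{leaves under the }a''(2,\cdot,\cdot)\} &= \alpha^2\cdot\alpha^2\cdot\alpha = \alpha^5,\\
\#\{\text{leaves under the }b''(2,\cdot,\cdot)\} &= \alpha^2\cdot\alpha^2\cdot\alpha = \alpha^5.
\end{align*}
Summing the six contributions gives $|\xset| = 6\alpha^5$, establishing the first claim.

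For the partition statement I would trace ancestry. The nodes $a'(i,j)$ and $a''(i,j,j')$ are feet of the spider $W_i$ rooted at $a(i)$, so their children (and hence the corresponding $2\alpha^5$ leaves for each $i$) lie inside $T_{a(i)}$. The nodes $b''(i,j,j')$ are feet of spiders $Z'_{i,j}$ whose roots $b'(i,j)$ are feet of $Z_i$ rooted at $b(i)$, so their children lie inside $T_{b(i)}$ and account for the remaining $\alpha^5$ leaves of each side. Summing the contributions indexed by $i=1$ (resp.\ $i=2$) yields $2\alpha^5 + \alpha^5 = 3\alpha^5$ terminals in $T_{a(1)}\cup T_{b(1)}$ (resp.\ $T_{a(2)}\cup T_{b(2)}$), proving both items.

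There is no real obstacle here beyond bookkeeping: one just has to be careful that the index range for $j$ in $a''(i,j,j')$ and $b''(i,j,j')$ depends on $i$ (namely $j\in[\alpha^i]$), which comes from the fact that the spider $Z_i$ has $\alpha^i$ feet. Once the index ranges are fixed, the six products are identical, the total is $6\alpha^5$, and the partition between the $i=1$ and $i=2$ sides is exact.
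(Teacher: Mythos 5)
Your proof is correct and takes essentially the same approach as the paper: a direct enumeration of the leaf-bearing nodes ($a'(i,j)$, $a''(i,j,j')$, $b''(i,j,j')$) together with their multiplicities, yielding six families of $\alpha^5$ leaves each. You are somewhat more explicit than the paper in spelling out the partition into $T_{a(i)}\cup T_{b(i)}$ for parts (i) and (ii), but the underlying bookkeeping is identical.
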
  

\begin{appendixproof}{sec:newlp}{\subsection{Analysis of Proposition \ref{prop:56intgapanalysis}}}
Each node $a'(1,j)$ has $\alpha^4$ children, and there are $\alpha$ such nodes. 
Similarly, each node $a'(2,j)$ has $\alpha^3$ children. There are $\alpha^2$ such nodes. 
This accounts for $2 \alpha^5$ terminals. 

For $i \in \set{1,2}$, each node $a''(i,j,j')$ has $\alpha^{3-i}$ children. 
There are $\alpha^{i+2}$ such nodes. 
This accounts for another $2 \alpha^5$ terminals. 
Finally, there are $\alpha^{3-i}$ children of each $b''(i,j,j')$, and there are $\alpha^{2+i}$ such nodes.

\end{appendixproof}

\vspace{0.1in} 

\noindent{\bf Fractional Solution:} 
Our construction guarantees that any path from root to leaf contains $2$ special vertices: For a leaf child of $a'(i,j)$, its path towards the root must contain $a'(i,j)$ and $a(i)$. 
For a leaf child of $a''(i,j,j')$, its path towards the root contains $a''(i,j,j')$ and $a(i)$. 
For a leaf child of $b''(i,j,j')$, the path towards the root contains $b''(i,j,j')$ and $b'(i,j)$.

\begin{lemma}
\label{lem:feas}  
For each special vertex $v$, for each layer $L_j$ below $v$, the set $L_j \cap T_v$ contains at most one special vertex. 
\end{lemma}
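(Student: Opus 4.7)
The plan is to verify the lemma by case analysis on the ``type'' of special vertex $v$, exploiting the fact that the construction places every spider's feet one per consecutive layer. I first catalogue the special vertices: $a(1), a(2)$ at the top; the feet $b'(i,j)$ of the spiders $Z_i$; the feet $b''(i,j,j')$ of the spiders $Z'_{i,j}$; and the feet $a'(i,j)$ and $a''(i,j,j')$ of the spiders $W_i$. For each type I will identify the special vertices in $T_v$ and check that no two of them share a layer.

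The easy cases are the ``leaf-type'' specials $a'(i,j)$, $a''(i,j,j')$, and $b''(i,j,j')$: by construction their only descendants are regular leaves (the terminals), so $T_v$ contains no special vertex other than $v$ itself and the conclusion is vacuous. For a spider-root special $b'(i,j)$, the subtree $T_{b'(i,j)}$ is exactly the spider $Z'_{i,j}$; its special descendants are the feet $b''(i,j,j')$ for $j' \in [\alpha^2]$, which by the definition of $Z'_{i,j}$ occupy $\alpha^2$ distinct consecutive layers, one foot per layer. So at most one special vertex sits in each layer of $T_{b'(i,j)}$.

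The main step is the case $v = a(i)$. Here $T_{a(i)}$ is the spider $W_i$, and its special descendants are precisely the feet of $W_i$, namely the $a'(i,j)$'s and the $a''(i,j,j')$'s. By construction, $a'(i,j)$ sits in the same layer as $b'(i,j)$, and $a''(i,j,j')$ sits in the same layer as $b''(i,j,j')$. Hence pairwise distinctness reduces to checking that all these ``$b$-layers'' are distinct. The $b'(i,j)$-layers form a block immediately below $L_1$ (layers $L_2,\ldots,L_{\alpha+1}$ for $i=1$, and $L_{\alpha+2},\ldots,L_{\alpha^2+\alpha+1}$ for $i=2$), while the $b''(i,j,j')$-layers sit strictly below that block; within each block the explicit formula $L_{1+\alpha+j\alpha^2+j'}$ (and the analogous one for $i=2$) is injective in $(j,j')$ because $j' \in [\alpha^2]$.

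The only real obstacle is bookkeeping: one has to be careful to use the correct layer indices from the construction and to confirm that the two ``strata'' of feet of each $W_i$ (the ones aligned with $Z_i$-feet and the ones aligned with $Z'_{i,j}$-feet) do not collide with each other. Both are ruled out by the inequality $1+\alpha+j\alpha^2+j' \geq 2+\alpha+\alpha^2 > \alpha+1$, so the two strata lie in disjoint layer ranges. Combining all cases yields the lemma.
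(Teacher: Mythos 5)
Your proof is correct, but it takes a genuinely different tack from the paper's. The paper argues globally: it observes that whenever a layer contains two special vertices, the pair is of the form $\{a'(i,j),b'(i',j')\}$, $\{a''(i,j,j'),b''(i',\cdot,\cdot)\}$, or $\{a(1),a(2)\}$, so one lies under $a(\cdot)$ and the other under $b(\cdot)$ (or both are children of $s$); hence their least common ancestor is always the root $s$, and no non-root subtree $T_v$ can contain both. That single observation dispatches the lemma in three lines. You instead argue locally: you classify the special vertex $v$ itself, note that for the ``leaf-type'' specials the subtree contains no further specials, and for $b'(i,j)$ and $a(i)$ you enumerate the special descendants (the feet of $Z'_{i,j}$ or $W_i$) and verify directly from the layer formulas that they occupy pairwise distinct layers, checking that the two strata of $W_i$'s feet lie in disjoint layer ranges. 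Your version trades the paper's slick LCA observation for an explicit bookkeeping argument; it is somewhat longer and more sensitive to the exact index arithmetic, but it has the advantage of not requiring one to first establish which pairs of specials co-occur in a layer, and it would survive modifications to the construction that break the clean $a$-side/$b$-side symmetry. One small stylistic nit: you say $T_{b'(i,j)}$ ``is exactly the spider $Z'_{i,j}$'' and $T_{a(i)}$ ``is the spider $W_i$''; strictly, these subtrees also contain the regular leaf children hanging off the feet, but since those are regular this does not affect the argument.
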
 

\begin{appendixproof}{sec:newlp}{\subsection{Proof of Lemma \ref{lem:feas}}}
Each layer contains two special vertices of the form $\set{a'(i,j), b'(i',j')}$ or $\set{a''(i,j), b''(i',j')}$.
In any case, the least common ancestor of such two special vertices in the same layer is always the root $s$ (since one vertex is in $T_{a(i)}$, while the other is in $T_{b(i)}$)
This implies that, for any non-root vertex $v$, the set $L_j \cap T_v$ can contain at most one special vertex. 
\end{appendixproof}

Notice that, there are at most two special vertices per layer.  
We define the LP solution $x$, with $x_v =1/2$ for all special vertices $v$ and $x_v = 0$ for all other vertices. 
It is easy to verify that this is a feasible solution.  

We now check the constraint at $v$ and layer $L_j$ below $v$: If the sum $\sum_{u \in P_v} x_u = 0$, then the constraint is immediately satisfied, because $\sum_{ u \in L_j \cap T_v} x_u \leq 1$. 
If $\sum_{u \in P_v} x_u = 1/2$, let $v'$ be the special vertex ancestor of $v$. 
Lemma~\ref{lem:feas} guarantees that $\sum_{u \in L_j \cap T_v} x_u \leq \sum_{u \in L_j \cap T_{v'}} x_u \leq 1/2$, and therefore the constraint at $v$ and $L_j$ is satisfied. 
Finally, if $\sum_{u \in P_v} x_u = 1$, there can be no special vertex below $v$ and therefore $\sum_{u \in L_j \cap T_v} x_u = 0$.  

\vspace{0.1in} 

\noindent {\bf Integral Solution:} 
We argue that any integral solution cannot save more than $(1+5/\alpha) 5 \alpha^5$ terminals. The following lemma is the key to our analysis. 

\begin{lemma} 
Any integral solution $\uset: \uset \cap \set{a(1), b(1)} =\emptyset$ saves at most $(1+5/\alpha) 5 \alpha^5$ terminals.
\label{intgap56:int1}
\end{lemma}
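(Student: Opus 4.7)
Since $u_1 \in L_1 = \set{a(1), a(2), b(1), b(2)}$ and $u_1 \notin \set{a(1), b(1)}$, we have $u_1 \in \set{a(2), b(2)}$, so both $a(1)$ and $b(1)$ burn, and every saved leaf in $T_{a(1)} \cup T_{b(1)}$ must have some $u_t$ with $t \geq 2$ as an ancestor in side~1. Let $S$ be the set of saved leaves. Trivially $|S \cap (T_{a(2)} \cup T_{b(2)})| \leq 3\alpha^5$ (Proposition~\ref{prop:56intgapanalysis}), so it is enough to prove $|S \cap (T_{a(1)} \cup T_{b(1)})| \leq 2\alpha^5 + O(\alpha^4)$.

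For the side-1 bound, I may assume without loss of generality that $\uset$ is non-nested (no $u_t$ is a descendant of any $u_{t'}$ with $t'<t$), in which case the subtrees $T_{u_t}$ are disjoint and $|S| = |T_{u_1}\cap\xset| + \sum_{t\geq 2} |T_{u_t}\cap\xset|$. Each $u_t$ in $T_{a(1)}\cup T_{b(1)}$ for $t\geq 2$ lies on a unique leg of one of the spiders $W_1$, $Z_1$, or $Z'_{1,j}$, and so $|T_{u_t}\cap\xset|$ is bounded by the number of leaves below the foot of that leg. The next step is to partition $L_2, L_3, \ldots$ into four ranges according to the feet layers given in the construction:
\begin{itemize}
\item Range~1 ($L_2, \ldots, L_{\alpha+1}$, $\alpha$ layers) contains the feet $a'(1,\cdot)$ and $b'(1,\cdot)$, each with a subtree of $\alpha^4$ leaves.
\item Range~2 ($L_{\alpha+2}, \ldots, L_{\alpha^2+\alpha+1}$, $\alpha^2$ layers) contains $a'(2,\cdot)$ and $b'(2,\cdot)$; the only side-1 vertices here are on legs of $W_1$ and $Z'_{1,\cdot}$ heading to feet of type $a''(1,\cdot,\cdot)$, $b''(1,\cdot,\cdot)$.
\item Range~3 ($\alpha^3$ layers) contains $a''(1,\cdot,\cdot)$ and $b''(1,\cdot,\cdot)$, each with a subtree of $\alpha^2$ leaves.
\item Range~4 ($\alpha^4$ layers) contains $a''(2,\cdot,\cdot)$ and $b''(2,\cdot,\cdot)$; all side-1 spider structure has terminated by the start of Range~4, so the only side-1 vertices here are (at most) the individual leaves below $a''(1,\alpha,\alpha^2)$ and $b''(1,\alpha,\alpha^2)$.
\end{itemize}
Inspecting the four cases, any side-1 pick $u_t$ satisfies $|T_{u_t}\cap\xset| \leq \alpha^4$ in Range~1, $|T_{u_t}\cap\xset| \leq \alpha^2$ in Range~2 or Range~3, and $|T_{u_t}\cap\xset| \leq 1$ in Range~4. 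Summing over at most $\alpha$ Range-1, $\alpha^2+\alpha^3$ Range-2/3, and (negligibly) $1$ Range-4 side-1 picks gives
\[
|S \cap (T_{a(1)} \cup T_{b(1)})| \;\leq\; \alpha\cdot\alpha^4 \,+\, (\alpha^2+\alpha^3)\cdot\alpha^2 \,+\, 1 \;=\; 2\alpha^5 + \alpha^4 + 1.
\]
Combining with the trivial side-2 bound yields $|S| \leq 5\alpha^5 + \alpha^4 + 1 \leq (1+5/\alpha)\cdot 5\alpha^5$ for $\alpha \geq 5$, as claimed.

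The main technical obstacle is justifying the per-range bound on $|T_{u_t}\cap\xset|$. One must carefully track, for each range, which legs of $W_1$, $Z_1$, and $Z'_{1,j}$ actually have vertices at layers of that range and identify the feet of those legs; in particular, one uses that Range-1 legs of $W_1, Z_1$ all lead to a $'$-type feet (whence the $\alpha^4$ bound), that Range-2 and Range-3 layers on side~1 only carry legs leading to $''$-type feet (whence $\alpha^2$), and that no interior spider vertex of side~1 exists beyond layer $L_{\alpha^3+\alpha^2+\alpha+1}$. Once these facts are verified from the explicit layer indices in the construction, the summation is routine.
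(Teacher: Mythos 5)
Your proof is correct, and it takes a genuinely different route from the paper's. The paper works by \emph{lower-bounding the number of burned side-1 terminals}: it exhibits $2\alpha$ vertex-disjoint paths from $\{a(1),b(1)\}$ to the feet $a'(1,\cdot), b'(1,\cdot)$, all contained in the first $\alpha+1$ layers, so any strategy avoiding $a(1),b(1)$ leaves at least $\alpha-1$ of these feet burning; it then splits into cases on how many $b'$-feet burn, and in the harder case repeats the disjoint-path argument on a second collection $R$ of $a''/b''$-feet to conclude that at least $\alpha^5 - 5\alpha^4$ side-1 terminals burn. You instead \emph{upper-bound the number of saved side-1 terminals directly}, summing $|T_{u_t}\cap\xset|$ over the non-nested picks $u_t$ and decomposing the layers into four ranges so that the per-pick subtree size is bounded by $\alpha^4$, $\alpha^2$, $\alpha^2$, and $1$ respectively; the heavy lifting is the verification that every side-1 vertex at a Range-$r$ layer indeed has at most that many leaves below it, which you correctly reduce to reading off the layer indices of the spider feet. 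Your route avoids the paper's case split and its somewhat delicate double application of the disjoint-path argument, in exchange for a careful enumeration of which spider legs pass through which layers; it also happens to give a tighter error term ($2\alpha^5 + \alpha^4 + 1$ rather than $2\alpha^5 + 5\alpha^4$ for the side-1 count), though both are amply covered by the stated $(1+5/\alpha)5\alpha^5$ bound. One small point worth stating explicitly, which you gesture at but do not spell out: a vertex in Range~1 on a $Z_1$-leg has, at its foot $b'(1,j)$, the entire spider $Z'_{1,j}$ hanging below, so the $\alpha^4$ figure there is $\alpha^2\cdot\alpha^2$ (feet of $Z'_{1,j}$ times children per foot), not just the children of a single foot; your Range-1 bound is correct, but this is the one place where the phrase ``leaves below the foot of that leg'' hides a further level of the construction.
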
 
\begin{appendixproof}{sec:newlp}{\subsection{Proof of Lemma \ref{intgap56:int1}}}
Consider the set $Q= \set{a'(1,j)}_{j=1}^{\alpha} \cup \set{b'(1,j)}_{j=1}^{\alpha}$, and a collection of paths from $\set{a(1),b(1)}$ to vertices in set $Q$.  
These paths are contained in the layers $L_1, \ldots, L_{\alpha+1}$, so the strategy $\uset$ induces a cut of size at most $\alpha+1$ on them. 
This implies that at most $\alpha+1$ vertices (out of $2\alpha$ vertices in $Q$) can be saved by $\uset$. 
Let $\tilde Q \subseteq Q$ denote the set of vertices that have not been saved by $\uset$. We remark that $|\tilde Q| \geq \alpha - 1$.
We write $\tilde Q = \tilde Q_a \cup \tilde Q_b$ where $\tilde Q_a$ contains the set of vertices $a'(1,j)$ that are not saved, and $\tilde Q_b = \tilde Q \setminus \tilde Q_a$.  
For each vertex in $\tilde Q_a$, at least $\alpha^4 -1$ of its children cannot be saved, so we have at least $(\alpha^4 -1) |\tilde Q_a| \geq \alpha^4 |\tilde Q_a| - \alpha$ unsaved terminals that are descendants of $\tilde Q_{a}$.
If $|\tilde Q_b| \leq 3$, we are immediately done: We have $|\tilde Q_a| \geq \alpha-4$, so $(\alpha^4-1)(\alpha-4) \geq \alpha^5 - 5\alpha^4$ unsaved terminals.  
 
Consider the set 
$$R = \paren{\bigcup_{j \in [\alpha],j' \in [\alpha^2]} \set{a''(1,j,j')}} \cup \paren{\bigcup_{j: b'(1,j) \in \tilde Q_b} \bigcup_{j' \in [\alpha^2]} \set{b''(1,j,j')}}$$  
This set satisfies $|R| = \alpha^3 + |\tilde Q_b| \alpha^2$, and the paths connecting vertices in $R$ to $\tilde Q_b \cup \set{a(1)}$ lie in layers $L_1,\ldots, L_{\alpha^3 + \alpha^2 + \alpha+ 1}$.
So the strategy $\uset$ induced on these paths disconnects at most $\alpha^3+ \alpha^2 + \alpha+1$ vertices.
Let $\tilde R \subseteq R$ contain the vertices in $R$ that are not saved by $\uset$, so we have $|\tilde R| \geq (|\tilde Q_b|-1)  \alpha^2 - \alpha - 1$, which is at least $(|\tilde Q_b| - 2) \alpha^2$.  
Each vertex in  $\tilde R$ has $\alpha^2$ children. We will have $(\alpha^2-1)$ unsaved terminals for each such vertex, resulting in a total of at least $(\alpha^2 -1) (|\tilde Q_b|-2)\alpha^2 \geq \alpha^4 |\tilde Q_b| - 4 \alpha^4$ terminals that are descendants of $b(1)$.   

In total, by summing the two cases, at least $(\alpha^4 |\tilde Q_a| - \alpha) + (\alpha^4 |\tilde Q_b| - 4 \alpha^4) \geq (|\tilde Q_a| + |\tilde Q_b|)\alpha^4 - 5\alpha^4 \geq \alpha^5 - 5\alpha^4$ terminals are not saved by $\uset$, thus concluding the proof.  
\end{appendixproof} 

\begin{lemma} 
Any integral solution $\uset: \uset \cap \set{a(2), b(2)} =\emptyset$ saves at most $(1+5/\alpha) 5 \alpha^5$ terminals.
\label{intgap56:int2}
\end{lemma}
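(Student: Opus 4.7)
The plan is to mirror the argument of Lemma~\ref{intgap56:int1} on the subtree rooted at $\set{a(2), b(2)}$, with all parameter counts shifted by one index. First, I would define the ``first-level'' set
\[ Q = \bigcup_{j \in [\alpha^2]} \set{a'(2,j), b'(2,j)}, \]
which has $2\alpha^2$ vertices. The $2\alpha^2$ root-to-$Q$ paths through the spiders $W_2$ and $Z_2$ are vertex-disjoint (except at the roots) and occupy layers $L_2, \ldots, L_{\alpha^2+\alpha+1}$. Since $\uset$ picks at most one vertex per layer and avoids $\set{a(2), b(2)}$ by hypothesis, it severs at most $\alpha^2 + \alpha$ of them, leaving an unsaved subset $\tilde Q \subseteq Q$ with $|\tilde Q| \geq \alpha^2 - \alpha$. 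Split $\tilde Q = \tilde Q_a \cup \tilde Q_b$ along the two branches.

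Each $a'(2,j) \in \tilde Q_a$ has $\alpha^3$ leaf children in a common layer, so at least $\alpha^3 - 1$ of them burn, giving $\geq (\alpha^3-1)|\tilde Q_a|$ burning terminals on the $a$-side. If $|\tilde Q_b| \leq \alpha + 3$, then $|\tilde Q_a| \geq \alpha^2 - 2\alpha - 4$ and this contribution alone is already $\geq \alpha^5 - O(\alpha^4)$. Otherwise, apply the second-level cut to
\[ R = \bigcup_{j, j' \in [\alpha^2]} \set{a''(2,j,j')} \;\cup\; \bigcup_{j:\, b'(2,j) \in \tilde Q_b,\; j' \in [\alpha^2]} \set{b''(2,j,j')}, \]
of size $\alpha^4 + |\tilde Q_b| \alpha^2$. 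The vertex-disjoint paths from $\set{a(2)} \cup \tilde Q_b$ to $R$ (through the spiders $W_2$ and $\set{Z'_{2,j}}$) occupy layers $L_2, \ldots, L_{1+\alpha+\alpha^2+\alpha^3+\alpha^4}$, so $\uset$ severs at most $\alpha+\alpha^2+\alpha^3+\alpha^4$ of them, leaving $|\tilde R| \geq (|\tilde Q_b|-\alpha-2)\alpha^2$. Each $v \in \tilde R$ has $\alpha$ leaf children in a single layer, contributing $\geq \alpha - 1$ additional burning terminals, disjoint from those from $\tilde Q_a$.

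Summing the two levels gives at least $(\alpha^3-1)|\tilde Q_a| + (\alpha-1)(|\tilde Q_b|-\alpha-2)\alpha^2$ burning terminals. Since the coefficient of $|\tilde Q_b|$ (namely $\alpha^3 - \alpha^2$) is smaller than that of $|\tilde Q_a|$, the minimum over $|\tilde Q_a|+|\tilde Q_b| \geq \alpha^2 - \alpha - 1$ is attained when all of the mass sits in $|\tilde Q_b|$, and a short calculation yields a lower bound of $\alpha^5 - O(\alpha^4)$ burning terminals. This corresponds to at most $(1+5/\alpha) \cdot 5\alpha^5$ saved terminals, as desired. The main obstacle is the built-in asymmetry of the gadget: the outer spider $Z_2$ has $\alpha^2$ feet rather than $\alpha$, and each $a''(2,\cdot,\cdot)$ or $b''(2,\cdot,\cdot)$ has only $\alpha$ leaf children instead of $\alpha^2$, so the coefficients on $|\tilde Q_a|$ and $|\tilde Q_b|$ in the combined counting differ by an extra factor of $\alpha$. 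One must therefore choose the case-split threshold for $|\tilde Q_b|$ carefully and verify the minimization explicitly so that the weaker coefficient $(\alpha^3 - \alpha^2)$ still clears the target bound.
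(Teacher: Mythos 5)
The paper does not actually spell out a proof of this lemma — it proves only Lemma~\ref{intgap56:int1} in the appendix and lets the other case go by symmetry — so your task was to reconstruct the analogous argument, and your reconstruction is correct. You correctly transport the two-level cutting argument from the $\set{a(1),b(1)}$ side to the $\set{a(2),b(2)}$ side, identify the genuine asymmetry (the outer spider now has $\alpha^2$ feet, the second-level vertices each have only $\alpha$ leaf children), adjust the case-split threshold accordingly, and verify that the weaker $\alpha^3-\alpha^2$ coefficient on $|\tilde Q_b|$ still gives $\alpha^5 - O(\alpha^4)$ burning terminals; combined with $|\xset| = 6\alpha^5$ this indeed yields at most $(1+5/\alpha)\,5\alpha^5$ saved, matching the bound in Lemma~\ref{intgap56:int1}.
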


Since nodes $a(1)$, $a(2)$, $b(1)$, $b(2)$ are in the first layer, it is only
possible to save one of them. Therefore, either Lemma \ref{intgap56:int1} or
Lemma \ref{intgap56:int2} apply, which concludes the analysis.

\section{Conclusion and Open Problems}

In this paper, we settled the integrality gap question for the standard LP relaxation. 
Our results ruled out the hope to use the canonical LP to obtain better approximation results.
While a recent paper settled the approximability status of the problem~\cite{2016arXiv160100271A}, the question whether an improvement over $(1-1/e)$ can be done via LP relaxation is of independent interest.   
We provide some evidences that Hartke's LP is a promising candidate for doing so. 
Another interesting question is to find a more general graph class that admits a constant approximation algorithm. 
We believe that this is possible for bounded treewidth graphs.

\bibliographystyle{splncs03}\bibliography{main-firefighters} 

\appendix 

\clearpage

\section{Omitted Figures}

\begin{figure}[b]
\centering
\includegraphics{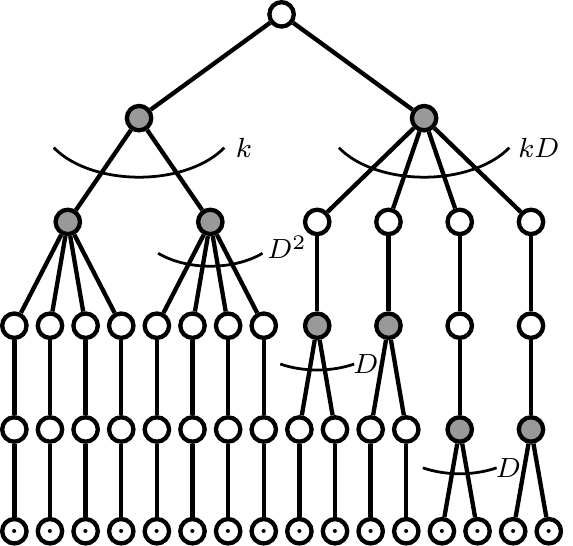}
\caption{Simplified example of the instance used to achieve integrality gap of $1-1/e$, when $k=2$ and $D=2$. The labels in the figure indicate, in general, the number of edges in that location, in terms of $k$ and $D$. Special vertices are colored gray.}
\label{11eintgap:gadget}
\end{figure}

\begin{figure}[b]
\centering
\includegraphics{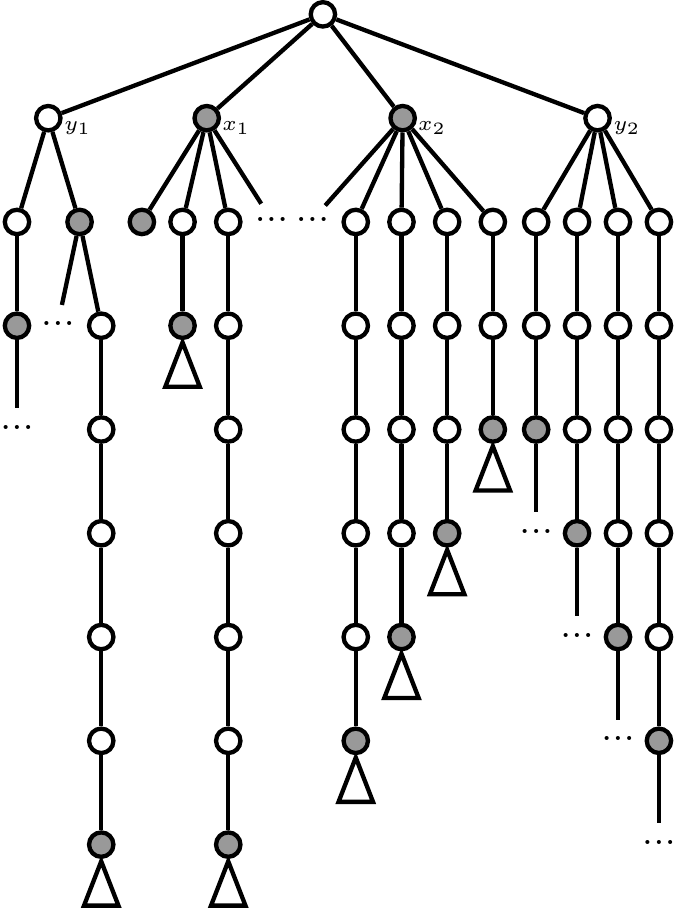}
\caption{Simplified example of the instance with low integrality gap for $1/2$-integral solutions. Special vertices are colored gray.}
\label{12intgap:instance}
\end{figure}

\end{document}